\documentclass[format=acmsmall, review, screen, natbib=false]{acmart}
\usepackage{acm-arxiv}
\usepackage{setspace}
\usepackage{booktabs}
\usepackage{mathtools}
\usepackage{mathrsfs}
\usepackage{dsfont}
\usepackage[export]{adjustbox}
\usepackage{tikz}
\usetikzlibrary{calc, fit, backgrounds, positioning, shapes.geometric, arrows.meta, shadows}
\usepackage{tikz-network}
\usepackage{algorithm}
\usepackage{algpseudocode}
\usepackage{float}
\usepackage[capitalise]{cleveref}

\usepackage[style=authoryear, natbib=true, backend=biber, maxbibnames=99, maxcitenames=2, mincitenames=1, uniquelist=false]{biblatex}
\let\cite\textcite

\AtBeginDocument{%
  \hypersetup{colorlinks=true, allcolors=black}%
}

\newcommand{\defeq}{\vcentcolon=}

\newcommand{\E}{\operatorname{\mathbf{E}}}

\newcommand{\1}{\mathbf 1}
\newcommand{\Pp}{\mathbb P}
\newcommand{\toP}{\xrightarrow{p}}
\DeclareMathOperator{\Cov}{Cov}
\DeclareMathOperator{\Var}{Var}
\theoremstyle{acmplain}
\newtheorem{assumption}{Assumption}

\newtheorem{fact}{Fact}

\newtheorem{lemma}{Lemma}
\newtheorem{theorem}{Theorem}
\newtheorem{proposition}{Proposition}
\newtheorem{definition}{Definition}

\newcommand{\cA}{\mathcal{A}}
\newcommand{\cB}{\mathcal{B}}
\newcommand{\cM}{\mathcal{M}}

\newcommand{\NoisyStats}{\texttt{DPSuffStats}}

\usepackage{chngcntr} 

\usepackage{subcaption}

\title{Differential Privacy for Network Connectedness Indices}
\author{Tom A.\ Rutter, Yuxin Liu, M. Amin Rahimian}

\makeatletter
\newcommand{\DisableAddContentsline}{\let\addcontentsline\@gobblethree}
\makeatother

\ccsdesc[500]{Security and privacy~Privacy protections}
\ccsdesc[300]{Theory of computation~Differential privacy}

\begin{document}

\begingroup
\DisableAddContentsline
\begin{abstract}

\noindent Researchers increasingly use data on social and economic networks to study a range of social science questions, but releasing statistics derived from networks can raise significant privacy concerns. We show how to release network connectedness indices that quantify assortative mixing across node attributes under edge-adjacent differential privacy. Standard privacy techniques perform poorly in this setting both because connectedness indices have high global sensitivity and because a single node's attribute can potentially be an input to connectedness in thousands of cells, leading to poor composition. Our method, which is straightforward to apply, first adds noise to node attributes, then analytically debiases downstream statistics, and finally applies a second layer of noise to protect the presence or absence of individual edges. We prove consistency and asymptotic normality of our estimators for both discrete and continuous labels and show our method works well in simulations and on real networks with as few as 200 nodes collected by social scientists.

\vspace{-1pt}
\end{abstract}
\endgroup

\keywords{differential privacy, network statistics, social networks, connectedness indices, privacy--utility tradeoff}

\maketitle

\begingroup
\small
\setlength{\parskip}{0pt}

\makeatletter
\let\PFNS@origcontentsline\contentsline
\def\contentsline#1#2#3#4{%
  \edef\PFNS@title{\detokenize{#2}}%
  \ifnum\pdfstrcmp{\PFNS@title}{Abstract}=0\relax
  \else\ifnum\pdfstrcmp{\PFNS@title}{Contents}=0\relax
  \else
    \PFNS@origcontentsline{#1}{#2}{#3}{#4}%
  \fi\fi
}
\makeatother
\enlargethispage{2\baselineskip}
\tableofcontents
\endgroup
\clearpage

\section{Introduction}

    Social scientists interested in a range of social phenomena are increasingly using data on the \textit{networks} in which individuals, firms, financial institutions and nations are embedded. (See, for example, \cite{easley2010networks}, \cite{jackson_human_2019}, and \cite{goyal2023networks} for overviews.) The use of network data has provided valuable insights on the spread of contagious diseases, the role of peers and mentors in shaping people's life outcomes, the resilience of global supply chains, and the nature of systemic risk in the financial system. The use of such data, however, comes with concerns about the privacy protections provided to individuals and organizations in the dataset. The naive publication of aggregated statistics from these datasets entails risks that information may be leaked about the presence (or lack thereof) of particular connections in the dataset, or even about the individual characteristics of people or businesses represented in the data. 

    In response to broader concerns about guaranteeing privacy for individuals and organizations represented in datasets used by social scientists, the literature on differential privacy, starting from \cite{dwork2006calibrating}, has developed tools for allowing particular statistics to be released from datasets while simultaneously providing a mathematical guarantee that, without making any assumptions on the external knowledge someone might have, it is impossible for anyone to learn too much about a given individual or organization in the data just from seeing the aggregated statistic. 
    
    This literature was inspired by the failure of traditional data protection techniques, such as the removal of personally identifiable information from datasets, to provide adequate protection to individuals whose records are contained in the data. For example, \cite{narayanan2008robust} were able to cross-reference anonymized Netflix movie ratings with public IMDB profiles to identify specific users in the Netflix database, despite the fact that the Netflix movie ratings dataset did not in itself contain any personally identifiable information. Even aggregated datasets can reveal surprising amounts of information about particular individuals in the dataset if only naive data protection techniques, such as masking small cells or swapping some individuals between cells, are used. For example, \cite{dick2023confidence} provide reconstruction attacks on the 2010 US Census that allow them to reliably learn the characteristics of many \textit{individuals} from the aggregated data alone. The reason their attack works is that the Census released many tables containing aggregate statistics, so differencing the statistics in particular tables (which aggregate over different, but not mutually exclusive, sets of individuals) allows them to hone in on the data for particular individuals. The increasing availability of compute power and access to artificial intelligence will dramatically reduce the cost of carrying out future attacks on public data as well as the mathematical expertise required to execute them successfully. 
    
    The key idea in the differential privacy literature, which provides a mathematical guarantee that no possible attack can reveal much information about any one data point in the dataset, is to infuse random statistical noise into the aggregated statistic so that an outsider is not able to separate the influence of a given individual on a statistic versus the influence of the random noise. Typically this noise is applied on the basis of the \textit{sensitivity} of the statistic to a change in any particular observation in the dataset. In the case of the Census reconstruction attack, the statistical noise introduced completely defeats the logic that differencing different releases allows an attacker to get at the information of a particular individual in the dataset. 
    
    For statistics constructed from networks, it can often be the case that the sensitivity of the statistic covers the entirety of the parameter space, limiting the fidelity of the privatized statistic to its true value while maintaining a particular privacy guarantee. Indeed, we show that this is the case for the class of network statistics we consider in this paper, so an alternative approach is required.

    In this paper, we provide an approach for releasing a particular class of network statistics with a formal privacy guarantee that maintains high accuracy even for relatively small realistic networks. The statistics we consider are \textit{network connectedness indices}, which describe the proportion of connections from nodes with a given characteristic that extend to nodes with either the same characteristic or a different characteristic. For example, what proportion of the friends of white individuals are also white? What proportion of the suppliers of US firms are based in China? What proportion of bank loans are directed to hedge funds?  

    The application we focus on, which motivated this paper, concerns the proportion of friendships that cut across the income distribution. In a widely cited pair of studies, \cite{chettyetal2022I, chettyetal2022II} publicly released a range of social capital indices constructed from Facebook and Instagram data for counties, zip codes, universities, and schools in the US. In particular, \cite{chettyetal2022I} released a measure of the fraction of high-socioeconomic status (SES) friends among low-SES individuals in an area, which they single out as the strongest predictor of economic mobility in the US. (They use data from \cite{chetty2026opportunity} and define the economic mobility of an area as the average adult income rank of a child born to parents at the 25th percentile of the national income distribution.) The statistics they publicly released have been widely downloaded and used in follow-on research. Although the authors of these studies used tools from the differential privacy literature to reduce the risk of privacy loss, they did not provide a formal privacy guarantee for the connectedness indices they released. Subsequent work by \cite{harris2025social} extended these measures to the United Kingdom, and while they also infused noise into the data before making it public, they similarly did not provide a formal privacy guarantee.

    In this paper, we provide a simple-to-use method for publicly releasing network connectedness indices, such as economic connectedness, with a formal privacy guarantee. In practice, our methods allow for releasing connectedness indices constructed from realistic networks of at least 200 individuals with low levels of statistical noise. 

    The key logic behind our approach is to first inject random noise into the characteristics of nodes in the data, such that downstream statistics are calculated from a database of node characteristics that already satisfies a privacy guarantee. This is crucial in allowing us to sidestep consideration of the sensitivity of our statistics of interest to a change in node characteristics in ``worst-case'' networks, such as star networks, which do not occur in practice. It also allows us to sidestep the issue that a change in the attribute of a node can potentially affect connectedness in a large number of cells, leading to poor composition. Instead, by privatizing node characteristics and then assuming that all downstream statistics are computed from these sanitized node-level data, we only have to consider the sensitivity of our statistics to changes in the edge set, for which we show the sensitivity is much lower. Additionally, a particular edge (or lack thereof) can only change the value of connectedness in at most two cells, leading to much better composition. 

    Injecting random noise into the node characteristics will often bias the downstream statistics, so we provide formulae for analytic corrections that can also be applied downstream to debias these statistics while maintaining their privacy guarantees by the postprocessing theorem \citep[Proposition 2.1]{dwork2014algorithmic}. 

    We note, crucially, that the database of individual characteristics is usually not of interest and so does not have to be (and usually should not be) released. Although not releasing this intermediate dataset does not reduce the \textit{mathematical} privacy loss from data release, since we make no assumptions about the background knowledge of outsiders, in practice it provides an additional layer of privacy protection, since realistic outsiders do not have the extreme level of information that usually constitutes the worst case for a researcher trying to provide a privacy guarantee \citep{bassily_coupled-worlds_2013}.

    Our paper proceeds as follows. Section 2 provides an overview of relevant literature on differential privacy and the network statistics we consider. Section 3 outlines the definitions of privacy we use and network connectedness indices. Section 4 presents our method for binary labels. Section 5 extends our framework to continuous labels. Section 6 presents simulations that evaluate the accuracy of our methods. Section 7 demonstrates the empirical performance of our methods on real-world labeled networks, and section 8 concludes. 


    \section{Literature Review}

    Our paper builds on the vast literature on differential privacy that began with \cite{dwork2006calibrating}. Standard differential privacy mechanisms tend to make use of the independence between records, but with network data records are clearly not independent \citep{kifer_no_2011}. This dependence is not just a theoretical curiosity, but is a vulnerability that can be exploited in practice, as demonstrated by \cite{liu2016dependence} who infer sensitive location data for users from statistics that have a differential privacy guarantee under independence. The above work highlights the need for privacy applications that explicitly account for statistical dependence in the data, as our method does for network connectedness indices. 


    Our main contribution is to the practice of economic data release, and so our work builds on \cite{chettyetal2022I, chettyetal2022II} who developed the notion of economic connectedness and released these statistics for granular US geographies, schools, and universities. The central innovation in those papers is showing that measures of network connectedness constructed by combining information about who is connected to whom with information about the characteristics of the connected individuals can significantly outperform measures based only on the edge structure of the network (such as the clustering coefficient) or measures based only on node characteristics that ignore the connections between nodes (such as variables related to a neighborhood's income composition) in explaining variation in economic outcomes such as intergenerational mobility. This insight that both the structure of the connections and the characteristics of connected individuals matter directly informs the notion of edge-adjacent privacy we use from \cite{blocki2013differentially}, and allows us to provide a guarantee that we protect both the presence (or absence) of edges and the characteristics of nodes. 

    Our paper also relates to a literature on differential privacy for dependent data. Dependent differential privacy and its generalizations address pairwise and arbitrary correlations, though they often rely on dependence parameters that are difficult to estimate without explicit probabilistic models \citep{liu2016dependence, zhao2017dependent}. A more flexible alternative is the Pufferfish framework \citep{kifer2014pufferfish}, which allows for customized privacy definitions—such as Blowfish and Bayesian DP—by explicitly modeling secrets, discriminative pairs, and data generation distributions. In the context of graph data, Pufferfish implementations like the Wasserstein and Markov quilt mechanisms have demonstrated superior utility over group DP by leveraging Bayesian networks to model dependencies \citep{song2017pufferfish}. Complementary approaches include inferential privacy, which uses Markov chains to quantify correlation \citep{ghosh2016inferential}, and zero-knowledge privacy, which provides stronger guarantees for specific graph properties like connectivity \citep{gehrke2011towards}. 

    Especially in graph settings, a common approach is to project onto a graph with a low maximum degree, to decrease the sensitivity of statistics \citep{kasiviswanathan2013analyzing}. In many settings this technique works very well, but for the applications we are interested in, degree heterogeneity is an important feature of the data that can be lost when projecting onto a graph with a low maximum degree. (For example, \cite{harris2025social} show that degree variation by SES in the UK is an important feature of networks when considering connectedness indices---imposing a low degree bound in this particular setting could dramatically change estimates of connectedness.) See \cite{hehir2025interpreting} for a review of various notions of differential privacy for networks and their interpretation. 
\section{Privacy and Connectedness Indices}

    We follow \cite{dwork2006calibrating} and the differential privacy literature in adopting a definition of privacy that requires the probability that any statistical output being produced from two close datasets to be similar. 

    \begin{definition}[$\varepsilon$-differential privacy]\onehalfspacing
        A randomized mechanism $\mathcal{M}$ is $\varepsilon$-differentially private if for all adjacent datasets $D$ and $D'$ and all possible sets $\mathcal{S}$ of outputs of the mechanism we have: 
        \begin{align*}
            \left| \, \ln \left(
            \frac{\mathbb{P}(\mathcal{M}(D) \in \mathcal{S})}{\mathbb{P}(\mathcal{M}(D') \in \mathcal{S})}
            \right) \, \right| \leq \varepsilon 
        \end{align*}
    \end{definition}
    An intuitive interpretation of this definition is that an adversary with unrestricted information about the rest of the dataset cannot become much more confident that the true dataset is $D$ as opposed to $D'$, or vice versa, after observing the output of $\mathcal{M}$. 

    Our goal in this paper is to construct a mechanism that outputs estimates of connectedness indices that satisfy $\varepsilon$-differential privacy. To fix notation, we first define a labeled network. 

\begin{definition}[Labeled Network]\onehalfspacing
A labeled network is a triple $(\mathcal{V}, \mathcal{E}, \mathbf{L})$, where $\mathcal{V}$ is the set of vertices, $\mathcal{E}$ is the set of edges, and 
\[
\mathbf{L}= (l_i)_{i\in\mathcal V}, \qquad l_i\in\{a,b\},
\]
is the node label vector that assigns each node a label from the set of labels $\{a,b\}$.
\end{definition}


    Note that this definition is analogous to the definition of a \textit{social network} in \cite{blocki2013differentially}. A pair of possible labels for a node could be, for example, ``high SES'' or ``low SES''.
    For notation we define $e_{ij} = 1$ if $(i,j) \in \mathcal{E}$ (that is, if there is an edge between $i$ and $j$) and $e_{ij} = 0$ otherwise. We denote node $i$'s degree as: 
    \begin{align*}
        d_i = \sum_{j \in \mathcal{V}} e_{ij}
    \end{align*}
    and the neighborhood of node $i$ as the set of nodes $i$ is connected to: 
    \begin{align*}
        N(i) = \{ j \in \mathcal{V} : e_{ij} = 1 \}
    \end{align*}

   \begin{definition}[Cross-Type Connectedness Index]\onehalfspacing
Take a labeled network $(\mathcal{V}, \mathcal{E}, \mathbf{L})$, where
\[
\mathbf{L}= (l_i)_{i\in\mathcal V}, \qquad l_i\in\{a,b\}.
\]
Define the induced partition of nodes by
\[
\mathcal{A} := \{ i \in \mathcal{V} : l_i = a \}, 
\qquad
\mathcal{B} := \{ i \in \mathcal{V} : l_i = b \}.
\]
The \emph{cross-type connectedness index} from $\mathcal{A}$ to $\mathcal{B}$ is defined as
\[
C^{\mathcal{A}\to\mathcal{B}}
\;:=\;
\frac{1}{\#(\mathcal{A})}
\sum_{i \in \mathcal{A}}
\frac{\sum_{j \in \mathcal{B}} e_{ij}}{d_i}.
\]
That is, $C^{\mathcal{A}\to\mathcal{B}}$ is the average fraction of connections from individuals in group $\mathcal{A}$ to individuals in group $\mathcal{B}$.
\end{definition}

    \cite{chettyetal2022I, chettyetal2022II} produce three different types of cross-type connectedness index for US counties, zip codes, schools, and universities in the US: 
    \begin{enumerate}
        \item economic connectedness (the average fraction of high-SES friends among low-SES individuals).
        \item language connectedness (the average fraction of friends who use English as their primary language among individuals who do not use English as their primary language).
        \item age connectedness (the average fraction of friends age 35--44 among individuals age 25--34).  
    \end{enumerate}
    \cite{bailey2025cross} produce a cross-type connectedness index measuring the average fraction of female friends among male individuals.
    
    For notation, we let: 
    \begin{align*}
        \rho_i \defeq \frac{\sum_{j \in \mathcal{B}} e_{ij}}{d_i}
    \end{align*}
    from here on, so that a cross-type connectedness index can be written: 
    \begin{align*}
        C^{\mathcal{A}\to\mathcal{B}} = \frac{1}{\#(\mathcal{A})}\sum_{i \in  \mathcal{A}} \rho_i 
    \end{align*}
    We provide a simple example illustrating the cross-type connectedness index in Figure~\ref{fig:example}. In Table~\ref{tab:notation}, we provide a summary of our notation. (We provide a more extensive list of our notation in Table~\ref{tab:notation-full}.)  
    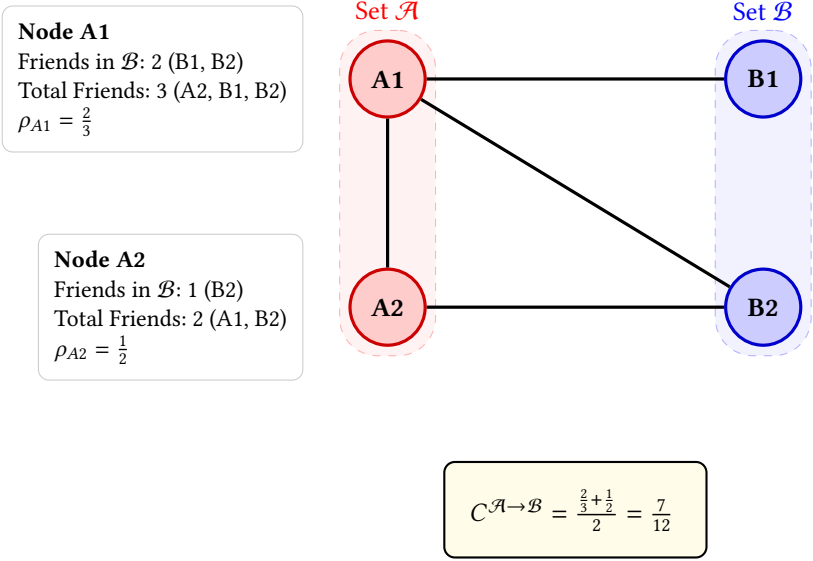
\begin{figure}
        \centering
        \begin{tikzpicture}[
    node_a/.style={circle, draw=red!80!black, fill=red!20, very thick, minimum size=1cm, font=\bfseries},
    node_b/.style={circle, draw=blue!80!black, fill=blue!20, very thick, minimum size=1cm, font=\bfseries},
    friendship/.style={very thick, black},
    annot_box/.style={rectangle, draw=gray!40, fill=white, rounded corners, font=\small, align=left, inner sep=6pt}
]

\node[node_a] (A1) at (0, 3) {A1};
\node[node_a] (A2) at (0, 0) {A2};

\node[node_b] (B1) at (5, 3) {B1};
\node[node_b] (B2) at (5, 0) {B2};

\draw[friendship] (A1) -- (B1);
\draw[friendship] (A1) -- (B2);
\draw[friendship] (A1) -- (A2); 
\draw[friendship] (A2) -- (B2);

\begin{pgfonlayer}{background}
    \node[fit=(A1)(A2), draw=red!30, dashed, fill=red!5, rounded corners=0.5cm, 
          label={[red]above:Set $\cA$}] (SetA) {};
    \node[fit=(B1)(B2), draw=blue!30, dashed, fill=blue!5, rounded corners=0.5cm, 
          label={[blue]above:Set $\cB$}] (SetB) {};
\end{pgfonlayer}

\node[annot_box, left=0.6cm of A1] (calc1) {
    \textbf{Node A1}\\
    Friends in $\cB$: 2 (B1, B2)\\
    Total Friends: 3 (A2, B1, B2)\\
    $\rho_{A1} = \frac{2}{3}$
};

\node[annot_box, left=0.6cm of A2] (calc2) {
    \textbf{Node A2}\\
    Friends in $\cB$: 1 (B2)\\
    Total Friends: 2 (A1, B2)\\
    $\rho_{A2} = \frac{1}{2}$
};

\node[draw, thick, fill=yellow!10, rounded corners, below=1.5cm of A2, xshift=2.5cm, align=center, inner sep=10pt] {
    \textbf{$C^{\cA\to\cB} = \frac{\frac{2}{3} + \frac{1}{2}}{2} = \frac{7}{12}$}
};

\end{tikzpicture}
        \caption{An example of how to calculate the cross-type connectedness index.}
        \label{fig:example}
    \end{figure}

    \begin{table}[ht]
    \centering
    \caption{Summary of Notation}
        \label{tab:notation}
        \begin{tabular}{@{}ll@{}}
        \toprule
        \textbf{Symbol} & \textbf{Description} \\ \midrule
        $\mathcal{V}$ & Set of vertices (nodes) in the network \\
        $\mathcal{E}$ & Set of edges (friendships/connections) between nodes \\
       $\mathbf{L}=(l_i)_{i\in\mathcal V}$ & Node label vector with $l_i\in\{a,b\}$ for each node $i$ \\

        $e_{ij}$ & Binary indicator; $e_{ij}=1$ if an edge exists between $i$ and $j$, else $0$ \\
        $d_i$ & Degree of node $i$, $\#(j \in \mathcal{V}:e_{ij} = 1)$. \\ 
        $N(i)$ & Neighborhood of node $i$, $\{j \in \mathcal{V}:e_{ij} = 1\}$. \\ 
        $\mathcal{A}, \mathcal{B}$ & Partitions of $\mathcal{V}$ based on labels ($a, b$) \\
        $\#( \cdot )$ & Cardinality operator (number of elements in a set) \\
        $\rho_i$ & Individual $i$'s fraction of friends belonging to the target group \\
        $C^{\mathcal{A}\to\mathcal{B}}$ & Cross-type connectedness index (average $\rho_i$ for individuals in group $\mathcal{A}$) \\ \bottomrule
        \end{tabular}
    \end{table}

    We can also define a \textit{same-type connectedness index}, which is the proportion of connections extending from one group of nodes to themselves. 
    \begin{definition}[Same-Type Connectedness Index]\onehalfspacing
Take a labeled network $(\mathcal{V}, \mathcal{E}, \mathbf{L})$, where
\[
\mathbf{L}= (l_i)_{i\in\mathcal V}, \qquad l_i\in\{a,b\}.
\]
Define the induced partition of nodes by
\[
\mathcal{A} := \{ i \in \mathcal{V} : l_i = a \}, 
\qquad
\mathcal{B} := \{ i \in \mathcal{V} : l_i = b \}.
\]
The \emph{same-type connectedness index} for group $\mathcal{A}$ is defined as
\[
C^{\mathcal{A}\to\mathcal{A}}
\;:=\;
\frac{1}{\#(\mathcal{A})}
\sum_{i \in \mathcal{A}}
\frac{\sum_{j \in \mathcal{A}} e_{ij}}{d_i}.
\]
That is, $C^{\mathcal{A}\to\mathcal{A}}$ is the average fraction of connections from individuals in $\mathcal{A}$ to other individuals in the same group $\mathcal{A}$.
\end{definition}

    Note that $C^{\mathcal{A}\to\mathcal{A}} = 1 - C^{\mathcal{A}\to\mathcal{B}}$. 

    We often will want to compute cross-type connectedness indices for a particular \textit{cell}, where a cell could be, for example, a county, a zip code, or a school. In that case, if we let $s$ denote the cell (a set of users, such as a school) we can consider the statistics: 
    \begin{align*}
        C_s ^{\mathcal{A}\to\mathcal{B}} = \frac{1}{\#(\mathcal{A} \cap s)}\sum_{i \in  \mathcal{A} \cap s} \frac{\sum_{j \in \mathcal{B}} e_{ij}}{d_i}
    \end{align*}
    or: 
    \begin{align*}
        C_s ^{\mathcal{A}\to\mathcal{B}} = \frac{1}{\#(\mathcal{A} \cap s)}\sum_{i \in  \mathcal{A} \cap s} \frac{\sum_{j \in \mathcal{B} \cap s} e_{ij}}{\sum_{j \in \mathcal{V} \cap s}  e_{ij}}
    \end{align*}
    with the difference being whether we consider only friendships from individuals in the cell to other individuals in the cell, or whether we consider connections from individuals in the cell to any other individual. 

    For our privacy guarantee, we will follow the notion of \textit{edge-adjacent DP} from \cite{blocki2013differentially}, which allows us to protect both the characteristics of individual nodes in the network (such as an individual's income or race) and the presence or absence of particular edges in the network. This is $\varepsilon$-DP when considering labeled networks as adjacent if they satisfy the following property: 
\begin{definition}[Edge-Adjacent Labeled Networks]\label{def:ealn}
Two labeled networks $(\mathcal{V}, \mathcal{E}, \mathbf{L})$ and $(\mathcal{V}, \mathcal{E}', \mathbf{L}')$ are said to be \emph{edge-adjacent} if:
\begin{itemize}
\item the edge sets $\mathcal{E}$ and $\mathcal{E}'$ differ by at most one edge, and
\item the label vectors $\mathbf{L}=(l_i)_{i\in\mathcal V}$ and $\mathbf{L}'=(l_i')_{i\in\mathcal V}$ differ in at most one component, i.e., there exists at most one node $u\in\mathcal V$ such that $l_u\neq l_u'$.
\end{itemize}
\end{definition}

    That is, two networks are edge adjacent if they differ in the presence of at most one edge and the characteristic of at most one node. The name edge-adjacent is somewhat unfortunate since it does allow for node characteristics to vary, but we stick with this terminology for consistency with the prior literature. The notion of edge-adjacent DP has been employed by \cite{jorgensen2016publishing} and \cite{Chen2019PublishingCA}. We note that our notion of edge-adjacency is slightly stronger than that stated by \cite{blocki2013differentially}, since we allow for a simultaneous change in \textit{both} an edge and the characteristics of one node, as opposed to a change in \textit{either} an edge or the characteristics of one node. 

    Note that our adjacency notion does not treat the \emph{presence} of a node in the dataset as private. Requiring node presence privacy would move us toward node-level DP (node-DP), under which two graphs are adjacent if one can be obtained from the other by adding/removing a vertex together with (potentially) all incident edges. In social network settings, this stronger notion is typically infeasible for connectedness statistics: the removal of a single high-degree vertex can drastically change cross-type connectivity, forcing any DP mechanism to inject noise at a scale that overwhelms the signal. Figure~\ref{fig:star-net} provides a simple illustration: changing the center node (or equivalently, protecting its presence/incident edges under node-DP) can shift the red-to-blue connectedness from $0$ to $1$, so the sensitivity spans the entire parameter space and does not diminish with network size. 

    \begin{figure}
    \centering
    \resizebox{\linewidth}{!}{\begin{tikzpicture}[
    node_a/.style={circle, draw=red!80!black, fill=red!20, very thick, minimum size=0.9cm, inner sep=1pt, font=\bfseries\sffamily\color{red!50!black}},
    node_b/.style={circle, draw=blue!80!black, fill=blue!20, very thick, minimum size=0.9cm, inner sep=1pt, font=\bfseries\sffamily\color{blue!50!black}},
    friendship/.style={very thick, black},
    annot_box/.style={rectangle, draw=gray!40, fill=white, rounded corners, font=\small\sffamily, align=center, inner sep=8pt, drop shadow, minimum width=5cm}
]

\def\radius{2.5}   
\def\nodes{6}      
\def\offset{10}    

\begin{scope}[local bounding box=LeftNet]
    \node[node_a] (L_c) at (0, 0) {R};
    
    \foreach \i in {1,...,\nodes} {
        \node[node_a] (L_p\i) at ({360/\nodes * (\i - 1)}:\radius) {R};
        \draw[friendship] (L_c) -- (L_p\i);
    }
\end{scope}

\begin{scope}[xshift=\offset cm, local bounding box=RightNet]
    \node[node_b] (R_c) at (0, 0) {B};
    
    \foreach \i in {1,...,\nodes} {
        \node[node_a] (R_p\i) at ({360/\nodes * (\i - 1)}:\radius) {R};
        \draw[friendship] (R_c) -- (R_p\i);
    }
\end{scope}

\draw[-{Stealth[length=12pt, width=8pt]}, ultra thick, gray!40, shorten >= 1cm, shorten <= 1cm] 
    (LeftNet.east) -- (RightNet.west) 
    node[midway, above, font=\sffamily\bfseries\small, text=black, align=center] {Change the label\\of the central node\\};


\coordinate (BoxLevel) at (0, -4.5);

\node[annot_box] at (0, 0 |- BoxLevel) {
    \textbf{Network G}\\[8pt]
    Every Red node has \textbf{zero} Blue friends.\\[8pt]
    Red-to-Blue Connectedness = \textbf{0}
};

\node[annot_box] at (\offset, 0 |- BoxLevel) {
    \textbf{Network G'}\\[8pt]
    Every Red node has \textbf{only} Blue friends.\\[8pt]
    Red-to-Blue Connectedness = \textbf{1}
};

\end{tikzpicture}}
    \caption{A star-network illustration of why node-level privacy notions can be too strong for connectedness statistics: modifying the center node (or protecting its presence/incident edges under node-DP) can move red-to-blue connectedness from $0$ to $1$, implying sensitivity that spans $[0,1]$ and does not diminish with network size.}
    \label{fig:star-net}
\end{figure}
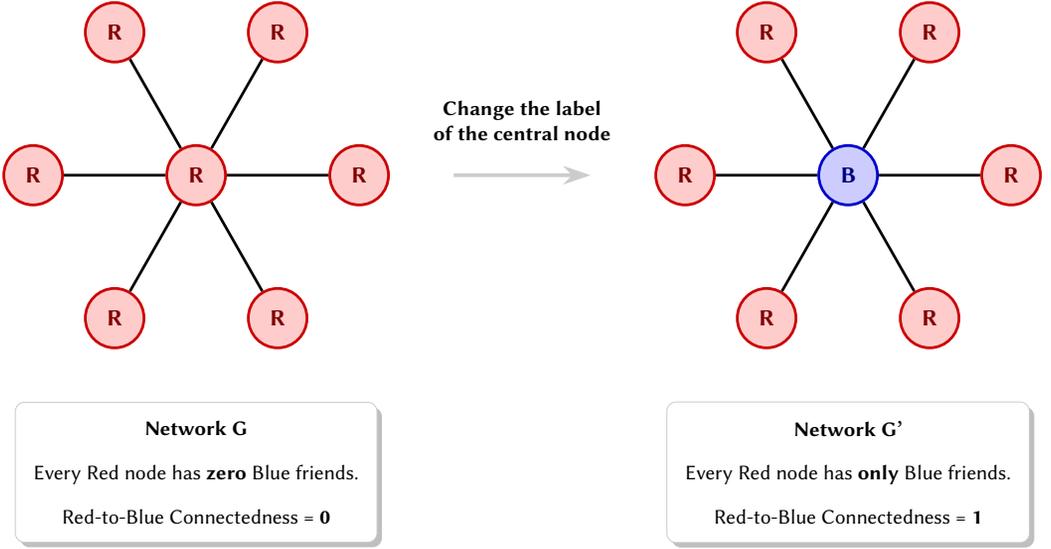

Accordingly, we follow the modeling perspective advocated by \citet{Chen2019PublishingCA}: on many platforms the existence of user profiles (and hence the vertex set $\mathcal{V}$) is effectively public, whereas personal characteristics recorded on profiles are sensitive. Our goal is therefore to protect node \emph{attributes} and individual \emph{edges} (via edge-adjacent DP), rather than node presence, which would render the connectedness indices of interest non-actionable.

\subsection{A Composition Theorem for Implementing Edge-Adjacent Differential Privacy}\label{sec:edge-adj-dp}

In this section, we provide a method that allows the release of connectedness indices with a formal privacy guarantee.


The core logic behind our approach is as follows. A connectedness index risks releasing information about two characteristics of a given node in the network— the node's label and its friendships— and this is reflected in the notion of edge-adjacent DP we use. The approach used by \cite{chettyetal2022I,chettyetal2022II} to protect economic connectedness involves computing the economic connectedness metric and then applying noise to the aggregated statistic on the basis of its local sensitivity to changes in the node's characteristic and friendships.\footnote{The notion of local sensitivity is from \cite{nissim_smooth_2007}. However, while the smooth sensitivity approach developed by \cite{nissim_smooth_2007} is formally private, the local-sensitivity-based method used by \cite{chettyetal2022I, chettyetal2022II}, which leverages the maximum observed sensitivity approach of \cite{chetty2019practical}, is not formally private because the maximum observed sensitivity itself is a function of the private data and so can leak information about outliers.}

We show that applying noise to the node characteristics and then adjusting for the bias this creates in the ensuing estimate of economic connectedness, followed by applying noise on the basis of sensitivity to changes in a node's friendship network, is a better approach that allows for a formal privacy guarantee while injecting small amounts of noise.\footnote{It also sidesteps a key difficulty that would otherwise be embedded in this problem, which is that the label of a given node is potentially an input to the connectedness index for \textit{every} single cell in the dataset (since a node can potentially be connected to a node in every other cell). This causes composition issues for approaches based on the global sensitivity of change to a node's label, since privacy losses compose in the product of the number of cells and number of statistics released. Our approach only composes in the number of statistics released. This is because once the node labels are privatized, by the postprocessing theorem, any number of subsequent cell-level indices can be derived from these private labels without further depleting the privacy budget reserved for node characteristics.}

Our approach rests on the following theorem (proved in \cref{app:proof:thm:edgeadj-composition-epsdelta}), which formalizes the composition principle underlying this construction. It states that applying a private mechanism to node attributes, and then a separate edge-DP mechanism for the aggregate statistic that takes as an input the privatized attributes, composes in a way that satisfies edge-adjacent DP.


\begin{theorem}[\texorpdfstring{$(\varepsilon,\delta)$}{(epsilon,delta)} composition under edge-adjacency]
\label{thm:edgeadj-composition-epsdelta}
Let $(\mathcal{V},\mathcal{E},\mathbf{L})$ and $(\mathcal{V},\mathcal{E}',\mathbf{L}')$
be edge-adjacent labeled networks. Let
$ \mathcal{M}_1 : (\mathcal{V},\mathcal{E},\mathbf{L}) \mapsto (\mathcal{V},\mathcal{E},\widehat{\mathbf{L}}) $
be $(\varepsilon_\ell,\delta_\ell)$-DP with respect to changing a single node attribute,
and for every fixed $\widehat{\mathbf{L}}$ let
$\mathcal{M}_2(\,\cdot\,,\widehat{\mathbf{L}}):(\mathcal{V},\mathcal{E})\mapsto \mathbb{R}$
be $(\varepsilon_e,\delta_e)$-DP with respect to changing a single edge.
Define the composed mechanism $\mathcal{M}:=\mathcal{M}_2\circ \mathcal{M}_1$.
Then $\mathcal{M}$ is $(\varepsilon_\ell+\varepsilon_e,\delta_\ell+e^{\varepsilon_\ell}\delta_e)$
edge-adjacent differentially private.
\end{theorem}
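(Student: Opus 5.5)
We prove the statement with a hybrid argument. Edge-adjacency allows a simultaneous change of one edge and one label, so we pass from $D=(\mathcal{V},\mathcal{E},\mathbf{L})$ to $D'=(\mathcal{V},\mathcal{E}',\mathbf{L}')$ through the intermediate network $D''=(\mathcal{V},\mathcal{E},\mathbf{L}')$. In $D''$ the label vector has already changed but the edge set has not. We then bound each step of the chain $D\to D''\to D'$ separately.

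Fix a measurable set $\mathcal{S}\subseteq\mathbb{R}$ of outputs. Because $\mathcal{M}_1$ acts only on $\mathbf{L}$, the privatized labels $\widehat{\mathbf{L}}$ have law $\mu_{\mathbf{L}}$, which does not depend on the edge set. We assume the internal randomness of $\mathcal{M}_2$ is independent of that of $\mathcal{M}_1$, and write
\[
g_{\mathcal{E}}(\widehat{\mathbf{L}}) := \Pp\bigl(\mathcal{M}_2(\mathcal{E},\widehat{\mathbf{L}})\in\mathcal{S}\bigr)\in[0,1].
\]
Conditioning on $\widehat{\mathbf{L}}$ then gives
\[
\Pp(\mathcal{M}(D)\in\mathcal{S})=\int g_{\mathcal{E}}\,d\mu_{\mathbf{L}}.
\]
The argument has two steps.

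\textbf{Step 1 (edge step, $D''\to D'$).} Here $\mathcal{E}$ and $\mathcal{E}'$ differ in at most one edge. For each fixed $\widehat{\mathbf{L}}$, the edge-DP guarantee of $\mathcal{M}_2$ gives the pointwise bound $g_{\mathcal{E}}\le e^{\varepsilon_e}g_{\mathcal{E}'}+\delta_e$. Integrating against $\mu_{\mathbf{L}'}$ yields
\[
\int g_{\mathcal{E}}\,d\mu_{\mathbf{L}'}\le e^{\varepsilon_e}\int g_{\mathcal{E}'}\,d\mu_{\mathbf{L}'}+\delta_e.
\]

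\textbf{Step 2 (label step, $D\to D''$).} Here $\mathbf{L}$ and $\mathbf{L}'$ differ in at most one node and the edge set $\mathcal{E}$ is fixed. We need the standard fact that $(\varepsilon,\delta)$-DP for events extends to bounded test functions: for any measurable $f:\to[0,1]$,
\[
\int f\,d\mu_{\mathbf{L}}\le e^{\varepsilon_\ell}\int f\,d\mu_{\mathbf{L}'}+\delta_\ell.
\]
To prove it, write $\int f\,d\mu=\int_0^1\mu(f>t)\,dt$ and apply the DP inequality to each event $\{f>t\}$. Taking $f=g_{\mathcal{E}}$ gives
\[
\Pp(\mathcal{M}(D)\in\mathcal{S})\le e^{\varepsilon_\ell}\int g_{\mathcal{E}}\,d\mu_{\mathbf{L}'}+\delta_\ell.
\]

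Chaining the two steps gives
\[
\Pp(\mathcal{M}(D)\in\mathcal{S})\le e^{\varepsilon_\ell}\bigl(e^{\varepsilon_e}\Pp(\mathcal{M}(D')\in\mathcal{S})+\delta_e\bigr)+\delta_\ell,
\]
which is exactly $(\varepsilon_\ell+\varepsilon_e,\ \delta_\ell+e^{\varepsilon_\ell}\delta_e)$. The reverse inequality, with $D$ and $D'$ swapped, follows the same way using the symmetric hybrid $(\mathcal{V},\mathcal{E}',\mathbf{L})$. The same argument handles the degenerate cases where only the edge or only the label differs, and the pure case $\delta=0$ reproduces the log-ratio form of the definition.

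The main obstacle is the order of the hybrid, which determines which $\delta$ gets multiplied by an exponential factor. We must apply the label step to the outside and the edge step inside the integral. The order is legitimate because $\mathcal{M}_2$'s guarantee is uniform in $\widehat{\mathbf{L}}$. This is the order that yields the factor $e^{\varepsilon_\ell}$ on $\delta_e$, as in the statement. The only other delicate point is justifying the extension from events to $[0,1]$-valued functions for $\mathcal{M}_1$, and the layer-cake identity above handles it.
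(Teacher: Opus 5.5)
Your proof is correct and follows the same route as the paper's proof: condition on the privatized labels $\widehat{\mathbf{L}}$, apply the label guarantee of $\mathcal{M}_1$ to the outer mixture and the edge guarantee of $\mathcal{M}_2$ pointwise inside it, and this order is what produces the $e^{\varepsilon_\ell}\delta_e$ term. Your layer-cake extension to $[0,1]$-valued test functions justifies this step more cleanly than the paper's discrete sum with a per-outcome slack $\delta_\ell(\widehat{\mathbf{L}})$, and it also covers continuous privatized labels such as the truncated Laplace mechanism in Algorithm~\ref{alg:dp-mafr-truncated}.
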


\section{Differential Privacy for Binary Labels: High and Low Status Connectedness}

We now make use of the general composition framework established in 
\Cref{thm:edgeadj-composition-epsdelta}. 
In particular, we first construct an $\varepsilon_\ell$-DP mechanism 
$\mathcal{M}_1$ for node attributes, and subsequently apply an 
$\varepsilon_e$ edge-DP mechanism to the aggregate statistic 
computed from the privatized labels. 

Our first step is therefore to protect node attributes under $\varepsilon_l$-DP using randomized response. Specifically, for each node $i$, the categorical label $l_i \in \{a,b\}$ is independently randomized by flipping its value with probability 
\[
p = \frac{1}{1+e^{\varepsilon_l}},
\]
yielding a private label $\hat{l}_i$.


\begin{fact}
Let each node's attribute be $l_v \in \{a,b\}$, and let 
$\mathbf{L}=(l_v)_{v\in\mathcal V}$ denote the label vector.
Define the randomized mechanism $\mathcal{M}_1$ that independently replaces each node's attribute according to
\[
\mathcal{M}_1(l_v) =
\begin{cases}
l_v, & \text{with probability } 1-p = \dfrac{e^{\varepsilon_l}}{1+e^{\varepsilon_l}} \\[1em]
\text{the opposite value in } \{a,b\}, & \text{with probability } 
p = \dfrac{1}{1+e^{\varepsilon_l}}
\end{cases}
\]
Then $\mathcal{M}_1$ satisfies $\varepsilon_l$-differential privacy with respect to changes in a single node's attribute.
\end{fact}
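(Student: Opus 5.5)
The plan is to verify the defining inequality of $\varepsilon$-differential privacy directly, exploiting the product structure of $\mathcal{M}_1$. Fix two label vectors $\mathbf{L}$ and $\mathbf{L}'$ that differ only at a single node $u$. Say, without loss of generality, $l_u = a$ and $l'_u = b$, with $l_v = l'_v$ for all $v \neq u$. Since each node's label is randomized independently, the output distribution factorizes. For any output vector $\widehat{\mathbf{L}} = (\hat l_v)_{v\in\mathcal V}$ we have
\[
\mathbb{P}\bigl(\mathcal{M}_1(\mathbf{L}) = \widehat{\mathbf{L}}\bigr) = \prod_{v\in\mathcal V} \mathbb{P}\bigl(\mathcal{M}_1(l_v) = \hat l_v\bigr),
\]
and similarly for $\mathbf{L}'$. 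All factors with $v \neq u$ are identical in the two products and cancel in the ratio. What remains is
\[
\frac{\mathbb{P}(\mathcal{M}_1(l_u) = \hat l_u)}{\mathbb{P}(\mathcal{M}_1(l'_u) = \hat l_u)}.
\]

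Next I check the two possible values of $\hat l_u$. If $\hat l_u = a$, the ratio is $(1-p)/p = e^{\varepsilon_l}$. If $\hat l_u = b$, it is $p/(1-p) = e^{-\varepsilon_l}$. Both use $p = 1/(1+e^{\varepsilon_l})$, which gives $1-p = e^{\varepsilon_l}/(1+e^{\varepsilon_l})$. Hence the pointwise likelihood ratio always lies in $[e^{-\varepsilon_l}, e^{\varepsilon_l}]$.

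Finally, I pass from single outputs to arbitrary sets. The output space $\{a,b\}^{\mathcal V}$ is finite, so for any set $\mathcal S$ of outputs, $\mathbb{P}(\mathcal{M}_1(\mathbf{L})\in\mathcal S)$ is a sum of pointwise probabilities. Each summand is at most $e^{\varepsilon_l}$ times the corresponding summand for $\mathbf{L}'$, so the sums obey the same bound. Exchanging the roles of $\mathbf{L}$ and $\mathbf{L}'$ gives the reverse inequality, and together these yield the absolute log-ratio bound in the definition of $\varepsilon$-differential privacy.

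There is no genuine obstacle here. The only point needing care is to state explicitly that the randomizations of different nodes are independent, since that is what makes the cancellation across $v \neq u$ valid. It is also worth noting that all probabilities are strictly positive, because $0<p<1$ for finite $\varepsilon_l$, so the ratios are well defined.
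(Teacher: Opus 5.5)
Your proof is correct. The paper does not argue this fact itself; it only cites Result~1 of \cite{wang2016using}, the standard single-respondent statement that randomized response with flip probability $1/(1+e^{\varepsilon_l})$ has likelihood ratio at most $e^{\varepsilon_l}$.

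Your argument is that same likelihood-ratio computation, made self-contained for the vector-valued mechanism the paper actually uses. You add two steps:
\begin{itemize}
\item You use independence across nodes to factor $\mathbb{P}(\mathcal{M}_1(\mathbf{L})=\widehat{\mathbf{L}})$ over the whole output vector, so that every factor except node $u$'s cancels.
\item You pass from single outcomes to arbitrary output sets by summing over the finite space $\{a,b\}^{\mathcal V}$.
\end{itemize}
Both steps are left implicit when the citation is transferred from one respondent's answer to the neighboring relation ``label vectors differ in one component.'' Your version is therefore the more complete one.

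Proving the multiplicative form $\mathbb{P}(\mathcal{M}_1(\mathbf{L})\in\mathcal S)\le e^{\varepsilon_l}\,\mathbb{P}(\mathcal{M}_1(\mathbf{L}')\in\mathcal S)$ in both directions is also the right way to read the paper's $|\ln(\cdot)|$ definition. That definition is not literally meaningful for sets of probability zero, such as $\mathcal S=\emptyset$.
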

\begin{proof}
    This property of randomized response is well known in the DP literature. See, for example, Result 1 of \cite{wang2016using} for a proof of this property.
\end{proof}

    One potential issue with this approach is that randomized response induces attenuation bias in the connectedness metric. When using an above- vs below-median split, randomly flipping node labels mechanically pushes observed group composition toward $1/2$, even when true homophily is strong. For example, if $p=\frac{1}{2}$, then regardless of the true level of homophily in society, in expectation half of every node’s neighbors will appear to belong to group $b$ under the privatized labels.

We therefore require a correction mechanism that recovers unbiased estimates of cross-type connectedness using only the privatized labels $\hat{l}_i$, without accessing the nonprivate labels $l_i$. Proposition~\ref{prop:mvue} (proved in \cref{app:proof:prop:mvue}) constructs a minimum-variance unbiased estimator (MVUE) that corrects for this attenuation bias using only the randomized-response outputs.

\begin{proposition}[MVUE for individual connectedness]
\label{prop:mvue}
For each node $i$, define the true individual connectedness
\[
\rho_i 
:= 
\sum_{j\in\mathcal V} a_{ij}\,\1\{l_j=b\},
\qquad 
a_{ij}:=\frac{e_{ij}}{d_i}.
\]
Let privatized labels $\hat l_j$ be generated by randomized response with flipping probability
\[
p=\frac{1}{1+e^{\varepsilon_l}},
\]
and define the observed proxy
\[
\hat{\rho}_i 
:= 
\sum_{j\in\mathcal V} a_{ij}\,\1\{\hat l_j=b\}.
\]
Then the unique minimum-variance unbiased estimator (MVUE) of $\rho_i$ based only on the privatized labels $(\hat l_j)_{j\in V}$ is
\begin{equation}
\tilde{\rho}_i 
:= 
\frac{\hat{\rho}_i-p}{1-2p}.
\label{eq:rho_tilde_node}
\end{equation}
\end{proposition}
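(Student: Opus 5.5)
The plan is to verify unbiasedness of $\tilde\rho_i$ directly and then obtain uniqueness and minimum variance from the Lehmann--Scheffé theorem. The statistical model is as follows. The unknown parameter is the true label vector $\mathbf L\in\{a,b\}^{\mathcal V}$, while the edge set (hence the weights $a_{ij}$) is treated as fixed and known. The observation is $\widehat{\mathbf L}=(\hat l_j)_{j\in\mathcal V}$, whose coordinates are independent with $\Pp(\hat l_j\neq l_j)=p$. Throughout I assume $d_i>0$, so that $\rho_i$ is defined and $\sum_j a_{ij}=1$, and $\varepsilon_l\in(0,\infty)$, so that $p\in(0,1/2)$ and $1-2p>0$.

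\emph{Step 1 (unbiasedness).} For each $j$,
\[
\E\bigl[\1\{\hat l_j=b\}\bigr]=(1-p)\1\{l_j=b\}+p\,\1\{l_j=a\}=p+(1-2p)\1\{l_j=b\}.
\]
By linearity and $\sum_j a_{ij}=1$, this gives $\E[\hat\rho_i]=p+(1-2p)\rho_i$. Rearranging yields $\E[\tilde\rho_i]=\rho_i$ for every $\mathbf L$, so $\tilde\rho_i$ is unbiased. It is also a function of $\widehat{\mathbf L}$ alone, because $p$ and the $a_{ij}$ are known.

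\emph{Step 2 (completeness of $\widehat{\mathbf L}$).} The full observation $\widehat{\mathbf L}$ is trivially sufficient. The key point is to show it is complete for the family $\{P_{\mathbf L}:\mathbf L\in\{a,b\}^{\mathcal V}\}$. Suppose $g$ satisfies $\E_{\mathbf L}[g(\widehat{\mathbf L})]=0$ for all $\mathbf L$. In matrix form this reads $Kg=0$, where $K$ is the $2^n\times 2^n$ matrix with entries $P_{\mathbf L}(\widehat{\mathbf L}=\hat{\mathbf x})$. Since the coordinates are independent, $K=\bigotimes_{j\in\mathcal V}Q$ with
\[
Q=\begin{pmatrix}1-p & p\\ p & 1-p\end{pmatrix},\qquad \det Q=1-2p\neq 0 .
\]
A Kronecker product of invertible matrices is invertible, so $g\equiv 0$ and completeness holds.

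\emph{Step 3 (conclusion).} By completeness, an unbiased estimator of $\rho_i$ that is a function of $\widehat{\mathbf L}$ is unique: any two such estimators differ by a function with zero mean under every $\mathbf L$, which must vanish. Since $\tilde\rho_i$ is unbiased and is a function of the complete sufficient statistic, the Lehmann--Scheffé theorem makes it the unique MVUE. Uniqueness here is understood almost surely; since every $\hat{\mathbf x}$ has positive probability when $p\in(0,1)$, it in fact holds pointwise.

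The step I expect to require the most care is Step 2. The parameter space is finite and discrete rather than an open set, so the usual exponential-family completeness argument does not apply directly, which is why I would use the explicit invertibility argument via the Kronecker structure. I would also remark that the degenerate case $p=1/2$ ($\varepsilon_l=0$) is excluded: there $Q$ is singular and no unbiased estimator exists. The labels of nodes outside $N(i)$ are nuisance coordinates, but this is harmless because the completeness argument covers the whole label vector.
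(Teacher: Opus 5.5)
Your proof is correct and has the same overall structure as the paper's: unbiasedness via $\E[\1\{\hat l_j=b\}]=p+(1-2p)\1\{l_j=b\}$ and $\sum_j a_{ij}=1$, then a complete sufficient statistic, then Lehmann--Scheff\'e. Where you differ is the completeness step, and there your argument is the sound one. The paper asserts that completeness ``follows from strict positivity of the likelihood.'' Positivity only lets one upgrade ``$h=0$ a.s.'' to ``$h\equiv 0$''; it does not force a zero-mean function to vanish.

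The paper's own setting shows this. At $p=1/2$, every configuration still has probability $2^{-n}>0$ under every $\mathbf L$, yet the family collapses to a single distribution and is not complete. On a finite sample space, completeness of the full data means the likelihood matrix has trivial kernel. That requires at least as many parameter values as outcomes, together with a rank condition.

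Your Kronecker-product argument supplies exactly this. Both $\{a,b\}^{\mathcal V}$ and the sample space have $2^n$ points, and $K=\bigotimes_j Q$ is nonsingular because $\det Q=1-2p\neq 0$. This makes explicit where $p<1/2$ (i.e.\ $\varepsilon_\ell>0$) is used. It also gives the stronger fact that $\tilde\rho_i$ is the only unbiased function of $\widehat{\mathbf L}$, so minimum variance is automatic. Finally, your standing assumption $d_i>0$ is genuinely needed for $\sum_j a_{ij}=1$, which the paper uses without comment.
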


\subsection{Consistency of Debiased Private Estimation for Network Connectedness}

Proposition~\ref{prop:mvue} establishes that $\tilde{\rho}_i$ is unbiased (indeed MVUE) for the individual connectedness $\rho_i$ given the privatized labels.
However, our target estimand is a \emph{cell-level connectedness index}, which aggregates $\rho_i$ over a \emph{latent} subset of nodes (e.g., $A\cap s$) and is implemented via a \emph{ratio} (a H\'ajek-type normalization).
Unbiasedness of the individual components does not automatically imply that the resulting ratio estimator is unbiased or well-behaved in finite samples, and in particular it does not control the probability that the random denominator is close to zero.
Moreover, the final released statistic additionally includes edge-DP noise calibrated to an edge-sensitivity that shrinks with the population size.

Define the debiased weight for membership in $\mathcal{A}$:
\begin{equation}
w_i := \frac{\1\{\widehat l_i=a\}-p}{1-2p}.
\label{eq:w}
\end{equation}
Let
\[
S_{0,n} := \sum_{i\in V} w_i,\qquad S_{1,n} := \sum_{i\in V} w_i\,\tilde{\rho}_i,
\]
where $\tilde{\rho}_i$ is the debiased proxy for $\rho_i$ in \cref{eq:rho_tilde_node}. The H\'ajek-type estimator is
\begin{equation}
\widetilde C^{\cA\to \cB} := \frac{S_{1,n}}{S_{0,n}},
\label{eq:hajek}
\end{equation}
whenever $S_{0,n}\neq 0$ (we show that $S_{0,n}$ is bounded away from $0$ with probability tending to $1$ under mild conditions).

We next establish a large-sample guarantee: under mild regularity conditions, the debiased H\'ajek estimator is \emph{consistent}, and the additional edge-DP noise is asymptotically negligible, so the overall private release converges in probability to the true connectedness index. To state the result, we first introduce two regularity conditions. 
The first ensures that the target cell remains asymptotically non-negligible, so that the random denominator in the H\'ajek normalization does not go to zero. 
The second is a bounded-degree condition, which controls local dependence and ensures that both the estimator and its edge-sensitivity remain uniformly well behaved.

\begin{assumption}[Non-vanishing fraction of type-$A$ nodes]
There exists $\pi_\cA \in (0,1]$ such that $\#(\mathcal{A})/n \to \pi_\cA$ as $n\to\infty$.
\label{ass:piA}
\end{assumption}

\begin{assumption}[Bounded degree]
There exists $\Delta<\infty$ such that for all $n$ and all $i\in V$,
\begin{equation}
d_i \le \Delta.
\label{eq:bounded-degree}
\end{equation}
\label{ass:bounded-degree}
\end{assumption}

In order to implement the second stage of the construction in 
Section~\ref{sec:edge-adj-dp}, we must calibrate the edge-DP noise 
added to the aggregate statistic. 
Since the Laplace mechanism requires an upper bound on the 
edge-sensitivity of the released quantity, we therefore compute 
the sensitivity of $S_{1,n}$ to the addition or removal of a single edge.

Importantly, at this stage the privatized labels $\widehat{\mathbf L}$ 
have already been generated via $\mathcal M_1$. 
Hence, $(w_i)_{i\in V}$, $S_{0}$, and the debiasing factors are all fixed, and 
we only need to control how $S_{1,n}$ changes when the edge set varies. 
The following lemma establishes a uniform bound on this edge-sensitivity (proved in \cref{app:proof:thm:consistency_dp}, along with other auxiliary lemmas for consistency of the private network connectedness estimator).

\begin{lemma}[Edge-sensitivity of $S_{1,n}$]
\label{lem:S1_edge_sens}
Fix $(\hat l_i)_{i\in V}$, so that $(w_i)_{i\in V}$ is fixed. 
Let $E$ and $E'$ differ by the addition or removal of a single edge $(u,v)$.
Assume $\hat\rho_i(E)$ is computed row-wise as $\hat\rho_i(E)=\sum_j a_{ij}(E)\,\1\{\hat l_j=b\}$ with 
$a_{ij}(E)=e_{ij}/d_i$ (so only the normalized weight rows for $u$ and $v$ change when adding or removing $(u,v)$).
Then,
\[
\bigl|S_{1,n}(E)-S_{1,n}(E')\bigr|\;\le\;\frac{2(1-p)}{(1-2p)^2}.
\]
\end{lemma}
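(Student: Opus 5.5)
The plan is a direct term-by-term bound. Since $(\hat l_i)_{i\in V}$ is fixed, the weights $w_i$ in \eqref{eq:w} and the constants $p$ and $1-2p$ do not depend on the edge set. The edge set enters $S_{1,n}=\sum_{i} w_i\tilde\rho_i$ only through the row-normalized proxies $\hat\rho_i(E)$. We have $0<p<1/2$ because $\varepsilon_l>0$, so $1-2p>0$ and all the divisions below are legitimate.

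First I would localize the change. By the assumption of the lemma, $a_{ij}(E)=e_{ij}/d_i$, and adding or removing $(u,v)$ changes only the edge indicators and degrees of $u$ and $v$. Hence $\hat\rho_i(E)=\hat\rho_i(E')$ for every $i\notin\{u,v\}$. Since $\tilde\rho_i=(\hat\rho_i-p)/(1-2p)$ is affine in $\hat\rho_i$, the additive $-p/(1-2p)$ cancels in the difference, and
\[
S_{1,n}(E)-S_{1,n}(E')=\frac{1}{1-2p}\sum_{i\in\{u,v\}} w_i\bigl(\hat\rho_i(E)-\hat\rho_i(E')\bigr).
\]

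Second, I would bound the two factors in each term. For the weights, $\1\{\hat l_i=a\}\in\{0,1\}$ gives $w_i\in\{-p/(1-2p),\,(1-p)/(1-2p)\}$. Since $p<1-p$, this yields $|w_i|\le (1-p)/(1-2p)$. For the proxies, $\hat\rho_i(E)=\sum_j a_{ij}(E)\1\{\hat l_j=b\}$ is a convex combination of indicators whenever $d_i\ge 1$, because the $a_{ij}$ are nonnegative and sum to one. So $\hat\rho_i\in[0,1]$ under both $E$ and $E'$.

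The only delicate point, which I expect to be the main obstacle, is the case where adding or removing the edge takes $d_u$ or $d_v$ between $0$ and $1$, since then $\hat\rho_i$ is $0/0$. I would handle it by adopting the convention that an isolated node has an empty row, so $\hat\rho_i=0$. This keeps $\hat\rho_i\in[0,1]$ in all cases, and therefore $|\hat\rho_i(E)-\hat\rho_i(E')|\le 1$ for $i\in\{u,v\}$.

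Combining the two bounds with the triangle inequality gives
\[
|S_{1,n}(E)-S_{1,n}(E')|\le \frac{1}{1-2p}\cdot 2\cdot\frac{1-p}{1-2p}\cdot 1=\frac{2(1-p)}{(1-2p)^2},
\]
which is the claimed bound. It is uniform in $n$ and in the graph, and it does not require \cref{ass:bounded-degree}.
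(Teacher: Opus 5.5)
Your proposal is correct and follows essentially the same route as the paper's proof. Both localize the change to the $u$ and $v$ summands, bound $|w_i|\le (1-p)/(1-2p)$, and use $\hat\rho_i\in[0,1]$ to get $|\tilde\rho_i(E)-\tilde\rho_i(E')|\le 1/(1-2p)$, with your explicit handling of the degree-zero case via the convention $\hat\rho_i=0$ (as in Algorithm~\ref{alg:dp-connectedness}) being a detail the paper leaves implicit.
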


We next study the asymptotic behavior of the proposed estimator.
Although $\widetilde C^{\mathcal{A}\to \mathcal{B}}$ is constructed from unbiased components,
it is a ratio estimator and the final release additionally includes
edge-DP noise. It is therefore not immediate that the private estimator
remains well behaved as $n$ grows.

The following theorem establishes consistency and shows that
the added edge-DP noise is asymptotically negligible (proved in \cref{app:proof:thm:consistency_dp}).

\begin{theorem}[Consistency of the debiased private estimator]
\label{thm:consistency_dp}
Suppose Assumptions~\ref{ass:piA} and~\ref{ass:bounded-degree} hold, with $p\in(0,\tfrac12)$.
Let $\widetilde C^{\mathcal{A}\to \mathcal{B}}=S_{1,n}/S_{0,n}$ be the debiased H\'ajek-type estimator in \eqref{eq:hajek}.  
Define the edge-DP release
\[
\widehat C^{\mathcal{A}\to \mathcal{B}}_{\mathrm{DP}}
:= \widetilde C^{\mathcal{A}\to \mathcal{B}} + Z_n,
\qquad
Z_n \sim \mathrm{Lap}\!\left(0,\frac{2(1-p)}{(1-2p)^2\,\varepsilon_e\, S_{0,n}}\right),
\]
where $Z_n$ is independent of the randomized-response perturbations.
Then, for any fixed $\varepsilon_e>0$,
\[
\widehat C^{\mathcal{A}\to \mathcal{B}}_{\mathrm{DP}} \;\toP\; C^{\mathcal{A}\to \mathcal{B}},
\; \text{as } n\to\infty.
\]
\end{theorem}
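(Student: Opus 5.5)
The plan is to normalize by $n$ and show that both $S_{0,n}/n$ and $S_{1,n}/n$ concentrate around deterministic limits whose ratio is $C^{\cA\to\cB}$. Then I will handle the Laplace term separately. Write $x_i := \1\{l_i = a\}$. Since $\E[\1\{\hat l_i=a\}] = x_i(1-p) + (1-x_i)p$, we get $\E[w_i] = x_i$. So $\E[S_{0,n}] = \#(\cA)$, and by Assumption~\ref{ass:piA}, $\E[S_{0,n}]/n \to \pi_\cA > 0$. The $w_i$ are independent and bounded by $(1-p)/(1-2p)$, so $\Var(S_{0,n}/n) = O(1/n)$. Chebyshev then gives $S_{0,n}/n \toP \pi_\cA$; in particular $\Pp(S_{0,n} > n\pi_\cA/2) \to 1$, which is what keeps the denominator away from zero.

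For the numerator, the key identity is
\[
\E[S_{1,n}] = \sum_{i} \E[w_i \tilde\rho_i] = \sum_{i\in\cA} \rho_i .
\]
I will verify it by expanding $w_i \tilde\rho_i = \sum_j a_{ij} w_i \tilde u_j$, where $\tilde u_j := (\1\{\hat l_j=b\}-p)/(1-2p)$. For $j \ne i$, independence of the randomized-response flips gives $\E[w_i \tilde u_j] = x_i \1\{l_j=b\}$. The diagonal term $j=i$ only matters if $e_{ii}=1$. I will assume there are no self-loops, so $a_{ii}=0$; if self-loops were allowed, this term would contribute an $O(1)$ bias per node and would need a separate remark. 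Hence $\E[S_{1,n}]/n = \#(\cA)C^{\cA\to\cB}/n$.

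For the variance I will use bounded dependence. The summand $Y_i := w_i\tilde\rho_i$ depends only on the flips of $i$ and $N(i)$. Under Assumption~\ref{ass:bounded-degree}, $Y_i$ and $Y_k$ are independent unless $\{i\}\cup N(i)$ and $\{k\}\cup N(k)$ intersect. For fixed $i$ that happens for at most $(\Delta+1)^2$ indices $k$. Each $|Y_i|$ is bounded by a constant $c_p$ depending only on $p$, because $\sum_j a_{ij} = 1$ gives $|\tilde\rho_i| \le (1-p)/(1-2p)$. Therefore $\Var(S_{1,n}) \le n(\Delta+1)^2 c_p^2$, and Chebyshev gives
\[
S_{1,n}/n - \#(\cA)C^{\cA\to\cB}/n \toP 0 .
\]
Combining the two limits with the continuous mapping theorem (or Slutsky) yields $\widetilde C^{\cA\to\cB} - C^{\cA\to\cB} \toP 0$. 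Here $C^{\cA\to\cB}$ may itself depend on $n$ but lies in $[0,1]$, so I will phrase the argument as the difference going to zero, using
\[
\frac{S_{1,n}}{S_{0,n}} - \frac{\#(\cA)C}{\#(\cA)} = \frac{S_{1,n}-\#(\cA)C}{S_{0,n}} + \#(\cA)C\,\frac{\#(\cA)-S_{0,n}}{S_{0,n}\#(\cA)}
\]
together with the lower bound $S_{0,n} \ge n\pi_\cA/2$ on the high-probability event.

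For the privacy noise, conditional on $S_{0,n}$ the scale is $b_n = 2(1-p)/((1-2p)^2\varepsilon_e S_{0,n})$. On the event $S_{0,n} > n\pi_\cA/2$ we have $b_n = O(1/n)$. Then
\[
\Pp(|Z_n|>\eta) \le \E\bigl[e^{-\eta/b_n}\bigr] + \Pp(S_{0,n}\le n\pi_\cA/2) \to 0 .
\]
Note that $Z_n$ is only defined when $S_{0,n}>0$, so it can be set arbitrarily on the vanishing-probability complement. Adding this to the previous step completes the argument.

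The step I expect to be the main obstacle is the variance bound for $S_{1,n}$. The terms are products of dependent debiased quantities, so I must check both that cross-covariances vanish outside distance-two neighborhoods and that the per-term bound is uniform. I will handle it with the dependency-neighborhood counting above rather than computing exact covariances.
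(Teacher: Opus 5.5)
Your proof is correct and has the same skeleton as the paper's. Both use the identities $\E[S_{0,n}]=\#(\cA)$ and $\E[S_{1,n}]=\sum_{i\in\cA}\rho_i$, concentration of $S_{0,n}/n$ and $S_{1,n}/n$, control of the ratio on $\{S_{0,n}>n\pi_\cA/2\}$, and the Laplace tail bound on that event.

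The routes differ in the concentration step for $S_{1,n}$. The paper proves a bounded-differences lemma: flipping one $\hat l_k$ moves $S_{1,n}$ by at most $\frac{2W}{1-2p}\bigl(1+\sum_i a_{ik}\bigr)$. It feeds this into McDiarmid's inequality, with Hoeffding for $S_{0,n}$, and gets exponential tails without computing any covariances. You instead bound $\Var(S_{1,n})\le n(\Delta+1)^2c_p^2$ by counting pairs at graph distance at most two, then apply Chebyshev. Your route is more elementary and makes explicit the distance-two locality, which the paper only exploits later in the asymptotic-normality proof. The cost is polynomial rather than exponential tails, which does not matter for consistency.

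Your write-up is also more careful than the paper's in two places:
\begin{itemize}
\item You state the no-self-loop condition $a_{ii}=0$. The paper's ``Similarly'' step for $\E[w_i\tilde\rho_i\mid E,\mathbf L]=\1\{l_i=a\}\rho_i$ needs this condition but does not say so.
\item Your difference decomposition handles the fact that $C^{\cA\to\cB}$ changes with $n$. The paper instead applies the continuous mapping theorem to a moving target.
\end{itemize}

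One small fix: in the Laplace step, write $\E\bigl[e^{-\eta/b_n}\1\{S_{0,n}>n\pi_\cA/2\}\bigr]$ rather than $\E\bigl[e^{-\eta/b_n}\bigr]$. Off that event, $b_n$ need not be a valid positive scale.
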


   We summarize our approach in Algorithm \ref{alg:dp-connectedness}.

\begin{algorithm}[H]
\caption{Differentially Private Cross-Type Connectedness Index}
\label{alg:dp-connectedness}
\begin{algorithmic}[1]
\Require Network $(\mathcal{V},\mathcal{E})$ (with weights $e_{ij}\ge 0$), node labels $l_i\in\{a,b\}$, privacy parameter $\varepsilon_l,\varepsilon_e$.
\Ensure Differentially private connectedness index $\widehat C^{\mathcal{A}\to\mathcal{B}}_{\mathrm{DP}}$

\State $p \gets \dfrac{1}{1+e^{\varepsilon_l}}$

\For{each node $i\in\mathcal{V}$} \Comment{Randomized response on labels}
    \State $\hat{l}_i \gets 
    \begin{cases}
    \text{the opposite value in }\{a,b\}, & \text{with probability } p,\\
    l_i, & \text{with probability } 1-p
    \end{cases}$
\EndFor

\For{each node $i\in\mathcal{V}$} \Comment{Compute debiased weights and individual connectedness}
    \State $\hat{\rho}_i \gets 
    \begin{cases}
    \dfrac{1}{d_i}\sum_{j \in N(i)} \1\{\hat{l}_j=b\}, & d_i>0,\\
    0, & d_i=0
    \end{cases}$
    \State $w_i \gets \dfrac{\1\{\hat{l}_i=a\}-p}{1-2p}$
    \State $\tilde{\rho}_i \gets \dfrac{\hat{\rho}_i-p}{1-2p}$
\EndFor

\State $S_{0,n} \gets \sum_{i\in\mathcal{V}} w_i$
\State $S_{1,n} \gets \sum_{i\in\mathcal{V}} w_i\,\tilde{\rho}_i$
\State $\widehat C^{\mathcal{A}\to\mathcal{B}}_{\mathrm{DP}}
\gets \dfrac{{S}_{1,n}}{S_{0,n}} + Z_n,\quad
Z_n \sim \mathrm{Lap}\!\left(0,\dfrac{2(1-p)}{(1-2p)^2\,\varepsilon_e\, S_{0,n}}\right)$
\Comment{Privatized H\'ajek estimator}

\State \Return $\widehat C^{\mathcal{A}\to\mathcal{B}}_{\mathrm{DP}}$

\end{algorithmic}
\end{algorithm}

Based on Algorithm~\ref{alg:dp-connectedness}, we now state the overall privacy guarantee of the proposed procedure.
The result follows directly from \Cref{thm:edgeadj-composition-epsdelta},
together with the $\varepsilon_\ell$-DP property of the randomized response
and the $\varepsilon_e$ edge-DP calibration of the Laplace mechanism (proved in \cref{app:proof:thm:privacy-guarantee}).

\begin{theorem}[Privacy guarantee of Algorithm~\ref{alg:dp-connectedness}]\label{thm:privacy-guarantee}
Algorithm~\ref{alg:dp-connectedness} satisfies
$(\varepsilon_\ell+\varepsilon_e)$ edge-adjacent differential privacy
with respect to the input $(\mathcal{V},\mathcal{E},\mathbf{L})$.
\end{theorem}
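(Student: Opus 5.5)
We will obtain the result as an instance of \Cref{thm:edgeadj-composition-epsdelta} with $\delta_\ell=\delta_e=0$. This requires identifying the two stages of Algorithm~\ref{alg:dp-connectedness} with $\mathcal M_1$ and $\mathcal M_2$ and checking each privacy property. The first stage is the randomized-response loop producing $\widehat{\mathbf L}$. It acts independently on each coordinate with flip probability $p=1/(1+e^{\varepsilon_\ell})$. By the Fact on randomized response it is $\varepsilon_\ell$-DP with respect to changing a single node attribute. A change in one label affects only that coordinate's output distribution, and the likelihood ratio of the product distribution therefore equals the single-coordinate ratio, bounded by $e^{\varepsilon_\ell}$. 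Since $\mathcal M_1$ passes $(\mathcal V,\mathcal E)$ through unchanged, it has the form required by the theorem.

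The second stage, for each fixed $\widehat{\mathbf L}$, is the map $(\mathcal V,\mathcal E)\mapsto S_{1,n}/S_{0,n}+Z_n$. We must show it is $\varepsilon_e$-DP with respect to adding or removing one edge. With $\widehat{\mathbf L}$ fixed, the weights $w_i$, the constant $p$, and hence $S_{0,n}=\sum_i w_i$ do not depend on $\mathcal E$. The Laplace scale $b=2(1-p)/((1-2p)^2\varepsilon_e S_{0,n})$ is therefore a fixed constant, not a data-dependent one. By Lemma~\ref{lem:S1_edge_sens}, $|S_{1,n}(E)-S_{1,n}(E')|\le 2(1-p)/(1-2p)^2$, so
\[
\left|\frac{S_{1,n}(E)}{S_{0,n}}-\frac{S_{1,n}(E')}{S_{0,n}}\right|\le \frac{2(1-p)}{(1-2p)^2|S_{0,n}|}.
\]
The standard Laplace mechanism argument then gives $\varepsilon_e$-DP: the density ratio at any output is bounded by $\exp(\text{sensitivity}/b)=e^{\varepsilon_e}$.

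Applying \Cref{thm:edgeadj-composition-epsdelta} then yields $(\varepsilon_\ell+\varepsilon_e,0)$, that is, $(\varepsilon_\ell+\varepsilon_e)$ edge-adjacent DP. This holds because the output is exactly $\mathcal M_2\circ\mathcal M_1$, and the final release is a function only of $(\mathcal E,\widehat{\mathbf L})$ and independent noise $Z_n$, never of $\mathbf L$ directly.

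The main obstacle is the bookkeeping around $S_{0,n}$. Three points need care.
\begin{itemize}
\item The noise scale must be taken with $|S_{0,n}|$ to be a valid Laplace scale when $S_{0,n}<0$. We would read the algorithm's scale as using $|S_{0,n}|$, or note that the scale parameter is implicitly its absolute value.
\item The case $S_{0,n}=0$, where the estimator is undefined, needs a convention. Outputting a fixed constant, or any function of $\widehat{\mathbf L}$ alone, is trivially $0$-edge-DP and so preserves the bound.
\item The convention $\hat\rho_i=0$ when $d_i=0$ must be covered by Lemma~\ref{lem:S1_edge_sens}. Adding an edge to an isolated node changes $\hat\rho_i$ from $0$ to a value in $[0,1]$, which is the same magnitude of change the lemma's row-wise argument already accounts for. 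We would verify briefly that the lemma's bound still holds in this case.
\end{itemize}
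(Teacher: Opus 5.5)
Your proposal is correct and follows the paper's proof essentially step for step. Randomized response gives $\varepsilon_\ell$-DP on labels; with $\widehat{\mathbf L}$ fixed, $S_{0,n}$ and hence the Laplace scale are constants, so Lemma~\ref{lem:S1_edge_sens} plus the Laplace mechanism gives $\varepsilon_e$ edge-DP; and \Cref{thm:edgeadj-composition-epsdelta} with $\delta_\ell=\delta_e=0$ combines the two stages. Your extra remarks on using $|S_{0,n}|$ in the scale, fixing a convention for $S_{0,n}=0$, and checking isolated nodes are sound refinements of points the paper leaves implicit.
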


As established in \cref{thm:privacy-guarantee}, Algorithm~\ref{alg:dp-connectedness} provides a clear decomposition of the overall privacy budget: the label-privatization step contributes $\varepsilon_\ell$, while the final edge-private release contributes $\varepsilon_e$. By sequential composition, the entire procedure therefore satisfies $(\varepsilon_\ell+\varepsilon_e)$ edge-adjacent differential privacy.

\begin{theorem}[Asymptotic normality and perturbation order]
\label{thm:relative-variance-orders}
Under the setup of Algorithm~1, suppose Assumptions~1--2 hold, the maximum degree is uniformly bounded, and
\[
\frac{\#(\mathcal{A})}{n}\to \pi_\cA\in(0,1).
\]
Let
\[
\hat\theta_n := \frac{S_{1,n}}{S_{0,n}},
\qquad
\theta_n := \frac{\E[S_{1,n}]}{\E[S_{0,n}]},
\]
and define
\[
T_n :=
\begin{pmatrix}
S_{0,n}\\
S_{1,n}
\end{pmatrix}.
\]
Let the final private release be
\[
\hat\theta_n^{DP}:=\hat\theta_n+Z_n,
\]
where
\[
Z_n\mid S_{0,n}\sim \mathrm{Lap}\!\left(0,\frac{2(1-p)}{(1-2p)^2\varepsilon_e S_{0,n}}\right).
\]

Under Assumptions~\ref{ass:piA}--\ref{ass:bounded-degree}, it follows that
\[
\frac{1}{n}\Cov(T_n)\to \Sigma
\]
for some finite matrix $\Sigma$, and that the asymptotic variance
\[
\sigma^2=\nabla g(\mu_0,\mu_1)^\top \Sigma\,\nabla g(\mu_0,\mu_1)
\]
is well defined, where $g(x,y)=y/x$ and
\[
(\mu_0,\mu_1)=\lim_{n\to\infty}\E\!\left[\frac{T_n}{n}\right],
\qquad \mu_0>0.
\]
Then
\[
\sqrt n(\hat\theta_n-\theta_n)\Rightarrow \mathcal N(0,\sigma^2).
\]
In addition,
\[
\Var(Z_n)=\Theta(n^{-2}).
\]
Therefore, the additional edge-private Laplace perturbation is asymptotically negligible relative to the root-$n$ fluctuation scale of the privatized ratio estimator.
\end{theorem}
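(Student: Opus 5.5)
The plan has three parts: a central limit theorem for $T_n$, the delta method for the ratio, and a direct variance computation for the Laplace term.

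\textbf{Step 1: CLT for $T_n$.} Write $T_n=\sum_{i\in V}X_i$ with $X_i=(w_i,\,w_i\tilde\rho_i)^\top$. The only randomness is the randomized-response flips $(\hat l_j)_{j\in V}$, which are independent across nodes. Each $X_i$ is a function of $\hat l_i$ and $\{\hat l_j: j\in N(i)\}$ only. Hence $X_i$ and $X_k$ are independent unless the closed neighborhoods of $i$ and $k$ intersect, i.e.\ unless $i,k$ are within graph distance $2$.

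By Assumption~\ref{ass:bounded-degree}, the dependency graph on $\{X_i\}$ has maximum degree at most $\Delta+\Delta^2$. The summands are uniformly bounded, since $|w_i|\le (1-p)/(1-2p)$ and $|\tilde\rho_i|\le (1-p)/(1-2p)$.

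\textbf{Step 2: covariance and limiting normality.} Bounded summands and bounded dependency degree give $\Cov(T_n)=O(n)$ entrywise. The existence of the limit $\Sigma=\lim n^{-1}\Cov(T_n)$, and of $(\mu_0,\mu_1)$, is taken as part of the stated hypotheses. This is how I read ``it follows that'': the limits are assumed to exist along the sequence of networks.

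Note that $\E[S_{0,n}]=\#(\mathcal A)$ by unbiasedness of $w_i$, so $\mu_0=\pi_\cA>0$ by Assumption~\ref{ass:piA}.

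Apply the Cramér--Wold device. For each fixed $c\in\mathbb R^2$, $c^\top T_n$ is a sum of bounded, locally dependent variables. A Stein-method CLT for dependency graphs (Baldi--Rinott, or Chen--Shao) gives
\[
n^{-1/2}\bigl(c^\top T_n-\E c^\top T_n\bigr)\Rightarrow \mathcal N(0,c^\top\Sigma c).
\]
If $c^\top\Sigma c>0$, the Berry--Esseen-type bound is $O\bigl(D^2 n\,/\,(c^\top\Sigma c\, n)^{3/2}\bigr)\to0$. If $c^\top\Sigma c=0$, Chebyshev's inequality gives convergence to $0$ directly. Hence $n^{-1/2}(T_n-\E T_n)\Rightarrow\mathcal N(0,\Sigma)$.

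\textbf{Step 3: delta method for the ratio.} Set $g(x,y)=y/x$. Then
\[
\hat\theta_n-\theta_n=g(T_n/n)-g(\E T_n/n).
\]
Expand $g$ around $\E T_n/n$, which converges to $(\mu_0,\mu_1)$ with $\mu_0>0$. Chebyshev gives $S_{0,n}/n\to\mu_0$ in probability, so $S_{0,n}$ is eventually bounded away from zero. The mean value theorem then yields
\[
\sqrt n(\hat\theta_n-\theta_n)=\nabla g(\xi_n)^\top n^{-1/2}(T_n-\E T_n)
\]
with $\xi_n\toP(\mu_0,\mu_1)$. Continuity of $\nabla g$ near $(\mu_0,\mu_1)$ and Slutsky's lemma give the limit $\mathcal N(0,\sigma^2)$.

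\textbf{Step 4: the Laplace perturbation.} Conditionally on $S_{0,n}$, $\Var(Z_n\mid S_{0,n})=2b^2/S_{0,n}^2$ with $b=2(1-p)/((1-2p)^2\varepsilon_e)$. The law of total variance then gives
\[
\Var(Z_n)=2b^2\,\E[S_{0,n}^{-2}],
\]
since $\E[Z_n\mid S_{0,n}]=0$.

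I expect this step to be the main obstacle. $S_{0,n}$ can be near zero or negative with small probability, so $\E[S_{0,n}^{-2}]$ need not even be finite unconditionally. I would handle it in two parts.
\begin{itemize}
\item Use Hoeffding's inequality on $S_{0,n}$, which is a sum of independent bounded terms, to show $\Pp(S_{0,n}<\mu_0 n/2)\le e^{-cn}$.
\item On the complementary event, $S_{0,n}^{-2}\le 4/(\mu_0^2n^2)$. Together with the convergence $n^2/S_{0,n}^2\toP\mu_0^{-2}$, this gives the $\Theta(n^{-2})$ order.
\end{itemize}
To make this rigorous, I would interpret the variance statement on the event $\{S_{0,n}\ge \mu_0 n/2\}$, whose probability tends to one, or equivalently define the release with $S_{0,n}$ truncated below at $\mu_0 n/2$. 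The lower bound $\Theta$ follows from $S_{0,n}\le n(1-p)/(1-2p)$ deterministically.

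\textbf{Conclusion.} Finally, $\sqrt n\,Z_n=O_p(n^{-1/2})\to0$. By Slutsky's lemma, $\hat\theta_n^{DP}$ has the same root-$n$ limit as $\hat\theta_n$, which proves the negligibility claim.
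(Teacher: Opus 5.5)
Your proof is correct, but the CLT step takes a different route from the paper's. The paper writes $T_n-\E[T_n]$ as a sum of Doob martingale differences over the filtration generated by the randomized-response bits $I_1,\dots,I_n$. It bounds these differences using the locality of $w_i$ and $\tilde\rho_i$, then applies a martingale CLT. The substantive work there is showing that the predictable quadratic variation $V_n$ concentrates, which it does by a McDiarmid argument (one bit affects only boundedly many $\Psi_{n,k}$). It also derives explicit covariances of the per-node vectors at graph distances $0$, $1$ and $2$.

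You instead observe that the summands have a bounded-degree dependency graph, and you apply a Stein-method CLT through Cram\'er--Wold, handling degenerate directions by Chebyshev. Your route is shorter, yields a rate, and avoids the conditional-variance verification altogether. The paper's route is self-contained modulo the martingale CLT, and its closed-form local covariances give a plug-in expression for $\Sigma_n$ usable for inference. (Baldi--Rinott's Kolmogorov bound is the square root of the quantity you display; the $O(n^{-1/2})$ form is Rinott's refinement or Chen--Shao's. Either suffices.)

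Both arguments take $n^{-1}\Cov(T_n)\to\Sigma$ as an input; the paper's proof says ``If, in addition, $\Sigma_n\to\Sigma$''. So your reading of ``it follows that'' matches what is actually proved. Indeed, $n^{-1}\Cov(S_{0,n},S_{1,n})$ contains $n^{-1}\sum_i\rho_i$, which Assumptions~1--2 do not force to converge.

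On the Laplace term you are more careful than the paper. The paper expresses $\Var(Z_n)$ as a multiple of $\E[S_{0,n}^{-2}\mathbf 1\{S_{0,n}>0\}]$ but bounds only the part on $G_n$. On $G_n^c\cap\{S_{0,n}>0\}$, the quantity $S_{0,n}=(\sum_i I_i-np)/(1-2p)$ can be as small as the distance from $np$ to the next integer above it. So the Hoeffding bound on $\Pr(G_n^c)$ alone does not control that piece. Your restriction to $\{S_{0,n}\ge \mu_0 n/2\}$ is exactly what that computation establishes. Prefer it to truncation, which is not equivalent and changes the released statistic; if applied only inside the Laplace scale, it would under-noise precisely when $S_{0,n}$ is small.

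Your deterministic lower bound via $S_{0,n}\le n(1-p)/(1-2p)$ is cleaner than the paper's. Your closing Slutsky step, giving $\sqrt n(\hat\theta_n^{DP}-\theta_n)\Rightarrow\mathcal N(0,\sigma^2)$, makes the negligibility claim more precise than the paper's comparison of variance orders.
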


As established in \cref{thm:relative-variance-orders}, the dominant source of uncertainty comes from the label randomization of $\hat\theta_n$, rather than from the additional edge-private perturbation. Since the variance of the Laplace noise is of order $n^{-2}$, it does not affect the first-order asymptotic distribution of the released estimator.

    \section{Differential Privacy for Continuous Labels: Regressing Friends Rank on Own Rank}



    While our results so far have focused on network connectedness indices with binary labels, we expect that our method of applying noise directly to sensitive attributes and calculating statistics from private input vectors will be helpful for a range of statistics relating to labeled networks. We illustrate this with applications of our approach to statistics being developed for release in forthcoming papers that make use of continuous bounded labels for nodes in the network, as opposed to the discrete labels we considered before. Specifically, these papers construct statistics relating to the average friend label among nodes with a given rank or among nodes falling within a given set of ranks. More generally, our techniques will be useful for providing differentially private answers to questions such as whether less liquid firms tend to also rely on less liquid suppliers, whether green firms tend to source materials from other green firms, or whether students display homophily on test scores.  
    
    First, assign all individuals $j$ in the dataset a characteristic $x_j$ in a connected and bounded domain in $\mathbb{R}$, without loss of generality we can take this domain to be $[0,1]$. $x_j$ can, for example, be a continuous socioeconomic status indicator, as in the above papers, but more generally will often be a percentile rank on some numerical characteristic. 

    For a set of users $\mathcal{A}$ we can consider the \textit{average friend rank} of individuals in a set-cell combination. Let $y_i$ denote the average friend rank of individual $i$: 
    \begin{align*}
        y_i = \frac{\sum_{j \in N(i)} x_j}{d_i}.
    \end{align*}
    Consider a regression of $y_i$ on $x_i$, and let $\tilde{\alpha}$ and $\tilde{\beta}$ denote estimates of the intercept and slope of this regression respectively. 

    \begin{definition}[Mean Average Friend Rank]
        The mean average friend rank of individuals in a set $\mathcal{A}$ is: 
        \begin{align}
            \textup{MAFR}_s ^\mathcal{A} = \frac{\tilde{\alpha} + \tilde{\beta}\int_\mathcal{A} x \, \mathrm{d}x}{\int_\mathcal{A} 1 \, \mathrm{d}x}.
            \label{eq:MAFR}
        \end{align}
    \end{definition}
    That is, we first find the average rank of friends for each individual $i \in \mathcal{A}$, and then take the average of this over all individuals in $\mathcal{A}$. The term ``mean average friend rank'' captures the double average involved in \cref{eq:MAFR}; first over friends of particular individuals, then over individuals in the relevant set).

    All of the statistics we consider in this section can be derived from the regression line where the $x$ variable is a node's label and the $y$ variable is the average label of a node's connections. By the postprocessing theorem, the problem then becomes to construct a differentially private regression line. 

First, recall that only measurement error (induced here by the differential privacy procedure) in the $x$-variable biases a regression line, whereas zero-mean noise applied to the $y$-variable will increase the standard errors associated with the regression output but not affect its expectation.

To correct for this bias, suppose
\[
y_i=\alpha+\beta x_i+\nu_i,
\qquad
\mathbb{E}[\nu_i\mid x_i]=0,
\]
and let
\[
\hat y_i = y_i+\eta_i,
\qquad
\hat x_i = x_i+u_i,
\]
where
\[
\mathbb{E}[\eta_i\mid y_i]=0,
\qquad
\mathbb{E}[u_i\mid x_i]=0,
\qquad
\mathbb{E}[u_i^2\mid x_i]=\sigma^2.
\]
If $\beta^*$ denotes the slope coefficient from the regression of $\hat y$ on $\hat x$, then the standard errors-in-variables correction is
\begin{equation}
\tilde{\beta}
=
\beta^*
\frac{\left(\frac{1}{n-1}\right)\sum_i (\hat x_i-\bar{\hat x})^2}
{\left(\frac{1}{n-1}\right)\sum_i (\hat x_i-\bar{\hat x})^2-\sigma^2}
\label{eq:regression-debiasing}
\end{equation}
which is consistent for the true regression coefficient $\beta$. The corresponding intercept estimator is
\[
\tilde{\alpha}=\bar y-\tilde{\beta}\bar x,
\]
which is consistent for $\alpha$. We apply this correction to the privatized regression output below. For completeness, we state the result formally in \cref{prop:noisystats} and provide a proof in Appendix~\ref{app:prop:debias-reg-coefficient-consistent}; see also \citet{greene2012econometric} for related discussions.

We now describe the privacy mechanism, which has two components. First, since $x_j \in [0,1]$, we privatize each individual characteristic by adding truncated Laplace noise, obtaining a privacy-protected rank $\hat{x}_j$. Following \citet{geng2020tight}, this mechanism is calibrated to the desired privacy level and keeps the perturbed characteristic bounded, unlike the standard Laplace mechanism with unbounded support. This first step protects the sensitive node labels themselves.

Second, we compute the regression using the privatized covariates and the corresponding private friend-rank outcomes, and then apply Algorithm~\ref{alg:noisystats-bounded} to the resulting bounded regression problem. The second step adds privacy protection at the regression stage; combined with the first-step perturbation of the node labels, this yields released regression coefficients that satisfy edge-adjacent differential privacy.

Applying bounded noise in the first step is especially useful here, since the privacy guarantees for bounded linear regression require the covariates and outcomes to remain in a bounded range \citep{holohan2018bounded}, and the scale of the downstream noise added by Algorithm~\ref{alg:noisystats-bounded} depends on those bounds.

From here, we can form an average friend rank for each individual $i$ that does not leak information about any of the $x_j$ as: 
    \begin{align*}
        \widehat{\text{AFR}}_i = \frac{\sum_{j \in N(i)} \hat{x}_j}{d_i}
    \end{align*}
    Note that these private friend ranks may also not lie between $[0,1]$ since $\hat{x}_j$ may not lie between $[0,1]$. However, since we apply noise from a bounded domain, $\widehat{\text{AFR}}_i$ is still bounded, and so the sensitivity of the regression slope and intercept is bounded as well \citep{alabi2022differentially}.  

    We summarize our approach in Algorithm~\ref{alg:dp-mafr-truncated}. 

\begin{algorithm}[H]
\caption{Differentially Private Mean Average Friend Rank}
\label{alg:dp-mafr-truncated}
\begin{algorithmic}[1]
\Require Node attributes $\{x_i\}_{i\in\mathcal V}$ with $x_i\in[0,1]$, network $(\mathcal V,\mathcal E)$, 
privacy parameters $(\varepsilon_\ell, \delta_\ell)$ and $\varepsilon_e$
\Ensure Differentially private mean average friend rank $\text{MAFR}^{\mathcal{A}}_{\mathrm{DP}}$

\State Set $\Delta=1$, $\lambda \gets \frac{\Delta}{\varepsilon_\ell}$,
$A \gets \frac{\Delta}{\varepsilon_\ell} \log\!\left(1 + \frac{e^{\varepsilon_\ell} - 1}{2\delta_\ell}\right)$,
$B \gets \frac{1}{2\lambda(1 - e^{-A/\lambda})}$.

\For{$i\in\mathcal V$}
    \State Sample $z_i$ from the truncated Laplace distribution 
    with density $f(x)=Be^{-|x|/\lambda}$ on $[-A,A]$,
    and compute $\hat{x}_i \gets x_i + z_i$.
\EndFor

\For{$i\in\mathcal V$}
    \State Compute
    $\hat{y}_i \gets \frac{1}{d_i}\sum_{N(i)}\hat{x}_j$
    (set $\hat{y}_i=0$ if $d_i=0$).
\EndFor

\State Apply Algorithm~\ref{alg:noisystats-bounded}
to $\{(\hat{x}_i,\hat{y}_i)\}_{i\in\mathcal V}$ with privacy parameter $\varepsilon_e$
to obtain $(\hat{\alpha},\hat{\beta})$.

\State Apply the debiasing method described in \Cref{eq:regression-debiasing}
to $(\hat{\alpha},\hat{\beta})$ to obtain $(\tilde{\alpha},\tilde{\beta})$.

\State Construct $\text{MAFR}^{\mathcal{A}}_{\mathrm{DP}}$ from $(\tilde{\alpha},\tilde{\beta})$ and return it.
\end{algorithmic}
\end{algorithm}

    Our estimator of the two coefficients for the simple regression line is consistent. 

    \begin{proposition}\label{prop:noisystats}
        The estimate of the regression slope produced by Algorithm~\ref{alg:dp-mafr-truncated} is consistent for the true regression coefficient. 
    \end{proposition}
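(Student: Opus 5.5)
The plan is to write the slope produced by Algorithm~\ref{alg:dp-mafr-truncated} as the classical errors-in-variables slope plus a perturbation from the regression-stage noise, and then show each piece converges. Algorithm~\ref{alg:noisystats-bounded} (a sufficient-statistics perturbation in the style of \citet{alabi2022differentially}) returns
\[
\beta^* = \frac{\widehat{S}_{xy}/n + \xi_{xy}/n}{\widehat{S}_{xx}/n + \xi_{xx}/n},
\]
where $\widehat S_{xy}=\sum_i(\hat x_i-\bar{\hat x})(\hat y_i-\bar{\hat y})$, $\widehat S_{xx}=\sum_i(\hat x_i-\bar{\hat x})^2$, and the $\xi$'s are Laplace noises. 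Because $\hat x_i\in[-A,1+A]$ and $\hat y_i$ lies in the same interval, the sufficient statistics have edge-sensitivity $O(1)$ when the covariates are fixed. The noise scales are therefore $O(1/\varepsilon_e)$, not growing with $n$, so $\xi/n\toP 0$. This reduces the problem to the non-private empirical moments, and the continuous mapping theorem handles the ratio once the denominator limit is positive.

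Next I would identify the limits of the empirical moments. Write $\hat x_i=x_i+z_i$ with the $z_i$ i.i.d.\ truncated Laplace, mean zero and variance $\sigma^2$, where $\sigma^2$ is known in closed form from $(\lambda,A)$. Then
\[
\frac1n\widehat S_{xx} = \frac1n S_{xx} + \frac2n\sum_i (x_i-\bar x)z_i + \frac1n\sum_i (z_i-\bar z)^2 ,
\]
and the weak law gives $\frac1n\widehat S_{xx}-\frac1n S_{xx}-\sigma^2\toP 0$, since the $z_i$ are bounded and independent. For the cross moment, $\hat y_i = y_i + \bar z_{N(i)}$, where $\bar z_{N(i)}=d_i^{-1}\sum_{j\in N(i)}z_j$ and $y_i=\alpha+\beta x_i+\nu_i$. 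Expanding $\frac1n\widehat S_{xy}$ gives the target $\beta\frac1nS_{xx}$, a term involving $\nu$ that vanishes by the regression model, and several noise terms. Two of these, $\frac1n\sum_i z_i\bar z_{N(i)}$ and $\frac1n\sum_i x_i\bar z_{N(i)}$, involve dependent summands. I would control them with a variance bound under Assumption~\ref{ass:bounded-degree}. Each summand depends on at most $\Delta+1$ noise variables, so the dependency graph has bounded degree and the variance of the average is $O(\Delta^2/n)$; Chebyshev then gives concentration around the mean. The mean of $\frac1n\sum_i x_i\bar z_{N(i)}$ is zero.

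\textbf{Main obstacle.} The mean of $\frac1n\sum_i z_i\bar z_{N(i)}$ is $\frac1n\sum_i \sigma^2\,\1\{i\in N(i)\}/d_i$, which is zero when there are no self-loops, so $\frac1n\widehat S_{xy}\toP \beta\,\mathrm{plim}\,\frac1nS_{xx}$. This step needs the convention that a node is not its own friend, and I would state it explicitly.

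Combining the pieces, $\beta^*\toP \beta s^2/(s^2+\sigma^2)$, where $s^2=\mathrm{plim}\frac1nS_{xx}>0$; I would add an assumption that the covariate variance is non-degenerate. The correction \eqref{eq:regression-debiasing} multiplies $\beta^*$ by $\widehat S_{xx}/(\widehat S_{xx}-(n-1)\sigma^2)$, which converges to $(s^2+\sigma^2)/s^2$. By continuous mapping, $\tilde\beta\toP\beta$. If the debiasing uses the privatized $\widehat S_{xx}+\xi_{xx}$, it has the same limit by the first step.

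The intercept then follows because $\bar{\hat y}-\bar y\toP0$ and $\bar{\hat x}-\bar x\toP 0$.
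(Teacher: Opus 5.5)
Your proposal is correct and follows essentially the same route as the paper. The paper likewise shows that the \NoisyStats{} Laplace perturbations are $O_p(1)$ against $\Theta_p(n)$ sufficient statistics, gets the attenuated limit $\beta s^2/(s^2+\sigma^2)$ from the privatized moments, and recovers $\beta$ by the errors-in-variables correction plus continuous mapping. The one real difference is that you instantiate the outcome noise as $\eta_i=\bar z_{N(i)}$ and control the dependent cross sums with a bounded-degree variance bound and the no-self-loop condition, whereas the paper's generic lemma assumes only $\E[\eta_i\mid y_i]=0$ and appeals to a law of large numbers, so your version makes explicit the conditions (bounded degree, $i\notin N(i)$) that the paper's argument implicitly needs.
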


This proposition shows that the privacy mechanism does not affect the large-sample validity of the slope estimator. In particular, after accounting for the perturbation introduced by the truncated-Laplace mechanism and the additional noise injected by \NoisyStats, the final debiased estimator remains consistent for the true regression coefficient.

   \begin{algorithm}[H]
\caption{Generalized \NoisyStats: $(\varepsilon,0)$-DP Algorithm for Bounded Inputs}
\label{alg:noisystats-bounded}
\begin{algorithmic}[1]
\Require Data: $\{(\hat{x}_i,\hat{y}_i)\}_{i=1}^n \in [a,b]\times[a',b']$
\Require Privacy parameter: $\varepsilon$
\Ensure Differentially private regression coefficients $\tilde{\alpha},\tilde{\beta}$

\State Compute $\bar{x} = \frac{1}{n}\sum_i \hat{x}_i$, \quad $\bar{y} = \frac{1}{n}\sum_i \hat{y}_i$

\State Compute 
\[
\textup{nvar}(x) = \sum_i (\hat{x}_i-\bar{x})^2, 
\qquad
\textup{ncov}(x,y) = \sum_i (\hat{x}_i-\bar{x})(\hat{y}_i-\bar{y})
\]

\State Set sensitivity bounds:
\[
\Delta_1 = \left(1-\frac{1}{n}\right)(b-a)^2, 
\qquad
\Delta_2 = 2\left(1-\frac{1}{n}\right)(b-a)(b'-a')
\]

\State Sample $L_1 \sim \mathrm{Lap}(0, 3\Delta_1/\varepsilon)$
\State Sample $L_2 \sim \mathrm{Lap}(0, 3\Delta_2/\varepsilon)$

\If{$\textup{nvar}(x) + L_2 > 0$}
    \State $\hat{\beta} = \dfrac{\textup{ncov}(x,y) + L_1}{\textup{nvar}(x) + L_2}$

    \State Set $\Delta_3 = \frac{b'-a'}{n} + |\hat{\beta}|\,\frac{b-a}{n}$

    \State Sample $L_3 \sim \mathrm{Lap}(0, 3\Delta_3/\varepsilon)$

    \State $\hat{\alpha} = (\bar{y} - \hat{\beta}\bar{x}) + L_3$

    \State \Return $\hat{\alpha}, \hat{\beta}$
\Else
    \State \Return $\perp$
\EndIf
\end{algorithmic}
\end{algorithm}

The previous results establish the statistical validity of the proposed estimator,
showing that the debiased private estimator remains consistent.
We now turn to the privacy analysis of the mechanism.
In particular, we characterize the formal differential privacy guarantee
under the edge-adjacent neighboring relation (proof in \cref{app:thm:privacy-guarantee-continuous}).

\begin{theorem}[Privacy guarantee of Algorithm~\ref{alg:dp-mafr-truncated}]
\label{thm:privacy-guarantee-continuous}
Algorithm~\ref{alg:dp-mafr-truncated} satisfies
$(\varepsilon_\ell+\varepsilon_e,\delta_\ell)$ edge-adjacent differential privacy
with respect to the input $(\mathcal{V},\mathcal{E},\mathbf{L})$.
\end{theorem}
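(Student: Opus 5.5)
The proof will apply \Cref{thm:edgeadj-composition-epsdelta} with $\mathcal{M}_1$ the truncated-Laplace perturbation of the node attributes and $\mathcal{M}_2(\cdot,\widehat{\mathbf L})$ the map from the edge set to $(\hat\alpha,\hat\beta)$ produced by Algorithm~\ref{alg:noisystats-bounded} on $\{(\hat x_i,\hat y_i)\}$. The debiasing step and the construction of $\text{MAFR}^{\mathcal A}_{\mathrm{DP}}$ are then handled by postprocessing. If we show that $\mathcal{M}_1$ is $(\varepsilon_\ell,\delta_\ell)$-DP with respect to a change in one attribute, and that $\mathcal{M}_2$ is $(\varepsilon_e,0)$-DP with respect to a change in one edge for every fixed $\widehat{\mathbf L}$, the theorem gives $(\varepsilon_\ell+\varepsilon_e,\delta_\ell+e^{\varepsilon_\ell}\cdot 0)=(\varepsilon_\ell+\varepsilon_e,\delta_\ell)$, which is the claim.

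\textbf{Step 1 (labels).} Changing one node's attribute moves $x_u\in[0,1]$ by at most $\Delta=1$. Only the $u$-th coordinate of $\widehat{\mathbf L}$ depends on $x_u$, and the noise draws are independent across nodes. So the privacy loss reduces to the one-dimensional truncated Laplace mechanism with $\lambda=\Delta/\varepsilon_\ell$ and truncation level $A$ as in Algorithm~\ref{alg:dp-mafr-truncated}. For this mechanism we will invoke \citet{geng2020tight}, who show it is $(\varepsilon_\ell,\delta_\ell)$-DP with exactly this choice of $A$. For intuition: on the overlap of the two supports the density ratio is at most $e^{\varepsilon_\ell}$, and the mass outside the overlap is $\int_{A-1}^{A}Be^{-x/\lambda}dx$, which the choice of $A$ makes at most $\delta_\ell$.

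\textbf{Step 2 (edges).} Fix $\widehat{\mathbf L}$, so $\hat x_i\in[-A,1+A]$ is fixed. Each $\hat y_i$ is an average of $\hat x_j$ over neighbours, so it also lies in $[-A,1+A]$; for $d_i=0$ we set $\hat y_i=0$, which is inside this interval. Algorithm~\ref{alg:noisystats-bounded} is therefore run with known bounds $a=a'=-A$ and $b=b'=1+A$, and these bounds do not depend on the data. Adding or removing an edge $(u,v)$ leaves every $\hat x_i$ unchanged and changes only $\hat y_u$ and $\hat y_v$.

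This is the main obstacle. The sensitivities $\Delta_1,\Delta_2,\Delta_3$ in \NoisyStats\ are calibrated for changing \emph{one} record $(x_i,y_i)$, whereas one edge changes two $y$-coordinates. The plan is as follows.
\begin{itemize}
\item $\textup{nvar}(x)$ does not change at all under an edge change.
\item For $\textup{ncov}(x,y)=\sum_i(\hat x_i-\bar x)\hat y_i$, each changed $\hat y$ contributes at most $|\hat x_i-\bar x|\,(b'-a')\le(1-\tfrac1n)(b-a)(b'-a')$. The two changes together are therefore within the budget $\Delta_2=2(1-\tfrac1n)(b-a)(b'-a')$.
\item $\bar y$ changes by at most $2(b'-a')/n$.
\end{itemize}
The intercept bound therefore needs care. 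If $\Delta_3$ as stated is too small, I would either double the $y$-part of $\Delta_3$ or argue that the factor-$3$ budget split across $L_1,L_2,L_3$ absorbs it. The intended conclusion of this step is that each Laplace release is $\varepsilon_e/3$-DP, with $\hat\beta$ computed as postprocessing of the first two releases, so basic composition gives $\varepsilon_e$-DP for $(\hat\alpha,\hat\beta)$. The data-dependent scale $\Delta_3$, which involves $|\hat\beta|$, is handled as in \citet{alabi2022differentially}: $\hat\beta$ is already private, so we condition on it.

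\textbf{Step 3 (postprocessing and composition).} The errors-in-variables correction in \cref{eq:regression-debiasing} uses only the public $\sigma^2$ of the truncated Laplace noise and the private outputs, and the same holds for the MAFR formula \cref{eq:MAFR}. Both are therefore postprocessing, and combining Steps 1 and 2 through \Cref{thm:edgeadj-composition-epsdelta} completes the proof.
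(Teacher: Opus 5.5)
Your decomposition is the paper's: \Cref{thm:edgeadj-composition-epsdelta} applied to the truncated-Laplace label mechanism and to \NoisyStats{} on the edges, followed by postprocessing. Step~1 is fine. The paper settles the edge step only by citing Alabi et al.'s Lemma~4 together with \cref{lem:gs_nvar_ncov_general_intervals}. That lemma doubles the $\textup{ncov}$ sensitivity for the two changed $\hat y$'s but never revisits $\Delta_3$, so the obstacle you flag is real and is not handled there. Your Step~2 does not settle it either. The stated target, that each Laplace release is $\varepsilon_e/3$-DP, is false for the intercept. Take $a=a'=-A$, $b=b'=1+A$, and suppose $u,v$ are both isolated with $\hat x_u=\hat x_v=1+A$. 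Adding $(u,v)$ moves $\bar y-\hat\beta\bar x$ by $2(1+A)/n$, which exceeds $\Delta_3=(1+2A)(1+|\hat\beta|)/n$ whenever $|\hat\beta|<1/(1+2A)$. Doubling the $y$-part of $\Delta_3$ would prove privacy of a different algorithm, not the theorem as stated. What works is the reallocation you only gesture at, and it rests on your own first bullet. Because $\textup{nvar}$ is edge-invariant, the $L_2$ release costs nothing. Then $\textup{ncov}$ costs at most $\varepsilon_e/3$ (if noised at scale $3\Delta_2/\varepsilon_e$, as you assume). The intercept costs at most $2\varepsilon_e/3$, since its sensitivity given $\hat\beta$ is at most $2(b'-a')/n\le 2\Delta_3$. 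The total is $\varepsilon_e$; write this non-uniform accounting explicitly.

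Second, Algorithm~\ref{alg:noisystats-bounded} as printed pairs the noise the other way. There $L_1$, of scale $3\Delta_1/\varepsilon$ with $\Delta_1=\mathrm{GS}_{\mathrm{nvar}}$, is added to $\textup{ncov}$, and $L_2$ is added to $\textup{nvar}$. With that pairing, your crude bound $|\delta_i|\le b'-a'$ on the change $\delta_i$ of $\hat y_i$ only gives an $\textup{ncov}$ cost of $2\varepsilon_e/3$, hence a total of $4\varepsilon_e/3$. You have two ways out. Either say explicitly that you read this as a transcription swap (for data in $[0,1]^2$ the two sensitivities coincide, so the pairing is immaterial there), or sharpen the bound. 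If $u$ already had a neighbour then $|\delta_u|\le(b-a)/2$. If $u$ was isolated, $\hat y_u$ jumps from $0$ to $\hat x_v$, so $|\delta_u|\le 1+A$. Each of the two costs is then at most $(|\delta_u|+|\delta_v|)\varepsilon_e/(3(b-a))$. So the total is at most $\varepsilon_e$ provided $2(1+A)\le\tfrac{3}{2}(1+2A)$, i.e.\ $A\ge\tfrac{1}{2}$. This holds whenever $\delta_\ell\le\tfrac{1}{2}$, since then $A\ge 1$. Finally, in Step~3 the correction factor uses $\frac{1}{n-1}\sum_i(\hat x_i-\bar{\hat x})^2$, which is not a released output. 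This is harmless, because it is a function of $\widehat{\mathbf L}$ alone and can be folded into $\mathcal M_2(\cdot,\widehat{\mathbf L})$. The intercept, however, must be corrected from $\hat\alpha$, e.g.\ as $\hat\alpha+(\hat\beta-\tilde\beta)\bar{\hat x}$, rather than recomputed from the raw $\bar y$.
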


The proof follows by analyzing the two stages of the algorithm separately.
The truncated Laplace perturbation ensures
$(\varepsilon_\ell,\delta_\ell)$-differential privacy for node attributes,
while the edge-dependent estimation step satisfies
$\varepsilon_e$ edge-adjacent differential privacy under the
global sensitivity bound.
The result then follows from the composition
property established in \Cref{thm:edgeadj-composition-epsdelta}.

\begin{theorem}[Asymptotic normality of the private debiased regression estimator]
\label{thm:regression-asymp-normal}
Under the continuous-label private regression procedure in
Algorithm~\ref{alg:dp-mafr-truncated}, define
\[
\hat x_i = x_i + z_i,
\qquad
\hat y_i = \frac{1}{d_i}\sum_{j\in N(i)} \hat x_j,
\]
and let
\[
M_n := \sum_{i=1}^n Z_{i,n},
\qquad
Z_{i,n}
:=
\begin{pmatrix}
\hat x_i\\
\hat y_i\\
\hat x_i^2\\
\hat x_i\hat y_i
\end{pmatrix},
\qquad
\bar M_n := \frac{1}{n}M_n
=
\begin{pmatrix}
\bar M_{1,n}\\
\bar M_{2,n}\\
\bar M_{3,n}\\
\bar M_{4,n}
\end{pmatrix}.
\]
Define
\[
\beta_n^*
:=
\frac{\bar M_{4,n}-\bar M_{1,n}\bar M_{2,n}}
{\bar M_{3,n}-\bar M_{1,n}^2},
\qquad
\tilde\beta_n
:=
\frac{\bar M_{4,n}-\bar M_{1,n}\bar M_{2,n}}
{\bar M_{3,n}-\bar M_{1,n}^2-\sigma_z^2},
\]
where \(\sigma_z^2=\Var(z_i)\), and let
\[
h(m_1,m_2,m_3,m_4)
:=
\frac{m_4-m_1m_2}{m_3-m_1^2-\sigma_z^2},
\]
so that \(\tilde\beta_n = h(\bar M_n)\).
Let \(\hat\beta_n\) denote the final released estimator after the
\NoisyStats perturbation in Algorithm~\ref{alg:noisystats-bounded}.

Suppose that \(\mu_3-\mu_1^2>0\) and \(\mu_3-\mu_1^2-\sigma_z^2>0\), then we have
\[
\mu_n := \E[\bar M_n]\to \mu,
\qquad
\frac{1}{n}\Cov(M_n)\to \Sigma,
\]
for some deterministic vector \(\mu=[\mu_1,\mu_2,\mu_3,\mu_4]^T\in\mathbb{R}^4\) and deterministic matrix
\(\Sigma\in\mathbb{R}^{4\times4}\), and moreover \(h(\mu)=\beta\), where \(\beta\) is the true regression coefficient.
Then
\[
\sqrt{n}\,(\tilde\beta_n-\beta)
\;\Rightarrow\;
N(0,\tau^2),
\qquad
\tau^2=\nabla h(\mu)^\top \Sigma \nabla h(\mu).
\]
Moreover, the additional \NoisyStats perturbation is asymptotically negligible:
\[
\sqrt{n}\,(\hat\beta_n-\tilde\beta_n)\xrightarrow{p}0.
\]
Therefore,
\[
\sqrt{n}\,(\hat\beta_n-\beta)
\;\Rightarrow\;
N(0,\tau^2).
\]
\end{theorem}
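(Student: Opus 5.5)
The plan is to treat $\tilde\beta_n=h(\bar M_n)$ as a smooth function of a normalized sum of locally dependent, uniformly bounded random vectors. A central limit theorem for $\bar M_n$ combined with the delta method then gives the first display. A separate bound on the \NoisyStats{} perturbation handles the second. Throughout, the edge set and the true ranks $x_i$ are treated as fixed. The only randomness in $\bar M_n$ comes from the i.i.d.\ truncated Laplace draws $z_i$, which are supported on $[-A,A]$ and have mean zero and variance $\sigma_z^2$.

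\emph{Step 1: dependency structure and moments.} The coordinates $\hat x_i$ and $\hat x_i^2$ depend only on $z_i$. The coordinates $\hat y_i$ and $\hat x_i\hat y_i$ depend only on $\{z_j: j\in\{i\}\cup N(i)\}$. So $Z_{i,n}$ and $Z_{k,n}$ are independent unless $i$ and $k$ lie within graph distance $2$. Under the bounded-degree Assumption~\ref{ass:bounded-degree}, the resulting dependency graph has maximum degree at most $D=\Delta+\Delta^2$. Every $Z_{i,n}$ is bounded by a constant depending only on $A$, since all variables live in $[-A,1+A]$. I will take $\mu_n\to\mu$ and $n^{-1}\Cov(M_n)\to\Sigma$ as given in the statement.

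To identify $h(\mu)=\beta$, I compute $\E[\hat x_i^2]=x_i^2+\sigma_z^2$ and $\E[\hat x_i\hat y_i]=x_iy_i+d_i^{-1}\sum_{j\in N(i)}\E[z_iz_j]=x_iy_i$, using that there are no self-loops and the $z$'s are independent. Hence the numerator of $h$ at $\mu$ is the limiting covariance of $(x,y)$. The denominator is the limiting variance of $x$, because the $\sigma_z^2$ cancels. So $h(\mu)$ is the limit of the population OLS slope, which is $\beta$.

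\emph{Step 2: CLT.} For any fixed $t\in\mathbb R^4$, the scalars $t^\top(Z_{i,n}-\E Z_{i,n})$ are bounded and form a dependency-graph array with bounded degree. I will apply a Stein-method CLT for dependency graphs, for instance Baldi--Rinott or Chen--Shao. This gives $n^{-1/2}t^\top(M_n-\E M_n)\Rightarrow N(0,t^\top\Sigma t)$ whenever $t^\top\Sigma t>0$; the degenerate case is trivial by Chebyshev. The Kolmogorov-distance bound is of order $D^2 n\,C^3/(n\,t^\top\Sigma t)^{3/2}=O(n^{-1/2})$. The Cram\'er--Wold device then yields $\sqrt n(\bar M_n-\mu_n)\Rightarrow N(0,\Sigma)$.

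\emph{Step 3: delta method and centering.} The function $h$ is $C^1$ near $\mu$ because $\mu_3-\mu_1^2-\sigma_z^2>0$, so the delta method gives $\sqrt n(h(\bar M_n)-h(\mu_n))\Rightarrow N(0,\tau^2)$. The main obstacle is replacing $h(\mu_n)$ by $\beta$, which requires $\sqrt n(h(\mu_n)-\beta)\to0$. Convergence $\mu_n\to\mu$ alone does not give a rate. I would first note that by Step~1, $h(\mu_n)$ equals exactly the finite-$n$ population OLS slope $\beta_n$ of $y$ on $x$ (up to the $1/n$ versus $1/(n-1)$ convention, which is $O(1/n)$). The target is therefore naturally $\beta_n$, and I will read ``$\beta$'' in the statement as this design-based coefficient. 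Failing that, I would impose the rate $\|\mu_n-\mu\|=o(n^{-1/2})$ as part of the hypothesis.

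\emph{Step 4: negligibility of \NoisyStats{} noise.} With bounds $b-a$ and $b'-a'$ both $O(1)$, the noises $L_1$ and $L_2$ have $O(1)$ scale. Meanwhile $\mathrm{ncov}=n(\bar M_4-\bar M_1\bar M_2)$ and $\mathrm{nvar}=n(\bar M_3-\bar M_1^2)$ are of order $n$, and $\mathrm{nvar}/n\toP\mu_3-\mu_1^2>0$, so the event $\mathrm{nvar}+L_2>0$ has probability tending to one. On this event a first-order expansion of the ratio gives $\hat\beta^{*}_n-\beta^*_n=O_p(1/n)$. The debiasing in \eqref{eq:regression-debiasing} multiplies by the factor $\mathrm{nvar}/(\mathrm{nvar}-n\sigma_z^2)$, which converges in probability to a finite constant. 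The same factor multiplies both $\hat\beta^*_n$ and $\beta^*_n$, so $\hat\beta_n-\tilde\beta_n=O_p(n^{-1})$ and hence $\sqrt n(\hat\beta_n-\tilde\beta_n)\toP0$. Slutsky's lemma then gives the final display.
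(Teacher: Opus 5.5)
Your proposal is correct, and it differs from the paper's proof mainly in how the CLT for $\bar M_n$ is obtained.

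\textbf{The CLT step.} The paper builds the Doob martingale $U_{n,k}=\E[M_n\mid\mathcal F_{n,k}]$ over the privatized inputs and bounds its increments using locality and truncation. It then uses McDiarmid to show that the conditional covariance $V_n=\frac1n\sum_k\E[D_{n,k}D_{n,k}^\top\mid\mathcal F_{n,k-1}]$ concentrates around $\frac1n\Cov(M_n)$, since each input affects only boundedly many $\Psi_{n,k}$. Finally it applies the multivariate martingale CLT. You instead observe that $Z_{i,n}$ is a function of $\{z_j: j\in\{i\}\cup N(i)\}$, so the summands form a bounded-degree dependency graph. You then apply a Stein-type dependency-graph CLT to each projection $t^\top M_n$ and finish with Cram\'er--Wold. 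Your route drops the conditional-variance concentration step and gives a Berry--Esseen-type rate, though any vanishing bound would suffice. The paper's route uses only classical martingale limit theory and gives no rate.

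Both routes need the same inputs: bounded degree, bounded (truncated) noise, and $\frac1n\Cov(M_n)\to\Sigma$. You rightly treat the last as a hypothesis. The paper attributes it to bounded local dependence, but locality only gives boundedness of $\frac1n\Cov(M_n)$, not convergence.

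\textbf{The \NoisyStats{} step.} This is essentially the paper's argument. Suppose the debiasing factor in Algorithm~\ref{alg:dp-mafr-truncated} uses the noisy $\mathrm{nvar}+L_2$, or the $1/(n-1)$ normalization of \eqref{eq:regression-debiasing} instead of the $1/n$ in $h$. Then the factors applied to $\hat\beta_n^*$ and $\beta_n^*$ differ by $O_p(n^{-1})$, and the conclusion is unchanged.

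\textbf{The centering issue.} Your concern in Step 3 is well founded, and here you are more careful than the paper. The paper passes from $\sqrt n(\tilde\beta_n-h(\mu_n))\Rightarrow N(0,\tau^2)$ to $\sqrt n(\tilde\beta_n-\beta)\Rightarrow N(0,\tau^2)$ using only $h(\mu_n)\to h(\mu)=\beta$. That does not control $\sqrt n\,(h(\mu_n)-\beta)$.

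Your moment computation shows that $h(\mu_n)$ is exactly the in-sample OLS slope $\beta_n$ of $y$ on $x$. The normalizations cancel in the ratio, so no $1/n$ versus $1/(n-1)$ caveat is needed at that point. Hence the result holds when centered at $\beta_n$, which is what the paper's simulations and applications actually compare against. With a fixed limit $\beta=h(\mu)$, the result can fail when $\beta_n\to\beta$ more slowly than $n^{-1/2}$. Your extra hypothesis $\|\mu_n-\mu\|=o(n^{-1/2})$ is the appropriate repair.
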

Theorem~\ref{thm:regression-asymp-normal} shows that, as in the discrete-label case, the extra perturbation introduced by \NoisyStats is asymptotically negligible at the root-\(n\) scale, and the estimator is asymptotically normal.

We apply our method to data from the Twitch livestreaming platform in \Cref{sec:twitch} and to Amazon products frequently purchased together in \Cref{sec:appx-amazon}, and show that our private estimator performs well in recovering the true regression slope in each case. 

    

    
        
 \section{Accuracy on Simulated Networks}

    \subsection{Binary Labels on Erd\H{o}s-R\'{e}nyi and Stochastic Block Models}

    To test our method, and see how its effectiveness varies depending on characteristics of the network and the privacy parameter, we apply our method to networks simulated from Erd\H{o}s-R\'{e}nyi and stochastic block models. 
    
    First, we consider how the accuracy of our private estimates of connectedness (relative to the true connectedness index in the simulated network) vary as we alter the amount of homophily in the networks we simulate from stochastic block models. Figure~\ref{fig:mse-vs-homophily} shows that low levels of homophily do not significantly affect the accuracy of our private economic connectedness statistics versus the baseline case of no homophily. However, extremely high levels of homophily worsen the accuracy of our private index. This is because stochastic block models with higher degrees of homophily feature greater levels of clustering, and so the variance resulting from flips in a node's status is greater since it is likely to affect the values of $\rho_i$ for more nodes within that set. However, for all degrees of homophily, our private estimators converge to the true value of connectedness in the network as we increase $\varepsilon$, as displayed in Figure~\ref{fig:mse-vs-priv-by-homophily}.   

    In Figure~\ref{fig:accuracy-cellcomp}, we show how, for a given cell size, the accuracy of our private index varies as we change the composition of the cell. Intuitively, for a given cell size, our private index is less accurate when the cell has fewer low-SES individuals (using the names of the two sets of nodes from \cite{chettyetal2022I}), both because we are averaging the noise infused by swapping the SES of alters included in $\tilde{\rho}_i$ over fewer individuals, but also because the scale of the noise injected to protect edges is decreasing in the number of low-SES individuals. Finally, holding the average degree fixed in \cref{fig:accuracy-networksize}, accuracy improves rapidly with network size: The mean squared error of the private connectedness estimator decreases as $n$ grows across all privacy budgets, consistent with the shrinking edge-sensitivity and averaging of label noise.

    \begin{figure}
        \centering
        \begin{subfigure}[b]{0.48\textwidth}
            \includegraphics[width=\linewidth]{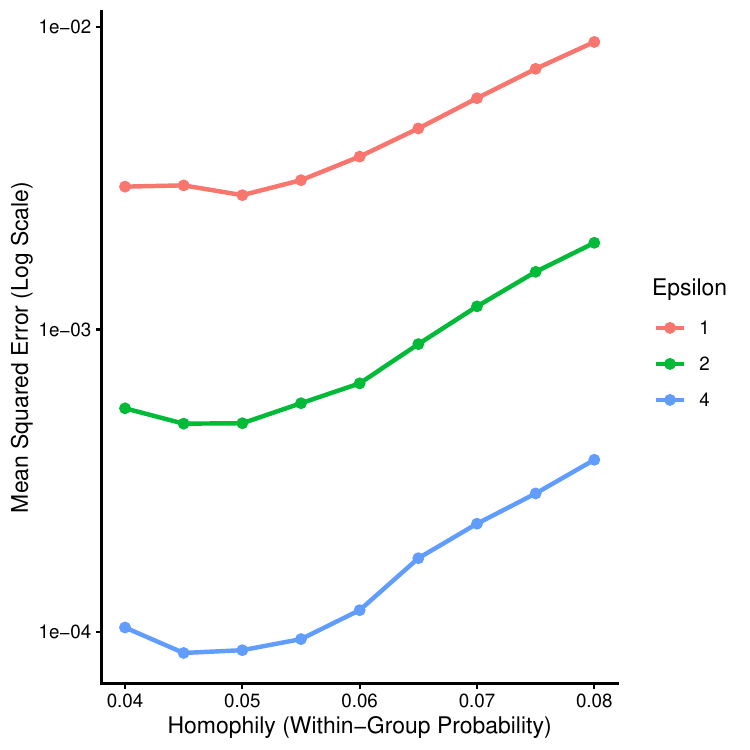}
            \caption{Mean squared error vs homophily.}
            \label{fig:mse-vs-homophily}
        \end{subfigure}
        \begin{subfigure}[b]{0.48\textwidth}
            \includegraphics[width=\linewidth]{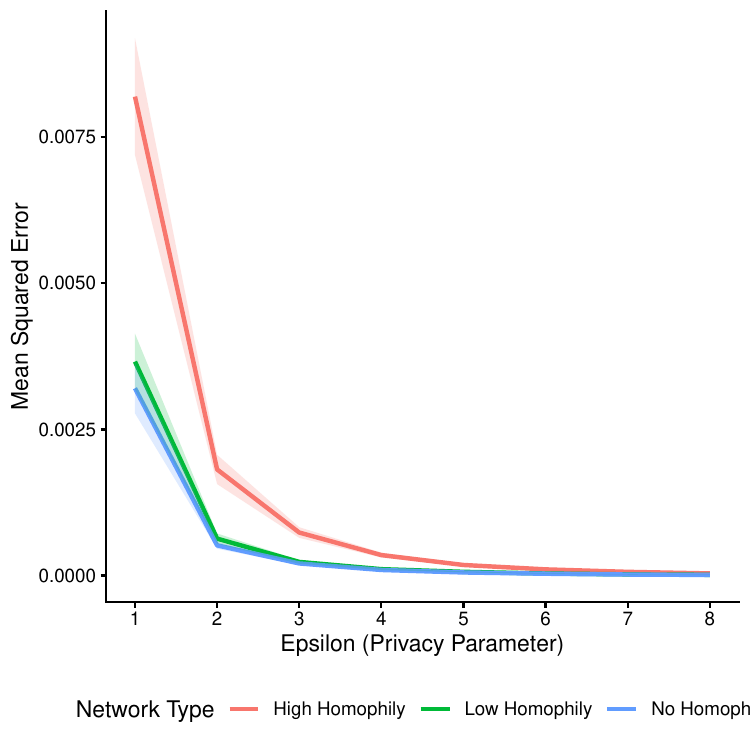}
            \caption{Mean squared error vs privacy.}
            \label{fig:mse-vs-priv-by-homophily}
        \end{subfigure}
        
        \caption{\textbf{Mean squared error vs homophily by privacy.} Panel (a): This figure illustrates the impact of homophily on the accuracy of the differentially private mechanism across four privacy budgets ($\varepsilon \in \{0.5, 1, 2, 4\}$). In each case we split our privacy budget equally between $\varepsilon_e$ and $\varepsilon_l$. Our simulations are based on networks of 5000 nodes generated by a stochastic block model with two equally sized groups. To isolate the effect of community structure from network density, the total average degree is held constant at $\approx 80$. This is achieved by sweeping the within-group connection probability from 0.04 to 0.08 while simultaneously decreasing the between-group probability from 0.04 to 0 such that the sum of the within-group connection probability and the between-group connection probability is 0.08. Results are averaged over 1,125 simulation samples per data point (75 fixed graphs $\times$ 15 coupled noise seeds). The vertical axis utilizes a log scale. Panel (b): This figure plots the MSE of our differentially private mechanism as a function of the privacy budget ($\varepsilon$) for three simulated network scenarios: \textit{No Homophily}, \textit{Low Homophily}, and \textit{High Homophily}. Accuracy is calculated based on 500 samples per epsilon value, utilizing 50 pre-generated fixed graphs and 10 coupled privacy noise processes per graph to isolate the privacy-induced variance. Shaded regions represent 95\% confidence intervals. The privacy budget is split equally between $\varepsilon_e$ and $\varepsilon_l$. Each simulated network consists of $N=2,000$ nodes. We simulate networks using a stochastic block model with two equally sized groups. For the \textit{no homophily} case, we set the connection probability to 0.04 both within and across groups. For the \textit{low homophily} case, we set the within-group connection probability to be 0.06, and the between-group connection probability to be 0.02. For the \textit{high homophily} case, we set the within-group connection probability to be 0.08 and the between-group connection probability to be 0.}
        \label{fig:mse-homophily}
    \end{figure}

    We also show how splitting the privacy budget between $\varepsilon_e$ and $\varepsilon_l$ affects the MSE. For a given value of $\varepsilon_\ell + \varepsilon_e$, we see in Figure~\ref{fig:mse-varying-epsilone} that as the network grows, it becomes optimal (in terms of minimizing the MSE) to increase $\varepsilon_\ell$ relative to $\varepsilon_e$. This is consistent with \Cref{thm:relative-variance-orders}, which shows that asymptotically the variance stemming from the noise applied to labels becomes dominant.

    In Figure~\ref{fig:accuracy-networksize}, we plot MSE against the number of nodes in the network. We see that, holding $\varepsilon_\ell + \varepsilon_e$ fixed, and holding the average degree in the network fixed as well, the MSE is decreasing in the number of nodes in the network. This reflects the consistency of our estimator, as detailed in \Cref{thm:relative-variance-orders}. 

    \begin{figure}
        \centering
        \begin{subfigure}[b]{0.48\textwidth}
            \includegraphics[width=\linewidth]{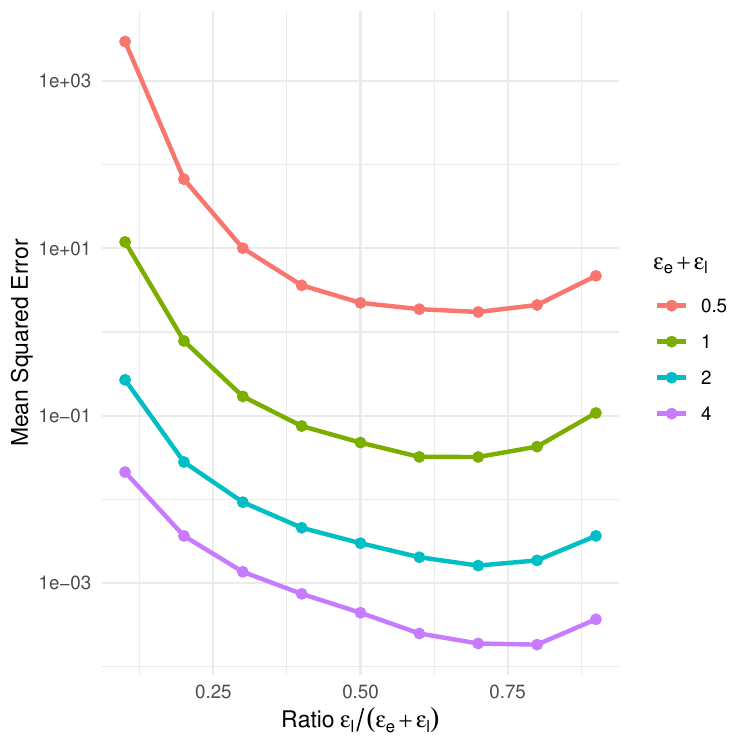}
            \caption{2,000 nodes.}
            \label{fig:mse-vs-size-2000}
        \end{subfigure}
        \begin{subfigure}[b]{0.48\textwidth}
            \includegraphics[width=\linewidth]{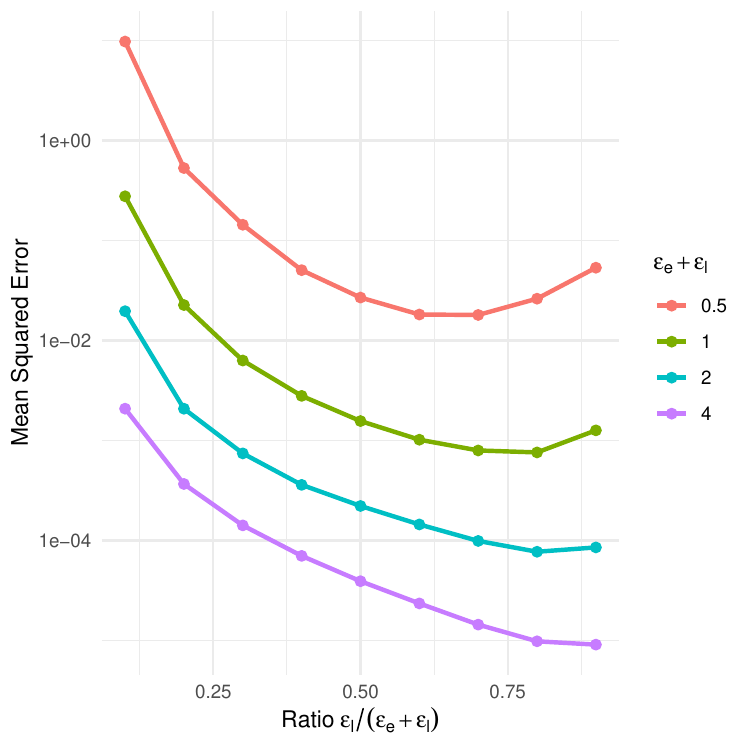}
            \caption{20,000 nodes.}
            \label{fig:mse-vs-split-20000}
        \end{subfigure}
        \caption{MSE holding $\varepsilon_e + \varepsilon_l$ constant.}
        \label{fig:mse-varying-epsilone}
    \end{figure}


\subsection{Continuous Labels on Graphons}

We also test our method for continuous labels on simulated networks to understand how its effectiveness varies with characteristics of the network.

We construct networks of $N$ nodes and for each node $i$ we sample $x_i$ from the uniform distribution on $[0,1]$. We then fill edges with probability:\footnote{As a result of this setup, these simulated graphs can be considered as being sampled from a graphon.}
\begin{align*}
    \mathbb{P}\{e_{ij}=1 \mid x_i,x_j\}
    =
    \frac{\Bar{d}}{(N-1)\left(\frac{2}{h}-\frac{2}{h^2}(1-e^{-h})\right)}e^{-h|x_i-x_j|}
\end{align*}
where $\Bar{d}$ is the target average degree of the network. The fraction in front is a scaling parameter that keeps the average degree of the network at $\Bar{d}$ in expectation. The term $e^{-h|x_i-x_j|}$ with homophily parameter $h \geq 0$ captures how much more likely nodes with values of $x$ close to each other are to have an edge between them versus nodes whose values of $x$ are far apart.\footnote{Note that the limit of this expression as $h \to 0^{+}$ is $\frac{\Bar{d}}{N-1}$. In our simulations, we sometimes set $h=0$, and in this case we set the corresponding edge probability between $i$ and $j$ to $\frac{\Bar{d}}{N-1}$.}

Figure~\ref{fig:continuous-sim-setup} summarizes two basic features of the simulation environment. Panel~\subref{fig:r2-homophily} shows that higher levels of the homophily parameter $h$ correspond to higher $R^2$ values in the regression of average friend rank on own rank. Panel~\subref{fig:reg-priv-allocation} shows how the split between $\varepsilon_e$ and $\varepsilon_l$ affects the mean squared error of the differentially private regression slope estimate for varying network sizes. As in the binary case, as the network grows, it is optimal to set $\varepsilon_l$ higher relative to $\varepsilon_e$, consistent with the results in \Cref{thm:regression-asymp-normal} showing that the noise to protect edges becomes asymptotically negligible compared to the effect of the noise infused into labels. 

\begin{figure}[htbp]
    \centering
    \begin{subfigure}{0.48\linewidth}
        \centering
        \includegraphics[width=\linewidth]{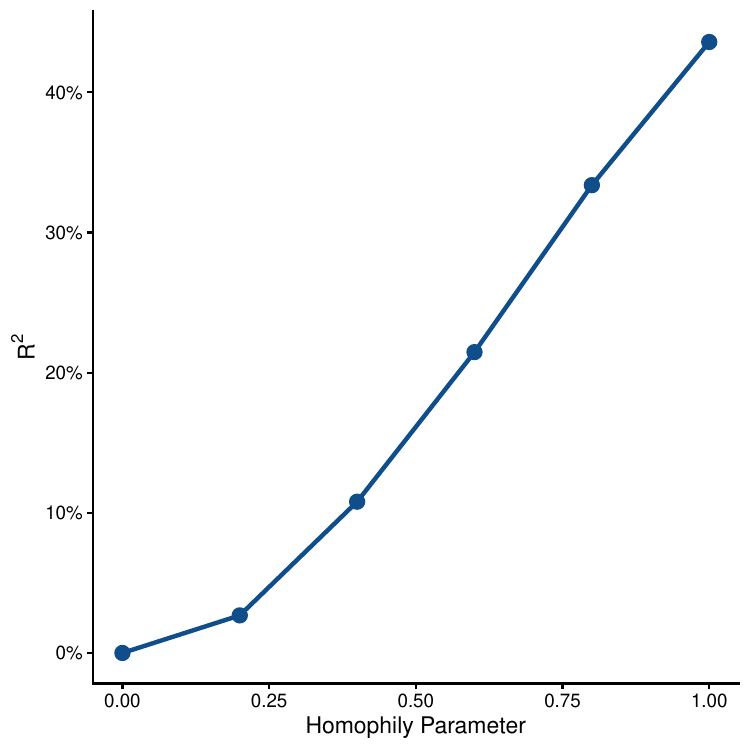}
        \caption{$R^2$ from the regression of mean average friend rank on own rank versus homophily.}
        \label{fig:r2-homophily}
    \end{subfigure}
    \hfill
    \begin{subfigure}{0.48\linewidth}
        \centering
        \includegraphics[width=\linewidth]{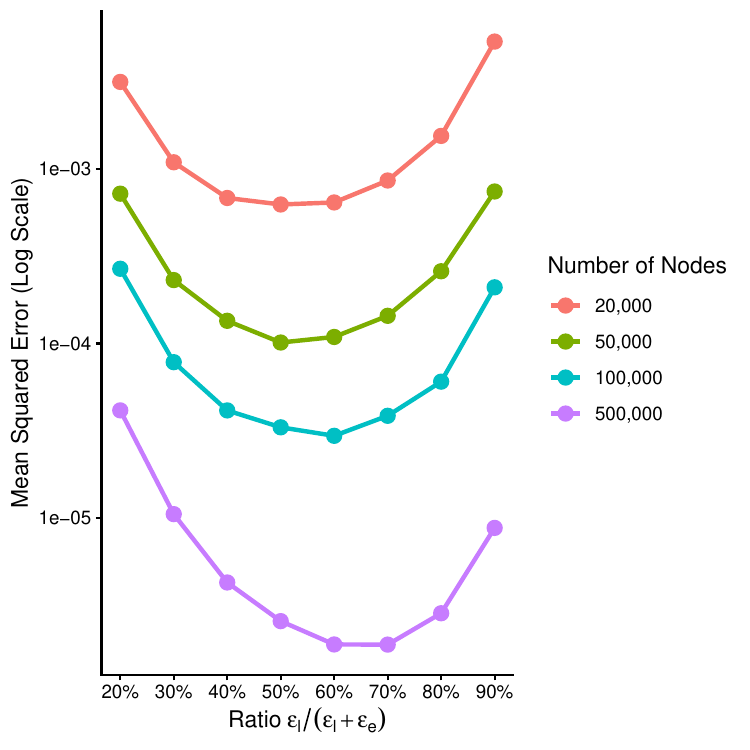}
        \caption{MSE of the differentially private regression slope estimate across privacy-budget allocations.}
        \label{fig:reg-priv-allocation}
    \end{subfigure}
    \caption{\textbf{Simulations for continuous labels.} Panel~(\subref{fig:r2-homophily}) plots the $R^2$ from regressions of average friend rank on own rank in networks generated by sampling node labels uniformly from $[0,1]$ and filling edges with probability $\frac{20}{99{,}999\left(\frac{2}{h}-\frac{2}{h^2}(1-e^{-h})\right)}e^{-h|x_i-x_j|}$, for varying levels of $h$. Each simulated network contains 100{,}000 nodes and has average degree 20. Panel~(\subref{fig:reg-priv-allocation}) shows results for four network sizes with homophily $h=0.8$ and expected degree $\bar d=20$, averaged over 3,000 simulations per point.}
    \label{fig:continuous-sim-setup}
\end{figure}

Figure~\ref{fig:continuous-sim-mse} shows that the mean squared error of both the regression slope and the MAFR for bottom-quartile nodes constructed from the regression slope decreases with network size, but is relatively invariant to the degree of homophily in the network. Panel~\subref{fig:slope-mse} reports results for the regression slope, while Panel~\subref{fig:quartile-mse} reports results for bottom-quartile MAFR. The fact that the MSE decreases with network size is again consistent with the fact that our esimator converges to the truth as the network grows, as detailed in \Cref{thm:regression-asymp-normal}.

\begin{figure}[htbp]
    \centering
    \begin{subfigure}{0.48\linewidth}
        \centering
        \includegraphics[width=\linewidth]{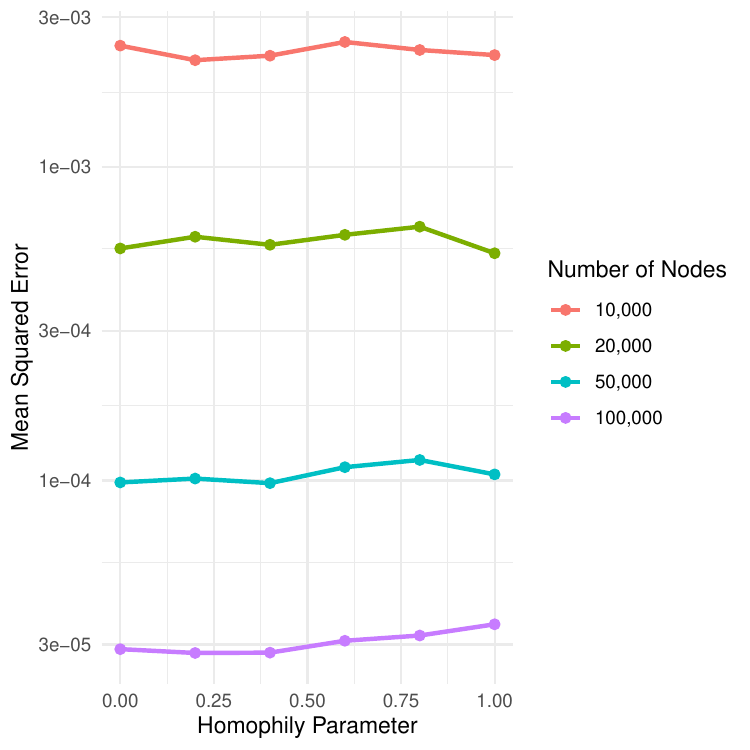}
        \caption{MSE for the regression slope by network size and homophily.}
        \label{fig:slope-mse}
    \end{subfigure}
    \hfill
    \begin{subfigure}{0.48\linewidth}
        \centering
        \includegraphics[width=\linewidth]{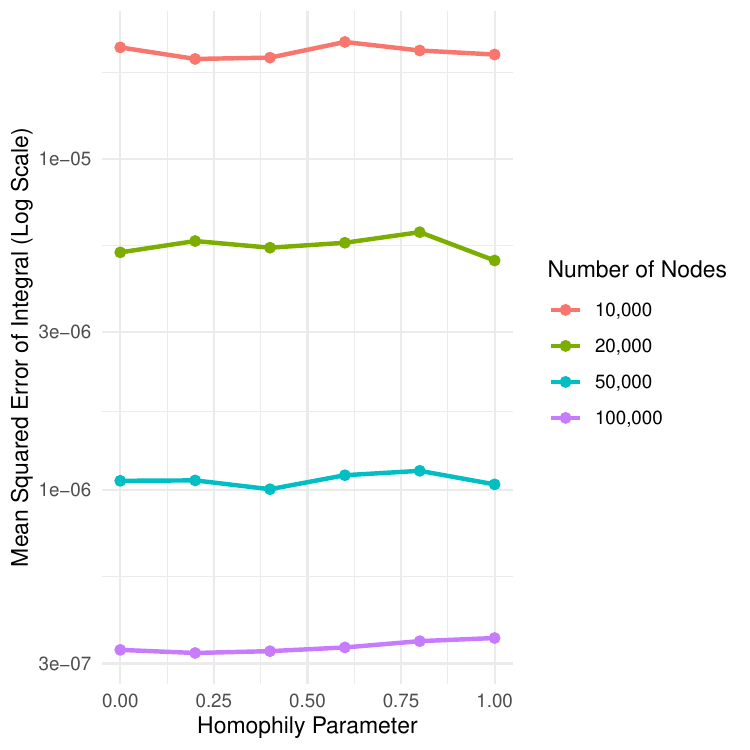}
        \caption{MSE for bottom-quartile MAFR by network size and homophily.}
        \label{fig:quartile-mse}
    \end{subfigure}
    \caption{\textbf{Estimation accuracy in the continuous-label simulations.} In both panels, we generate networks by sampling node labels uniformly from $[0,1]$ and filling edges with probability $\frac{20}{(N-1)\left(\frac{2}{h}-\frac{2}{h^2}(1-e^{-h})\right)}e^{-h|x_i-x_j|}$, for varying levels of $h$. Panel~\subref{fig:slope-mse} shows the mean squared error of our private regression slope estimate from the regression of average friend rank on own rank. Panel~\subref{fig:quartile-mse} shows the mean squared error of MAFR for bottom-quartile nodes constructed from the regression slope. In both panels, we set $\varepsilon_e=\varepsilon_l=4$.}
    \label{fig:continuous-sim-mse}
\end{figure}

   \section{Empirical Applications}
\subsection{Diffusion Networks in Rural India}\label{sec:india-villages}

We demonstrate the empirical performance of our approach by applying it to the data of \cite{banerjee_diffusion_2013}---a widely used dataset containing social networks drawn from rural villages in Karnataka (a state in the south of India). These data are an ideal application and testing ground for our approach since they contain a set of relatively small, sparse, yet diverse social networks. If our approach works well on these data, we expect it to work well on most social science datasets.

The data contains information on the caste of households in the dataset. To create a binary label set, we group households into one set---which we call \textit{historically disadvantaged households}---if they identify as belonging to a scheduled caste, a scheduled tribe, or being a minority, and assign them to the other set (\textit{historically non-disadvantaged households}) if they report belonging to a forward caste or belonging to a backward caste but not a scheduled caste. We consider two households to be connected if they report a connection in any of the network layers households were asked about.

Fifty villages in the data contain information on the caste of at least some households. We drop four villages (villages 37, 41, 43, and 56) which contain six or fewer historically disadvantaged households. This leaves us with 46 villages to work with. We present summary statistics for our sample in Table~\ref{tab:summary-caste}. Table~\ref{tab:summary-caste} shows that in general there is a great deal of homophily on the basis of caste. Figure~\ref{fig:india-caste-results} provides three complementary illustrations of this pattern and of the empirical performance of our privacy procedure. Panel~\subref{fig:india-boxplot} shows an illustrative example for village 60, which is the largest village.

From this data, we construct cross-caste connectedness indices by calculating the average fraction of connections to historically non-disadvantaged households among historically disadvantaged households in each village. We use a privacy parameter $\varepsilon$ of 8, matching the parameter used by \cite{chettyetal2022I} for Facebook data and \cite{chetty2026opportunity} for US tax data. We calculate noisy private estimates of cross-caste connectedness 500 times for each village, and report the standard deviation of those estimates for each village in the rightmost column, finding that it is generally low relative to the estimate of cross-caste connectedness. Across villages, the signal standard deviation is 0.12 and the mean noise standard deviation is 0.04, yielding a signal variance to noise variance ratio of 10.8. As a result, even after incorporating our formal privacy procedure, most of the variation in the dataset reflects true underlying variation in cross-caste connectedness across villages, with the variation introduced by our privacy procedure constituting a relatively small proportion of the total variation in the data. Panel~\subref{fig:india-scatter} of Figure~\ref{fig:india-caste-results} presents an example of true versus private cross-caste connectedness, showing that the true and private data are very highly correlated; larger villages tend to fall closer to the 45-degree line. Panel~\subref{fig:india-hist} shows the distribution of the correlation coefficients between true and private cross-caste connectedness across villages, illustrating that these correlations are generally very high for $\varepsilon = 8$, but begin to degrade as we move toward $\varepsilon = 4$.

\begin{figure}[htbp]
    \centering

    \begin{subfigure}{0.48\linewidth}
        \centering
        \includegraphics[width=\linewidth]{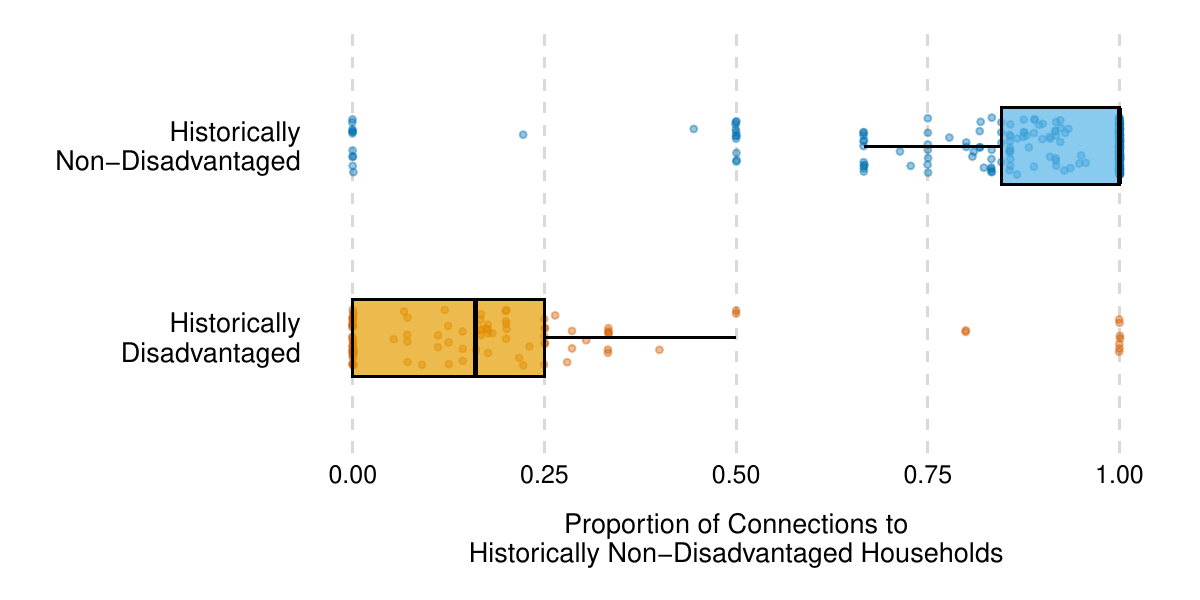}
        \caption{Proportion of connections to historically non-disadvantaged households by household status in village 60.}
        \label{fig:india-boxplot}
    \end{subfigure}

    \vspace{0.75em}

    \begin{subfigure}{0.48\linewidth}
        \centering
        \includegraphics[width=\linewidth]{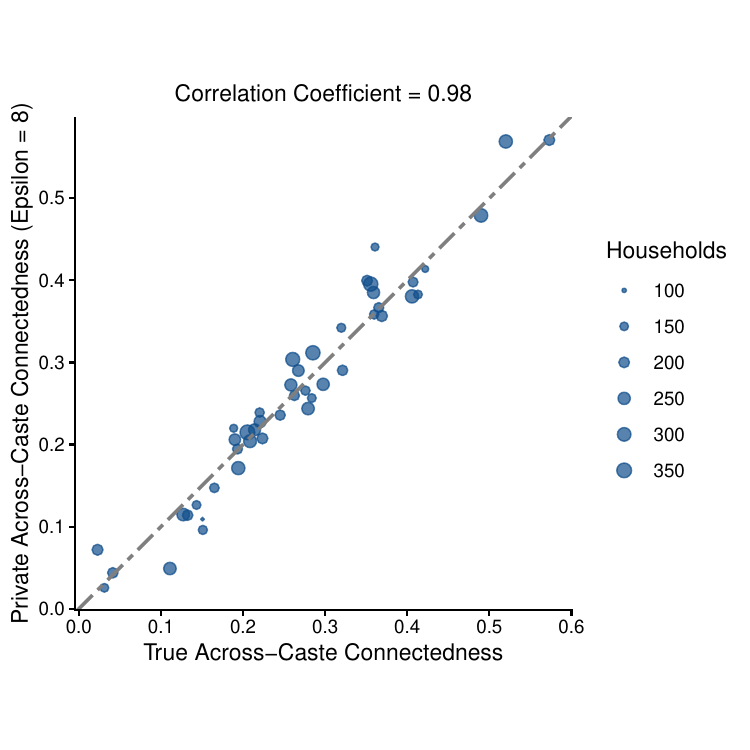}
        \caption{Example of true versus private cross-caste connectedness for one simulation at $\varepsilon = 8$.}
        \label{fig:india-scatter}
    \end{subfigure}
    \hfill
    \begin{subfigure}{0.48\linewidth}
        \centering
        \includegraphics[width=\linewidth]{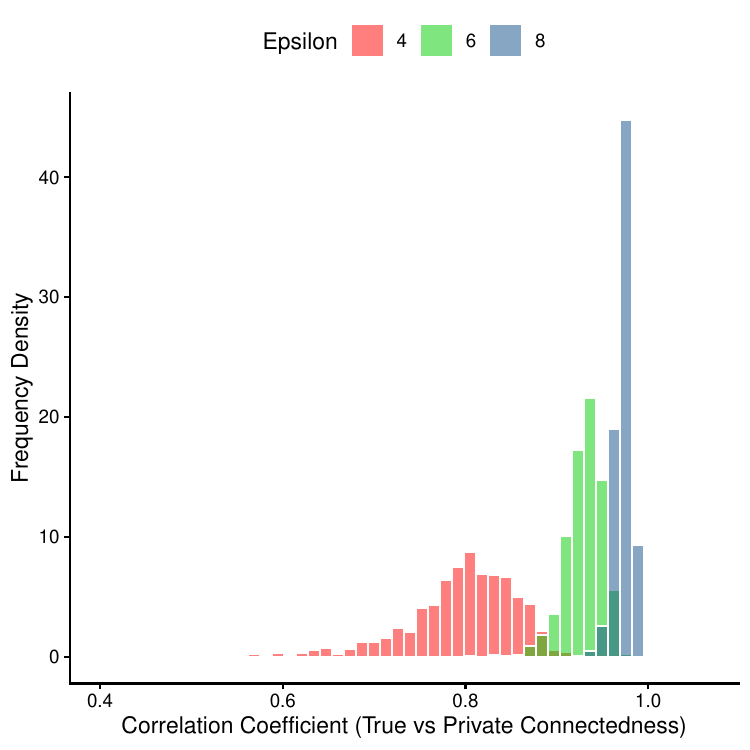}
        \caption{Distribution of correlation coefficients between true and private cross-caste connectedness across simulations, varying $\varepsilon$.}
        \label{fig:india-hist}
    \end{subfigure}

    \caption{\textbf{Cross-caste connectedness in the Karnataka village networks and the performance of the differentially private estimator.} Panel~\subref{fig:india-boxplot} displays the distribution, in village 60, of the fraction of each household's connections that are to historically non-disadvantaged households, separately by household status. Panel~\subref{fig:india-scatter} displays a village-level scatter plot of true and differentially private cross-caste connectedness for one particular simulation at $\varepsilon=8$; the dotted line is the 45-degree line through the origin. Panel~\subref{fig:india-hist} displays the distribution of Pearson correlation coefficients between true and differentially private cross-caste connectedness across simulation iterations at $\varepsilon \in \{4,6,8\}$. We allocate the privacy budget equally between $\varepsilon_e$ and $\varepsilon_l$.}
    \label{fig:india-caste-results}
\end{figure}

\subsection{Mutual Follows on Twitch}\label{sec:twitch}

We also apply our method for continuous labels to data from the livestreaming service Twitch, collected by \cite{rozemberczki2021twitch}. The dataset consists of around 168,000 Twitch users. Edges in the data exist if a pair of nodes both follow each other on the platform, and there are around 6.8 million edges in the dataset. We split users by language into 21 different groups. Each user in the dataset is associated with a view count, and we assign each user a percentile rank within the dataset on the basis of this view count. We then assign each user an average friend rank on the basis of these ranks, and estimate mean average friend rank from the regression of average friend rank on own rank.

In general, users on Twitch with higher view counts tend to mutually follow users with \textit{lower} view counts, as shown for the subnetwork of English-language users in Panel~\subref{fig:twitch-binscatter} of Figure~\ref{fig:twitch-results}.

We conduct a similar exercise to our setting with binary labels, where we calculate our private estimate of the regression slope for each language and compare our private estimates to the true regression slopes. Panel~\subref{fig:twitch-scatter} of Figure~\ref{fig:twitch-results} shows one simulation of this exercise with $\varepsilon = 8$. Panel~\subref{fig:twitch-hist} shows the distribution of correlations between private and true slope coefficients across simulations for different values of $\varepsilon$. We conclude that our approach works well for $\varepsilon$ between 4 and 8 on networks with the size and structure of the Twitch data collected by \cite{rozemberczki2021twitch}.

\begin{figure}[htbp]
    \centering

    \begin{subfigure}{0.48\linewidth}
        \centering
        \includegraphics[width=\linewidth]{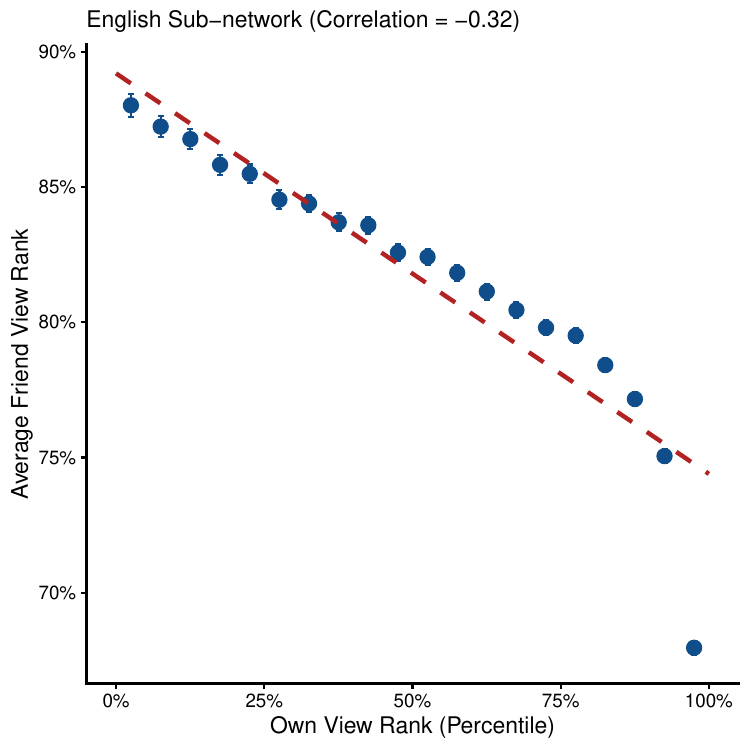}
        \caption{Average friend rank versus own rank for the English-language subnetwork.}
        \label{fig:twitch-binscatter}
    \end{subfigure}

    \vspace{0.75em}

    \begin{subfigure}{0.48\linewidth}
        \centering
        \includegraphics[width=\linewidth]{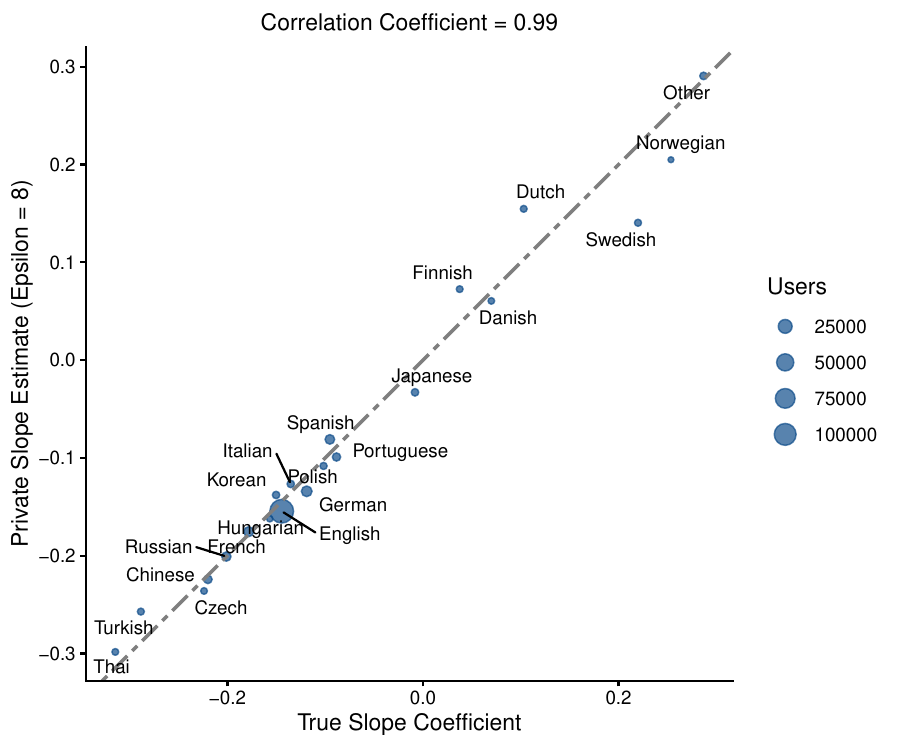}
        \caption{Private versus true slope coefficients across languages for one simulation at $\varepsilon = 8$.}
        \label{fig:twitch-scatter}
    \end{subfigure}
    \hfill
    \begin{subfigure}{0.48\linewidth}
        \centering
        \includegraphics[width=\linewidth]{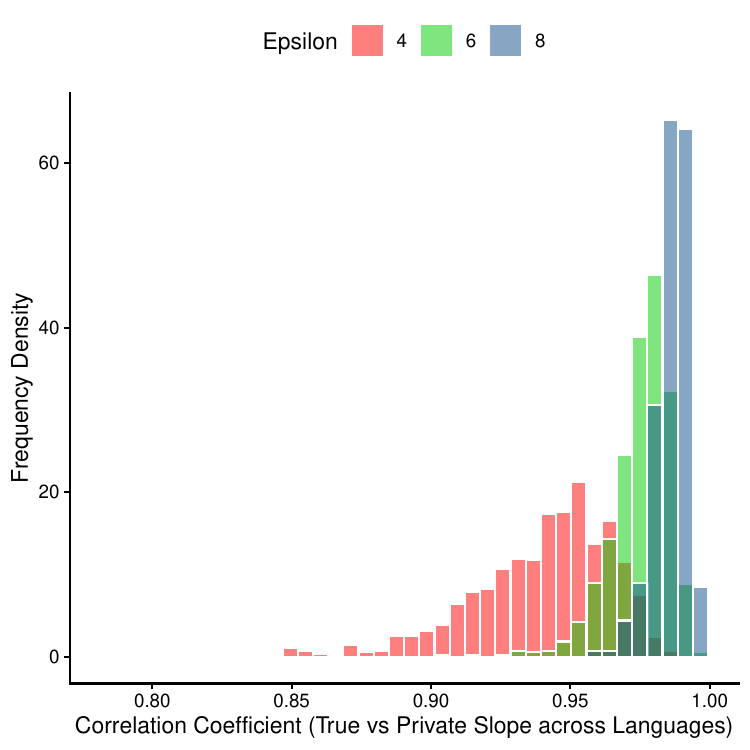}
        \caption{Distribution of correlations between private and true slope coefficients across simulations, by $\varepsilon$.}
        \label{fig:twitch-hist}
    \end{subfigure}

    \caption{\textbf{Connectedness by view-count rank in the Twitch network and the performance of the differentially private estimator.} Panel~\subref{fig:twitch-binscatter} plots average friend rank against own rank for the English-language subnetwork. Panel~\subref{fig:twitch-scatter} shows true and differentially private estimates of the regression slope across languages for one simulation at $\varepsilon = 8$; the dotted line is the 45-degree line. Panel~\subref{fig:twitch-hist} shows the distribution of correlations between true and differentially private slope coefficients across simulation iterations for $\varepsilon \in \{4,6,8\}$.}
    \label{fig:twitch-results}
\end{figure}

In Appendix~\ref{sec:appx-amazon}, we provide another application of our method for continuous labels to data on Amazon products co-purchased together.

\section{Discussion}

     In this paper, we develop a practical method for releasing network connectedness indices under edge-adjacent differential privacy. We study settings with binary labels and continuous labels. Our approach proceeds in two steps: we first privatize node attributes directly, and then analytically debias the resulting statistics before adding a second layer of noise to protect the presence or absence of individual edges. We provide formal privacy guarantees for both procedures, establish consistency and asymptotic normality of the resulting estimators, and show in simulations and empirical applications that the method provides privacy while retaining utility even in relatively small networks. 

     Network connectedness indices allow policymakers to glean granular and hyper-local insights about the structure of connections in their area and in the institutions they govern, and our approach allows for network connectedness indices to be released while ensuring that the privacy of individuals contained in the data is not compromised. More generally, our work can make it easier to release statistics from a range of social and economic networks, or datasets that have a graph-like structure, in order to answer a range of research questions in the social sciences. 

     Our labels-first approach has two important advantages. First, it sidesteps the issue that network connectedness statistics have high global sensitivities on worst-case graphs. Second, it also sidesteps composition issues stemming from the fact that the attribute of one node can be an input to a large number of cells. Since these two issues are features of many network statistics, we think our labels-first method is a promising approach for publishing a wide range of statistics derived from labeled networks with a privacy guarantee. Extending our labels-first approach to multinomial attributes via multinomial randomized response is a natural next step. (For example, rather than just employed or unemployed, an individual may be in full-time work, part-time work, or not working. Alternatively, instead of being married or single, an individual's marital status may be married, never married, or divorced; or an individual's educational attainment may be a high school degree, a college degree, or a postgraduate degree.)  
     
     More broadly, however, there may be many network statistics that stubbornly resist attempts to privacy-protect them. \citet{chandrasekhar2024non}, for example, suggest that results related to the extent of a diffusion process (such as the spread of a virus) can often be highly sensitive to any small change in initial node characteristics or edge sets. Future work can address private estimation of network diffusion indices (e.g., measuring the proportion of infected neighbors to evaluate risk exposure in different localities) for dynamic network processes, which can be extended to higher-order network neighborhoods (e.g., exposures within two-hops or over hyper edges).  

\printbibliography
\appendix 
\counterwithin{figure}{section}
\counterwithin{table}{section}
\numberwithin{figure}{section}
\numberwithin{table}{section}

\setcounter{figure}{0}
\setcounter{table}{0}

\section{Proofs}
\subsection{Proof of \Cref{thm:edgeadj-composition-epsdelta}: $(\varepsilon,\delta)$ composition under edge-adjacency}\label{app:proof:thm:edgeadj-composition-epsdelta}
\begin{proof}
Let $(\mathcal{V},\mathcal{E},\mathbf{L})$ and $(\mathcal{V},\mathcal{E}',\mathbf{L}')$
be edge-adjacent labeled networks.
Fix any measurable set $\mathcal{S}\subseteq\mathbb{R}$.
By conditioning on the output $\widehat{\mathbf{L}}$ of $\mathcal{M}_1$,
\begin{align*}
\Pr\!\left[\mathcal{M}(\mathcal{V},\mathcal{E},\mathbf{L})\in\mathcal{S}\right]
=
\sum_{\widehat{\mathbf{L}}}
\Pr\!\left[\mathcal{M}_1(\mathcal{V},\mathcal{E},\mathbf{L})=\widehat{\mathbf{L}}\right]\,
\Pr\!\left[\mathcal{M}_2(\mathcal{V},\mathcal{E},\widehat{\mathbf{L}})\in\mathcal{S}\right].
\end{align*}

Since $\mathbf{L}$ and $\mathbf{L}'$ differ in at most one node attribute and
$\mathcal{M}_1$ is $(\varepsilon_\ell,\delta_\ell)$-DP with respect to node attributes, thus we have
\[
\Pr\!\left[\mathcal{M}_1(\mathcal{V},\mathcal{E},\mathbf{L})=\widehat{\mathbf{L}}\right]
\le
e^{\varepsilon_\ell}\Pr\!\left[\mathcal{M}_1(\mathcal{V},\mathcal{E}',\mathbf{L}')=\widehat{\mathbf{L}}\right]+\delta_\ell(\widehat{\mathbf{L}}),
\]
Therefore,
\begin{align*}
\Pr\!\left[\mathcal{M}(\mathcal{V},\mathcal{E},\mathbf{L})\in\mathcal{S}\right]
&\le
e^{\varepsilon_\ell}
\sum_{\widehat{\mathbf{L}}}
\Pr\!\left[\mathcal{M}_1(\mathcal{V},\mathcal{E}',\mathbf{L}')=\widehat{\mathbf{L}}\right]\,
\Pr\!\left[\mathcal{M}_2(\mathcal{V},\mathcal{E},\widehat{\mathbf{L}})\in\mathcal{S}\right]
+\delta_\ell,
\end{align*}

Since $\mathcal{E}$ and $\mathcal{E}'$ differ in at most one edge and
$\mathcal{M}_2$ is $(\varepsilon_e,\delta_e)$ edge-adjacent DP (for any fixed $\widehat{\mathbf{L}}$),
\[
\Pr\!\left[\mathcal{M}_2(\mathcal{V},\mathcal{E},\widehat{\mathbf{L}})\in\mathcal{S}\right]
\le
e^{\varepsilon_e}\Pr\!\left[\mathcal{M}_2(\mathcal{V},\mathcal{E}',\widehat{\mathbf{L}})\in\mathcal{S}\right]+\delta_e.
\]
Substituting this bound into the previous inequality and using
$\sum_{\widehat{\mathbf{L}}}\Pr[\mathcal{M}_1(\cdot)=\widehat{\mathbf{L}}]=1$ give
\begin{align*}
\Pr\!\left[\mathcal{M}(\mathcal{V},\mathcal{E},\mathbf{L})\in\mathcal{S}\right]
&\le
e^{\varepsilon_\ell+\varepsilon_e}
\sum_{\widehat{\mathbf{L}}}
\Pr\!\left[\mathcal{M}_1(\mathcal{V},\mathcal{E}',\mathbf{L}')=\widehat{\mathbf{L}}\right]\,
\Pr\!\left[\mathcal{M}_2(\mathcal{V},\mathcal{E}',\widehat{\mathbf{L}})\in\mathcal{S}\right]
+\delta_\ell+\delta_e \\
&=
e^{\varepsilon_\ell+\varepsilon_e}
\Pr\!\left[\mathcal{M}(\mathcal{V},\mathcal{E}',\mathbf{L}')\in\mathcal{S}\right]
+\delta_\ell+e^{\varepsilon_\ell}\delta_e,
\end{align*}
which proves $(\varepsilon_\ell+\varepsilon_e,\delta_\ell+e^{\varepsilon_\ell}\delta_e)$ edge-adjacent differential privacy.
\end{proof}

\subsection{Proof of \Cref{prop:mvue}: MVUE for individual connectedness}\label{app:proof:prop:mvue}
\begin{proof}
We first establish unbiasedness.  
Under randomized response,
\[
\mathbb{E}[\1\{\hat l_j=b\}\mid l_j]
=
(1-p)\1\{l_j=b\}+p\1\{l_j=a\}
=
(1-2p)\1\{l_j=b\}+p.
\]
Therefore,
\begin{align*}
\mathbb{E}[\hat{\rho}_i \mid \mathbf{L}]
&=
\sum_{j\in\mathcal V}a_{ij}\,\mathbb{E}[\1\{\hat l_j=b\}\mid l_j] \\
&=
\sum_{j\in\mathcal V}a_{ij}\Big((1-2p)\1\{l_j=b\}+p\Big) \\
&=
(1-2p)\sum_{j\in\mathcal V}a_{ij}\1\{l_j=b\} + p\sum_{j\in\mathcal V}a_{ij} \\
&=
(1-2p)\rho_i + p,
\end{align*}
since $\sum_j a_{ij}=1$.  
Rearranging gives
\[
\rho_i
=
\frac{\mathbb{E}[\hat{\rho}_i\mid \mathbf{L}]-p}{1-2p},
\]
which implies that $\tilde{\rho}_i=(\hat{\rho}_i-p)/(1-2p)$ is unbiased.

\medskip

Next, we establish uniqueness via Lehmann--Scheff\'e.  
The privatized labels $\{\hat l_j\}$ are generated independently with likelihood
\[
f(\hat l_j\mid l_j)
=
\big((1-p)\1\{l_j=\hat l_j\}+p\1\{l_j\neq \hat l_j\}\big),
\]
so the joint likelihood factorizes as
\[
f(\hat{\mathbf{L}}\mid \mathbf{L})=\prod_{j\in\mathcal V} f(\hat l_j\mid l_j),
\]
which implies that $\hat{\mathbf{L}}$ is a sufficient statistic for $\mathbf{L}$ by Neyman’s factorization theorem.

Completeness follows from strict positivity of the likelihood: every configuration $\hat{\mathbf{L}}$ has strictly positive probability under any $\mathbf{L}$, and hence for any measurable function $h$,
\[
\mathbb{E}[h(\hat{\mathbf{L}})]=0 \ \text{for all }\mathbf{L}
\quad\Rightarrow\quad
h(\hat{\mathbf{L}})=0 \ \text{a.s.}
\]

Therefore, $\hat{\mathbf{L}}$ is complete and sufficient for $\mathbf{L}$, and since $\tilde{\rho}_i$ is an unbiased estimator depending on the data only through $\hat{\mathbf{L}}$, the Lehmann--Scheff\'e theorem implies that $\tilde{\rho}_i$ is the unique MVUE of $\rho_i$.
\end{proof}

\subsection{Proof of Theorem~\ref{thm:consistency_dp}: Consistency of the debiased private estimator}\label{app:proof:thm:consistency_dp}

Before establishing the consistency of our private H\'ajek ratio estimator in Theorem~\ref{thm:consistency_dp}, 
we first present the proof of \cref{lem:S1_edge_sens} on Edge-sensitivity of the weighted sum in the numerator, followed by three additional lemmas.
\begin{proof}[Proof of Lemma~\ref{lem:S1_edge_sens}]
Toggling a single edge $(u,v)$ changes only the normalized weight rows $a_{u\cdot}$ and $a_{v\cdot}$, hence only
$\hat\rho_u,\hat\rho_v$ (and therefore only $\widetilde\rho_u,\widetilde\rho_v$) can change; all other $\widetilde\rho_i$ remain the same.
Moreover $w_i$ depends only on $\hat l_i$, so it is unchanged.

Thus
\[
|S_{1,n}(E)-S_{1,n}(E')|
=|w_u(\widetilde\rho_u(E)-\widetilde\rho_u(E'))+w_v(\widetilde\rho_v(E)-\widetilde\rho_v(E'))|
\le |w_u|\Delta\widetilde\rho_u+|w_v|\Delta\widetilde\rho_v.
\]
Since $\hat\rho_i\in[0,1]$, we have
\[
\widetilde\rho_i=\frac{\hat\rho_i-p}{1-2p}\in\left[-\frac{p}{1-2p},\frac{1-p}{1-2p}\right],
\]
so $|\widetilde\rho_i(E)-\widetilde\rho_i(E')|\le \frac{1}{1-2p}$.
Using $|w_u|,|w_v|\le W$ gives
\[
|S_{1,n}(E)-S_{1,n}(E')|\le 2W\cdot \frac{1}{1-2p}=\frac{2W}{1-2p}.
\]
\end{proof}

\begin{lemma}[Debiasing identities]
Conditioning on the network $E$ and the true label vector $\mathbf{L}=(l_i)_{i\in V}$:
\begin{align}
\E[w_i \mid E, \mathbf{L}] &= \1\{l_i=a\}, \label{eq:Eb_w}\\
\E[\widetilde\rho_i \mid E, \mathbf{L}] &= \rho_i. \label{eq:Eb_rhotilde}
\end{align}
Moreover, for each fixed $i$, $w_i$ depends only on $\hat l_i$ while $\widetilde\rho_i$ depends on $(\hat l_j)_{j\in V}$; under independent perturbations, conditional on $(E,\mathbf{L})$, $w_i$ is independent of $(\hat l_j)_{j\neq i}$.
\end{lemma}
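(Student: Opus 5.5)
The statement has three parts: the expectation of $w_i$, the expectation of $\widetilde\rho_i$, and the independence claim. For the first two I will compute the conditional law of each privatized label under randomized response and then use linearity.

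\emph{Identity for $w_i$.} Conditional on $(E,\mathbf L)$, the label $\hat l_i$ equals $l_i$ with probability $1-p$ and is flipped with probability $p$, independently across nodes and independently of $E$. Hence
\[
\E[\1\{\hat l_i=a\}\mid E,\mathbf L]=(1-p)\1\{l_i=a\}+p\,\1\{l_i=b\}=(1-2p)\1\{l_i=a\}+p,
\]
using $\1\{l_i=b\}=1-\1\{l_i=a\}$. Subtracting $p$ and dividing by $1-2p$, which is nonzero since $p<\tfrac12$, gives \eqref{eq:Eb_w}.

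\emph{Identity for $\widetilde\rho_i$.} This is exactly the unbiasedness computation in the proof of Proposition~\ref{prop:mvue}. I will repeat it with $E$ also conditioned on, noting that the weights $a_{ij}=e_{ij}/d_i$ are deterministic given $E$. The computation uses $\sum_j a_{ij}=1$, which requires $d_i>0$. For isolated nodes, Algorithm~\ref{alg:dp-connectedness} sets $\hat\rho_i=0$, and then $\widetilde\rho_i=-p/(1-2p)$. I would therefore state the identity for $d_i>0$, or adopt a convention for $\rho_i$ when $d_i=0$ (for instance $\rho_i:=-p/(1-2p)$, or excluding such nodes). This degree-zero case is the only real obstacle, and it is a matter of convention rather than mathematics.

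\emph{Independence claim.} $w_i$ is a deterministic function of $\hat l_i$ alone. The randomized-response draws are mutually independent conditional on $(E,\mathbf L)$, so $\hat l_i$ is independent of $(\hat l_j)_{j\neq i}$, and measurable functions of independent variables remain independent. I will also record a caveat useful for later lemmas: $\widetilde\rho_i$ does not depend on $\hat l_i$ when there are no self-loops ($e_{ii}=0$). In that case $w_i$ and $\widetilde\rho_i$ are conditionally independent, so
\[
\E[w_i\widetilde\rho_i\mid E,\mathbf L]=\1\{l_i=a\}\rho_i,
\]
which is what the Hájek numerator $S_{1,n}$ requires.
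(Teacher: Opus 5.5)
Your proof is correct and matches the paper's. The paper likewise computes the conditional probability that $\hat l_i=a$ under randomized response (splitting into the cases $l_i=a$ and $l_i=b$ where you use the single affine formula $(1-2p)\1\{l_i=a\}+p$), and it obtains $\E[\widetilde\rho_i\mid E,\mathbf L]=\rho_i$ by the same linearity argument with $\sum_j a_{ij}=1$. The paper leaves several points implicit that you make explicit, and your treatment of each is correct: you prove the independence claim, which the paper only asserts. You flag that for isolated nodes $\rho_i$ is undefined while Algorithm~\ref{alg:dp-connectedness} yields $\widetilde\rho_i=-p/(1-2p)$. You also note that $e_{ii}=0$ is needed for $w_i$ and $\widetilde\rho_i$ to be conditionally independent.
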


\begin{proof}
For \eqref{eq:Eb_w}, when $l_i=a$, $\Pp(\hat l_i=a)=1-p$, so
\[
\E[w_i\mid E,l_i=a]=\frac{(1-p)-p}{1-2p}=1.
\]
When $l_i=b$, $\Pp(\hat l_i=a)=p$, so
\[
\E[w_i\mid E,l_i=b]=\frac{p-p}{1-2p}=0.
\]

For \eqref{eq:Eb_rhotilde}, note that for any $j$,
\[
\E[\1\{\hat l_j=b\}\mid E,l_j]=
\begin{cases}
1-p,& l_j=b,\\
p,& l_j=a,
\end{cases}
= p + (1-2p)\1\{l_j=b\}.
\]
Taking conditional expectation and using linearity,
\[
\E[\hat\rho_i\mid E,\mathbf{L}]
=\sum_{j}a_{ij}\Bigl(p+(1-2p)\1\{l_j=b\}\Bigr)
= p + (1-2p)\sum_ja_{ij}\1\{l_j=b\}
= p+(1-2p)\rho_i.
\]
Thus
\[
\E[\widetilde\rho_i\mid E,\mathbf{L}]
=\frac{\E[\hat\rho_i\mid E,\mathbf{L}]-p}{1-2p}
=\rho_i.
\]
\end{proof}

\begin{lemma}[Concentration of $S_{0,n}$]
Let $W:=\max\{\frac{p}{1-2p},\frac{1-p}{1-2p}\}$. If $p<\frac{1}{2}$, then $|w_i|\le W$ almost surely and
\[
\Pp\!\left(\left|\frac{S_{0,n}}{n}-\frac{\E[S_{0,n}]}{n}\right|>t\right)
\le 2\exp\!\left(-\frac{2n t^2}{(2W)^2}\right),\qquad t>0.
\]
In particular, $\frac{S_{0,n}}{n}-\frac{\E[S_{0,n}]}{n}\toP 0$.
\label{lem:S0}
\end{lemma}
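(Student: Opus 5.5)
The plan is to apply Hoeffding's inequality to $S_{0,n}=\sum_{i\in V} w_i$, conditional on the network $E$ and the true labels $\mathbf{L}$. The two ingredients are boundedness of each summand and independence across nodes.

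\emph{Boundedness.} Since $\1\{\hat l_i=a\}\in\{0,1\}$ and $1-2p>0$ when $p<\tfrac12$, each $w_i$ takes one of exactly two values:
\[
\frac{-p}{1-2p} \quad\text{or}\quad \frac{1-p}{1-2p}.
\]
Hence $|w_i|\le W$ almost surely. Moreover $w_i$ lies in an interval of length
\[
\frac{1-p}{1-2p}+\frac{p}{1-2p}=\frac{1}{1-2p}\le 2W.
\]
I will simply use the cruder range $[-W,W]$, of length $2W$, to match the stated constant.

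\emph{Independence.} By the Debiasing identities lemma, $w_i$ depends only on $\hat l_i$. The randomized-response flips are independent across nodes given $(E,\mathbf{L})$, so the $w_i$ are conditionally independent.

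Hoeffding's inequality for $n$ independent variables, each with range of length $2W$, gives
\[
\Pp\!\left(\left|S_{0,n}-\E[S_{0,n}\mid E,\mathbf L]\right|>nt \,\middle|\, E,\mathbf L\right)\le 2\exp\!\left(-\frac{2(nt)^2}{n(2W)^2}\right)=2\exp\!\left(-\frac{2nt^2}{(2W)^2}\right).
\]
In the framework of the paper, $(E,\mathbf L)$ is fixed, so $\E[S_{0,n}]=\E[S_{0,n}\mid E,\mathbf L]=\#(\mathcal A)$ by \eqref{eq:Eb_w}. If one prefers to regard $(E,\mathbf L)$ as random, the bound is uniform in the conditioning, so integrating gives the same unconditional bound.

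The convergence-in-probability statement then follows directly. For fixed $t>0$ and fixed $p$, $W$ is a constant, so the right-hand side tends to $0$ as $n\to\infty$.

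There is no real obstacle here. The only points needing care are that $p<\tfrac12$ keeps $1-2p>0$, so that $W$ is finite, and that the centering constant is the conditional mean, which is deterministic given the fixed network and labels.
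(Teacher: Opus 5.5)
Your proposal is correct and follows the paper's proof: Hoeffding's inequality applied to $S_{0,n}=\sum_i w_i$, whose summands are independent and lie in $[-W,W]$. Your justification of independence, via nodewise-independent randomized response given $(E,\mathbf{L})$, is actually more accurate than the paper's. The paper attributes independence to Assumption~1, which only concerns $\#(\mathcal{A})/n$.
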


\begin{proof}
Each $w_i$ is a function of $\widehat L_i$ only, so $(w_i)_{i\in V}$ are independent under Assumption~1. They are bounded by $W$. Apply Hoeffding's inequality to the sum $S_{0,n}=\sum_i w_i$.
\end{proof}

\begin{lemma}[Bounded differences for $S_{1,n}$]
\label{lem:bdiff}  
Let $\hat{\mathbf{L}}$ and $\hat{\mathbf{L}}'$ differ only at node $k$ (i.e., $\hat l_j=\hat l'_j$ for all $j\neq k$). Under \cref{ass:bounded-degree} (bounded degree),
\[
\bigl|S_{1,n}(\hat{\mathbf{L}})-S_{1,n}(\hat{\mathbf{L}}')\bigr| \le c_k,
\]
where one may take
\begin{equation}
c_k \;\le\; \frac{2W}{1-2p}\Bigl(1+\sum_{i\in V}a_{ik}\Bigr)
\;\le\; \frac{2W}{1-2p}(1+\Gamma)\;=:\;C.
\label{eq:ck}
\end{equation}
\end{lemma}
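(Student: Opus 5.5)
Changing $\hat l_k$ to $\hat l'_k$ leaves every $w_j$ with $j\neq k$ unchanged, because $w_j$ depends only on $\hat l_j$. It also leaves $\hat\rho_i$ unchanged for every $i$ with $a_{ik}=0$, since $\hat\rho_i=\sum_j a_{ij}\1\{\hat l_j=b\}$ involves $\hat l_k$ only through the coefficient $a_{ik}$. The plan is therefore to decompose the difference $S_{1,n}(\hat{\mathbf{L}})-S_{1,n}(\hat{\mathbf{L}}')=\sum_i \bigl(w_i\tilde\rho_i-w'_i\tilde\rho'_i\bigr)$ into two pieces: a ``self'' term at $i=k$, and ``neighbor'' terms at the $i\neq k$ with $a_{ik}>0$, i.e.\ $i\in N(k)$. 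Each piece is then bounded using the uniform bounds already in hand: $|w_i|\le W$ from Lemma~\ref{lem:S0}, and $\tilde\rho_i\in[-\tfrac{p}{1-2p},\tfrac{1-p}{1-2p}]$, so that $|\tilde\rho_i|\le W$ and the range of $\tilde\rho_i$ has width $\tfrac{1}{1-2p}$, as in the proof of Lemma~\ref{lem:S1_edge_sens}.

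\textbf{Self term.} Write it as $w_k\tilde\rho_k-w'_k\tilde\rho'_k=(w_k-w'_k)\tilde\rho_k+w'_k(\tilde\rho_k-\tilde\rho'_k)$.
\begin{itemize}
\item $|w_k-w'_k|\le \tfrac{1}{1-2p}$, because $w$ takes only the two values $\tfrac{1-p}{1-2p}$ and $\tfrac{-p}{1-2p}$.
\item $\tilde\rho_k-\tilde\rho'_k=0$ unless there is a self-loop, since $a_{kk}=e_{kk}/d_k$; in any case its absolute value is at most $a_{kk}/(1-2p)$.
\end{itemize}
Together these give a bound of at most $\tfrac{W}{1-2p}+\tfrac{W a_{kk}}{1-2p}$, which is certainly at most $\tfrac{2W}{1-2p}$.

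\textbf{Neighbor terms.} For $i\neq k$ we have $w_i=w'_i$. The change in $\hat\rho_i$ is $a_{ik}\bigl(\1\{\hat l_k=b\}-\1\{\hat l'_k=b\}\bigr)$, so $|\tilde\rho_i-\tilde\rho'_i|\le a_{ik}/(1-2p)$. Summing over $i$ gives a contribution of at most $\tfrac{W}{1-2p}\sum_{i\neq k}a_{ik}$.

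Adding the two pieces yields $c_k\le\tfrac{2W}{1-2p}\bigl(1+\sum_i a_{ik}\bigr)$, which is the first inequality in \eqref{eq:ck}. The factor $2$ is loose but harmless.

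The second inequality needs a uniform bound $\sum_{i\in V}a_{ik}=\sum_{i\in N(k)}1/d_i\le\Gamma$. Under Assumption~\ref{ass:bounded-degree}, $N(k)$ has at most $\Delta$ elements and each nonzero term $1/d_i$ is at most $1$ (because $d_i\ge1$ whenever $a_{ik}>0$). So one can take $\Gamma=\Delta$, which yields the constant $C$.

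The only step needing care is this column-sum bound. The matrix $(a_{ij})$ is row-stochastic but not column-stochastic, and $\sum_i a_{ik}$ can grow like $d_k$ (for example, the hub of a star whose leaves all have degree one). This is exactly where bounded degree is used, and I would state explicitly that $\Gamma:=\max_k\sum_i a_{ik}\le\Delta$. Nodes with $d_i=0$ have $a_{i\cdot}\equiv0$ under the convention $\hat\rho_i=0$, so they contribute nothing and cause no difficulty.
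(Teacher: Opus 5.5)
Your proof is correct and follows the paper's argument. You bound the change in the $k$-th summand by $2W/(1-2p)$, observe that the neighbor summands keep $w_i$ fixed with $|\tilde\rho_i-\tilde\rho_i'|\le a_{ik}/(1-2p)$, and sum the two pieces. Your explicit argument that $\Gamma:=\max_k\sum_i a_{ik}\le\Delta$ under bounded degree fills in a step the paper leaves implicit, since the paper never defines $\Gamma$.
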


\begin{proof}
Changing $\hat l_k$ affects:  
(i) the weight $w_k$ in the $k$-th summand of $S_{1,n}$; and  
(ii) the indicators $\1\{\hat l_k=b\}$ inside each $\hat\rho_i$.

\medskip

First, since $|w_k|\le W$ and $|\widetilde\rho_k|\le \frac{1}{1-2p}$ (because $\hat\rho_k\in[0,1]$), the change in the $k$-th summand is at most
\[
\frac{2W}{1-2p}.
\]

Second, for $i\neq k$, $w_i$ does not change, and only $\hat\rho_i$ changes through the single indicator at index $k$:
\[
\bigl|\hat\rho_i(\hat{\mathbf{L}})-\hat\rho_i(\hat{\mathbf{L}}')\bigr|
=
\left|\sum_{j}a_{ij}\bigl(\1\{\hat l_j=b\}-\1\{\hat l'_j=b\}\bigr)\right|
\le a_{ik}.
\]
Therefore,
\[
\bigl|\widetilde\rho_i(\hat{\mathbf{L}})-\widetilde\rho_i(\hat{\mathbf{L}}')\bigr|
\le \frac{a_{ik}}{1-2p},
\quad\Rightarrow\quad
\bigl|w_i\widetilde\rho_i(\hat{\mathbf{L}})-w_i\widetilde\rho_i(\hat{\mathbf{L}}')\bigr|
\le \frac{Wa_{ik}}{1-2p}.
\]

Summing over $i\neq k$ gives an additional change bounded by
\[
\frac{W}{1-2p}\sum_{i\in V}a_{ik}.
\]
Adding the $k$-th term bound yields \eqref{eq:ck}.
\end{proof}

\begin{proof}[Proof of Theorem~\ref{thm:consistency_dp}]
By Equation~\eqref{eq:Eb_w},
\[
\E[S_{0,n}\mid E,\mathbf{L}]=\sum_i \1\{l_i=a\}=\#(\mathcal{A}),
\]
hence $\E[S_{0,n}]=\#(\mathcal{A})$ (treating $E,\mathbf{L}$ as fixed, or taking an outer expectation if random).
Moreover,
\[
\E[S_{1,n}\mid E,\mathbf{L}]=\sum_{i}\E[w_i\widetilde\rho_i\mid E,\mathbf{L}].
\]
Similarly, for each $i$,
\[
\E[w_i\widetilde\rho_i\mid E,\mathbf{L}]=\1\{l_i=a\}\rho_i,
\]
and therefore
\[
\E[S_{1,n}\mid E,\mathbf{L}]=\sum_{i\in A}\rho_i,
\qquad\Rightarrow\qquad
\E[S_{1,n}]=\sum_{i\in A}\rho_i.
\]

\smallskip

By Lemma~\ref{lem:S0},
\[
\frac{S_{0,n}}{n}-\frac{\E[S_{0,n}]}{n}\toP 0.
\]
Moreover, Lemma~\ref{lem:bdiff} gives a bounded-differences constant (uniformly in $k$) for the map
$\hat{\mathbf{L}}\mapsto S_{1,n}(\hat{\mathbf{L}})$, so McDiarmid's inequality implies
\[
\frac{S_{1,n}}{n}-\frac{\E[S_{1,n}]}{n}\toP 0.
\]

\smallskip
 
By Assumption~\ref{ass:piA}, $\E[S_{0,n}]/n=\#(\cA)/n\to \pi_\cA>0$.  
Together with $S_{0,n}/n\toP \E[S_{0,n}]/n$, we have $S_{0,n}/n\toP \pi_\cA$, hence
\[
\Pp\!\left(S_{0,n}/n>\pi_\cA/2\right)\to 1.
\]
In particular, $S_{0,n}>0$ with probability tending to one, so $\widetilde C^{A\to B}=S_{1,n}/S_{0,n}$ is well-defined w.p.\ tending to one.

\smallskip
Consider the vector
\[
\left(\frac{S_{1,n}}{n},\frac{S_{0,n}}{n}\right)\toP
\left(\frac{\E[S_{1,n}]}{n},\frac{\E[S_{0,n}]}{n}\right)
=
\left(\frac{1}{n}\sum_{i\in A}\rho_i,\ \frac{\#(\mathcal{A})}{n}\right).
\]
The map $f(x,y)=x/y$ is continuous on $\{(x,y):y\neq 0\}$.  
Since the second component converges in probability to $\pi_\cA>0$, the continuous mapping theorem yields
\[
\widetilde C^{\mathcal{A}\to \mathcal{B}}
=\frac{S_{1,n}/n}{S_{0,n}/n}
\;\toP\;
\frac{\frac{1}{n}\sum_{i\in \mathcal{A}}\rho_i}{\#(\mathcal{A})/n}
=\frac{1}{\#(\mathcal{A})}\sum_{i\in A}\rho_i
= C^{\mathcal{A}\to \mathcal{B}}.
\]

\smallskip 
Let $Z_n\mid S_{0,n} \sim \mathrm{Lap}\!\left(0,\frac{2(1-p)}{(1-2p)^2\varepsilon_e S_{0,n}}\right)$.
Fix any $t>0$ and define
\[
\mathcal G_n:=\left\{S_{0,n}/n>\pi_\cA/2\right\}.
\]
Then $\Pp(\mathcal G_n)\to 1$.  On $\mathcal G_n$, we have
\[
\frac{2(1-p)}{(1-2p)^2\varepsilon_e S_{0,n}}\le \frac{4(1-p)}{(1-2p)^2\varepsilon_e \pi_\cA n}.
\]
Using the Laplace tail bound $\Pp(|\mathrm{Lap}(0,b)|>t)=\exp(-t/b)$, we obtain
\begin{align*}
\Pp(|Z_n|>t)
&\le \Pp(\mathcal G_n^c) + \Pp(|Z_n|>t,\mathcal G_n) \\
&\le \Pp(\mathcal G_n^c)
    + \exp\!\left(-\frac{t}{\,2(1-p)/((1-2p)^2\varepsilon_e S_{0,n})}\right)\Big|_{\mathcal G_n} \\
&\le \Pp(\mathcal G_n^c)
    + \exp\!\left(-\frac{(1-2p)^2\varepsilon_e \pi_\cA n}{4(1-p)}\,t\right)
\;\longrightarrow\; 0,
\end{align*}
and hence $Z_n\toP 0$.

Finally, $\widetilde C^{\cA\to \cB}\toP C^{\cA\to \cB}$, and Slutsky's theorem yields
\[
\widehat C^{\cA\to \cB}_{\mathrm{DP}}
=\widetilde C^{\cA\to \cB}+Z_n
\;\toP\;
C^{\cA\to \cB}.
\]

\end{proof}

\subsection{Proof of Theorem~\ref{thm:privacy-guarantee}: Edge-Adjacent DP Guarantee for Binary Labels}\label{app:proof:thm:privacy-guarantee}

\begin{proof}

Consider the input labeled network $(\mathcal{V},\mathcal{E},\mathbf{L})$.
We analyze the privacy of Algorithm~\ref{alg:dp-connectedness} by decomposing it
into two stages.

In the first stage, the algorithm applies randomized response independently
to each node label in $\mathbf{L}$ with flip probability
$p=\frac{1}{1+e^{\varepsilon_\ell}}$, producing the perturbed labels
$\hat{\mathbf{L}}$.
By Fact~1, this mechanism is $\varepsilon_\ell$-differentially private
with respect to a change in a single node label.
Hence, the mapping
\[
(\mathcal{V},\mathcal{E},\mathbf{L})
\;\longmapsto\;
(\mathcal{V},\mathcal{E},\hat{\mathbf{L}})
\]
satisfies $\varepsilon_\ell$ node-level differential privacy.

Fix $\hat{\mathbf{L}}$, so that the weights $(w_i)_{i\in\mathcal{V}}$
are deterministic.
Let $\mathcal{E}$ and $\mathcal{E}'$ be two edge-adjacent edge sets that differ
by the addition or removal of a single undirected edge.
By Lemma~2, the statistic
\[
S_{1,n}(\mathcal{E})=\sum_{i\in\mathcal{V}} w_i\,\tilde{\rho}_i(\mathcal{E})
\]
has bounded edge sensitivity, namely
\[
\bigl|S_{1,n}(\mathcal{E})-S_{1,n}(\mathcal{E}')\bigr|
\;\le\;
\frac{2(1-p)}{(1-2p)^2}.
\]
Therefore, by the Laplace mechanism, releasing the Hájek estimator
\[
\frac{\widehat{S}_{1,n}}{S_{0,n}}
=
\frac{S_{1,n}(\mathcal{E})}{S_{0,n}}
+ Z_n,
\qquad
Z_n \sim \mathrm{Lap}\!\left(
0,\frac{2(1-p)}{(1-2p)^2\,\varepsilon_e\,S_{0,n}}
\right),
\]
is $\varepsilon_e$ edge-adjacent differentially private conditional on
$\hat{\mathbf{L}}$.
Any deterministic post-processing used to compute
$w_i$, $\tilde{\rho}_i$, $S_{0,n}$, and $S_{1,n}$ preserves this guarantee.

Finally, Algorithm~\ref{alg:dp-connectedness} is the composition of the
$\varepsilon_\ell$-DP label randomization mechanism and the
$\varepsilon_e$ edge-private release.
By Lemma~1 (the composition theorem), the overall mechanism satisfies
$(\varepsilon_\ell+\varepsilon_e)$ edge-adjacent differential privacy
with respect to the input $(\mathcal{V},\mathcal{E},\mathbf{L})$.
\end{proof}

\subsection{Proof of Theorem~\ref{thm:relative-variance-orders}: 
Asymptotic normality and perturbation order}
\label{app:proof:thm:ratio-asymp}

\begin{proof}
Let
\[
I_i:=\1\{\hat l_i=a\},
\qquad
w_i=\frac{I_i-p}{1-2p},
\qquad
\tilde\rho_i=\frac{\hat\rho_i-p}{1-2p}
=\frac{1}{1-2p}\left(\sum_{j\in N(i)} a_{ij}(1-I_j)-p\right),
\]
where
\[
a_{ij}:=\frac{e_{ij}}{d_i},
\qquad
d_i=\sum_{j\in V_n} e_{ij}.
\]

Define the per-node vector
\[
Z_i:=
\begin{pmatrix}
w_i\\
y_i
\end{pmatrix},
\qquad
y_i:=w_i\tilde\rho_i,
\qquad
T_n:=
\begin{pmatrix}
S_{0,n}\\
S_{1,n}
\end{pmatrix}
=
\sum_{i=1}^n Z_i.
\]

Because randomized response is applied independently across nodes, the variables
$\{I_i\}_{i=1}^n$ are independent. Note that $w_i$ depends only on $I_i$, while
$\tilde\rho_i$ depends only on $\{I_j:j\in N(i)\}$. Hence $Z_i$ is measurable with respect to
\[
\sigma\bigl(I_i,\{I_j:j\in N(i)\}\bigr).
\]

Therefore, if $\mathrm{dist}(i,j)\ge 3$, then the collections of randomized-response variables
used by $Z_i$ and $Z_j$ are disjoint, hence independent, and
\[
\Cov(Z_i,Z_j)=0,
\qquad
\mathrm{dist}(i,j)\ge 3.
\]

Consequently, we have the exact $2$-hop covariance decomposition
\begin{equation}\label{eq:2hop-cov-decomp}
\frac{1}{n}\Cov(T_n)
=
\frac{1}{n}\sum_{i=1}^n \Var(Z_i)
+
\frac{1}{n}\sum_{i=1}^n
\sum_{\substack{j\neq i:\\ \mathrm{dist}(i,j)\le 2}}
\Cov(Z_i,Z_j).
\end{equation}

Since there exists a constant $\Delta<\infty$ (independent of $n$) such that
\[
\max_{i\in V_n} d_i\le \Delta,
\]
each $i$ has only finitely many nodes within graph distance at most $2$. Hence \eqref{eq:2hop-cov-decomp} is a sum of local covariance contributions. Moreover, the explicit formulas derived below imply that
\[
\frac{1}{n}\Cov(T_n)\to \Sigma,
\]
Therefore, $\Cov(T_n)$ is of linear order in $n$.

Recall $Z_i := (w_i, y_i)^\top$ with $y_i := w_i\tilde\rho_i$ and
\[
w_i=\frac{I_i-p}{1-2p},
\qquad
\tilde\rho_i=\frac{1}{1-2p}\left(\sum_{k\in N(i)} a_{ik}(1-I_k)-p\right),
\qquad
a_{ik}=\frac{e_{ik}}{d_i},\ \ \sum_{k\in N(i)}a_{ik}=1,\ \ a_{ii}=0.
\]
Assume no self-loops. Since randomized response is applied independently across nodes, $I_i$ is independent of $\{I_k:k\in N(i)\}$, and thus
\[
w_i \perp\!\!\!\perp \tilde\rho_i.
\]

We write
\[
\Var(Z_i)=
\begin{pmatrix}
\Var(w_i) & \Cov(w_i,y_i)\\
\Cov(y_i,w_i) & \Var(y_i)
\end{pmatrix}.
\]

\paragraph{(i) $\Var(w_i)$.}
Since $w_i=\frac{I_i-p}{1-2p}$ and $\Var(I_i)=p(1-p)$,
\begin{equation}\label{eq:var-wi}
\Var(w_i)=\frac{\Var(I_i)}{(1-2p)^2}=\frac{p(1-p)}{(1-2p)^2}.
\end{equation}

\paragraph{(ii) $\Cov(w_i,y_i)$.}
Using $y_i=w_i\tilde\rho_i$ and independence $w_i \perp\!\!\!\perp \tilde\rho_i$,
\[
\Cov(w_i,y_i)=\Cov(w_i, w_i\tilde\rho_i)
=\E[w_i^2\tilde\rho_i]-\E[w_i]\E[w_i\tilde\rho_i]
=\E[\tilde\rho_i]\Big(\E[w_i^2]-\E[w_i]^2\Big)
=\Var(w_i)\,\E[\tilde\rho_i].
\]
Next, by randomized-response debiasing, for each $k$,
\[
\E\!\left[\frac{(1-I_k)-p}{1-2p}\right]=\mathbf 1\{l_k=b\},
\]
hence
\begin{equation}\label{eq:mean-rhotilde}
\E[\tilde\rho_i]
=\sum_{k\in N(i)} a_{ik}\,\mathbf 1\{l_k=b\}
=: \rho_i.
\end{equation}
Combining \eqref{eq:var-wi} and \eqref{eq:mean-rhotilde},
\begin{equation}\label{eq:cov-wi-yi}
\Cov(w_i,y_i)=\frac{p(1-p)}{(1-2p)^2}\,\rho_i.
\end{equation}

\paragraph{(iii) $\Var(y_i)=\Var(w_i\tilde\rho_i)$.}
Using again $w_i \perp\!\!\!\perp \tilde\rho_i$,
\begin{equation}\label{eq:var-product}
\Var(y_i)=\Var(w_i\tilde\rho_i)=\E[w_i^2]\Var(\tilde\rho_i)+\Var(w_i)\big(\E[\tilde\rho_i]\big)^2.
\end{equation}
We first compute $\Var(\tilde\rho_i)$. Since
\[
\tilde\rho_i=\frac{1}{1-2p}\left(\sum_{k\in N(i)} a_{ik}(1-I_k)-p\right),
\]
the constants do not affect variance, and independence of $\{I_k:k\in N(i)\}$ yields
\begin{equation}\label{eq:var-rhotilde}
\Var(\tilde\rho_i)
=\frac{1}{(1-2p)^2}\Var\!\Big(\sum_{k\in N(i)} a_{ik}(1-I_k)\Big)
=\frac{1}{(1-2p)^2}\sum_{k\in N(i)} a_{ik}^2\,\Var(I_k)
=\frac{p(1-p)}{(1-2p)^2}\sum_{k\in N(i)} a_{ik}^2.
\end{equation}
Next, compute $\E[w_i^2]$. By $\E[w_i^2]=\Var(w_i)+(\E[w_i])^2$ and randomized-response unbiasedness,
\[
\E[w_i]=\mathbf 1\{l_i=a\}=:A_i,
\]
so
\begin{equation}\label{eq:second-moment-wi}
\E[w_i^2]=\frac{p(1-p)}{(1-2p)^2}+A_i.
\end{equation}
Plugging \eqref{eq:var-rhotilde}, \eqref{eq:second-moment-wi}, and \eqref{eq:mean-rhotilde}
into \eqref{eq:var-product} gives
\begin{equation}\label{eq:var-yi-closed}
\Var(y_i)
=
\left(\frac{p(1-p)}{(1-2p)^2}+A_i\right)\cdot
\frac{p(1-p)}{(1-2p)^2}\sum_{k\in N(i)} a_{ik}^2
+\frac{p(1-p)}{(1-2p)^2}\rho_i^2.
\end{equation}

From \eqref{eq:var-wi}, \eqref{eq:cov-wi-yi}, and \eqref{eq:var-yi-closed},
\[
\Var(Z_i)=
\begin{pmatrix}
\displaystyle \frac{p(1-p)}{(1-2p)^2}
&
\displaystyle \frac{p(1-p)}{(1-2p)^2}\,\rho_i
\\[1.2ex]
\displaystyle \frac{p(1-p)}{(1-2p)^2}\,\rho_i
&
\displaystyle
\left(\frac{p(1-p)}{(1-2p)^2}+A_i\right)\cdot
\frac{p(1-p)}{(1-2p)^2}\sum_{k\in N(i)} a_{ik}^2
+\frac{p(1-p)}{(1-2p)^2}\rho_i^2
\end{pmatrix}.
\]

In particular, all entries of $\Var(Z_i)$ are determined by the local quantities $p$, $A_i$, $\rho_i$, and $\sum_{k\in N(i)} a_{ik}^2$, each of which is bounded. Hence $\Var(Z_i)$ is uniformly bounded in $i$ and $n$.


For $i\neq j$,
\[
\Cov(Z_i,Z_j)=
\begin{pmatrix}
\Cov(w_i,w_j) & \Cov(w_i,y_j)\\
\Cov(y_i,w_j) & \Cov(y_i,y_j)
\end{pmatrix}.
\]

\paragraph{(a) $\Cov(w_i,w_j)=0$ for $i\neq j$.}
Because $w_i$ and $w_j$ are independent,
\begin{equation}\label{eq:cov-wi-wj}
\Cov(w_i,w_j)=0,\qquad (i\neq j).
\end{equation}

\paragraph{(b) $\Cov(w_i,y_j)$.}
Fix $i\neq j$. Since $y_j=w_j\tilde\rho_j$, and for $i\neq j$ the random variable $w_j$
is independent of $(w_i,\tilde\rho_j)$, we have
\[
\Cov(w_i,y_j)=\Cov(w_i,w_j\tilde\rho_j)=\E[w_j]\Cov(w_i,\tilde\rho_j).
\]
Moreover,
\[
\tilde\rho_j=\frac{1}{1-2p}\left(\sum_{k\in N(j)} a_{jk}(1-I_k)-p\right),
\]
so $\Cov(w_i,\tilde\rho_j)\neq 0$ only if $i\in N(j)$, and in that case
\[
\Cov(w_i,\tilde\rho_j)
=
\frac{a_{ji}}{1-2p}\Cov(w_i,1-I_i)
=
-\frac{a_{ji}}{1-2p}\Cov(w_i,I_i).
\]
Since
\[
w_i=\frac{I_i-p}{1-2p},
\qquad
\Cov(w_i,I_i)=\frac{\Var(I_i)}{1-2p}=\frac{p(1-p)}{1-2p},
\]
it follows that
\[
\Cov(w_i,\tilde\rho_j)
=
-\frac{a_{ji}\,p(1-p)}{(1-2p)^2}.
\]
Since $\E[w_j]=A_j:=\mathbf 1\{l_j=a\}$, we obtain
\begin{equation}\label{eq:cov-wi-yj}
\Cov(w_i,y_j)
=
-\,A_j\,\frac{a_{ji}\,p(1-p)}{(1-2p)^2}\,\mathbf 1\{i\in N(j)\}.
\end{equation}
By symmetry,
\begin{equation}\label{eq:cov-yi-wj}
\Cov(y_i,w_j)
=
-\,A_i\,\frac{a_{ij}\,p(1-p)}{(1-2p)^2}\,\mathbf 1\{j\in N(i)\}.
\end{equation}

\paragraph{(c) $\Cov(y_i,y_j)$.}
To compute $\Cov(y_i,y_j)$, we use an exact polynomial expansion in the independent Bernoulli
variables $\{I_u\}$. Write
\[
w_i=\frac{I_i-p}{1-2p},
\qquad 
\tilde\rho_i=\frac{1}{1-2p}\left(\sum_{k\in N(i)} a_{ik}(1-I_k)-p\right)
=\frac{1-p-\sum_{k\in N(i)} a_{ik}I_k}{1-2p},
\]
and define the constants
\[
c_0:=\frac{1}{(1-2p)^2},\qquad
\beta:=(1-p)c_0,\qquad
\gamma_{ik}:=p\,c_0\,a_{ik},\qquad
\delta_{ik}:=-c_0\,a_{ik}.
\]
Because $a_{ii}=0$, we have $i\notin N(i)$, and expanding $y_i=w_i\tilde\rho_i$ yields the
identity
\begin{equation}\label{eq:y-expansion-again}
y_i
=
-p(1-p)c_0
+\beta\,I_i
+\sum_{k\in N(i)}(\gamma_{ik}+\delta_{ik}I_i)\,I_k.
\end{equation}
Since $\{I_u\}$ are independent, $\Cov(y_i,y_j)$ can only come from monomials that share at least
one common Bernoulli factor. On an undirected graph, for $\mathrm{dist}(i,j)\le 2$ there are two
possibilities.

\subparagraph{Case A: $\mathrm{dist}(i,j)=2$.}

Assume $i\notin N(j)$ and $j\notin N(i)$, but
\[
C_{ij}:=N(i)\cap N(j)\neq\emptyset .
\]

Fix $k\in C_{ij}$.  
The terms in $y_i$ involving $I_k$ can be grouped as
\[
(\gamma_{ik}+\delta_{ik}I_i)I_k,
\]
and similarly the terms in $y_j$ involving $I_k$ equal
\[
(\gamma_{jk}+\delta_{jk}I_j)I_k .
\]

All remaining terms in $y_i$ and $y_j$ depend on disjoint sets of Bernoulli variables
and hence are independent.

Since $I_i,I_j,I_k$ are mutually independent and
$(\gamma_{ik}+\delta_{ik}I_i)$,
$(\gamma_{jk}+\delta_{jk}I_j)$
are independent of $I_k$, we obtain
\[
\Cov\!\big(
(\gamma_{ik}+\delta_{ik}I_i)I_k,
(\gamma_{jk}+\delta_{jk}I_j)I_k
\big)
=
\E[\gamma_{ik}+\delta_{ik}I_i]\,
\E[\gamma_{jk}+\delta_{jk}I_j]\,
\Var(I_k).
\]

Summing over all shared neighbors gives
\[
\Cov(y_i,y_j)
=
\sum_{k\in C_{ij}}
(\gamma_{ik}+\delta_{ik}q_i)\,
(\gamma_{jk}+\delta_{jk}q_j)\,
\Var(I_k),
\]
where $q_i=\E[I_i]$ and $q_j=\E[I_j]$.

Under randomized response,
\[
\Var(I_k)=p(1-p),
\]
so we obtain
\begin{equation}\label{eq:cov-y-dist2-correct}
\Cov(y_i,y_j)
=
p(1-p)
\sum_{k\in N(i)\cap N(j)}
(\gamma_{ik}+\delta_{ik}q_i)\,
(\gamma_{jk}+\delta_{jk}q_j).
\end{equation}

\subparagraph{Case B: $\mathrm{dist}(i,j)=1$.}

Assume $j\in N(i)$ (equivalently $i\in N(j)$). Recall the exact expansion
\[
y_i
=
-p(1-p)c_0
+\beta I_i
+\sum_{k\in N(i)}\gamma_{ik}I_k
+\sum_{k\in N(i)}\delta_{ik}I_iI_k,
\qquad
y_j
=
-p(1-p)c_0
+\beta I_j
+\sum_{\ell\in N(j)}\gamma_{j\ell}I_\ell
+\sum_{\ell\in N(j)}\delta_{j\ell}I_jI_\ell.
\]
Since the $I$'s are independent across nodes, $\Cov(y_i,y_j)$ is fully determined by the terms
that involve shared Bernoulli variables. When $j\in N(i)$, the shared variables are:
\[
I_i,\quad I_j,\quad \{I_k: k\in N(i)\cap N(j)\},\quad \text{and the product }I_iI_j.
\]
All other terms depend on disjoint sets of $\{I_u\}$ and contribute zero covariance.

Let $q_u:=\E[I_u]$ and note $\Var(I_u)=q_u(1-q_u)=p(1-p)$ for all $u$ under randomized response.
We now compute the contributing covariances.

\medskip
\noindent
\textbf{(1) Linear--linear contributions.}
Only identical indices contribute:
\[
\Cov(\beta I_i,\gamma_{ji}I_i)=\beta\gamma_{ji}\Var(I_i),
\qquad
\Cov(\gamma_{ij}I_j,\beta I_j)=\gamma_{ij}\beta\Var(I_j),
\]
and for each common neighbor $k\in N(i)\cap N(j)$,
\[
\Cov(\gamma_{ik}I_k,\gamma_{jk}I_k)=\gamma_{ik}\gamma_{jk}\Var(I_k).
\]

\medskip
\noindent
\textbf{(2) Linear--quadratic contributions.}
Unlike the distance-$2$ case, here $I_i$ and $I_j$ appear in both expansions and thus generate
nonzero covariances with the shared quadratic monomial $I_iI_j$:
\[
\Cov(I_i,I_iI_j)=\E[I_j]\Var(I_i)=q_j\,\Var(I_i),\qquad
\Cov(I_j,I_iI_j)=\E[I_i]\Var(I_j)=q_i\,\Var(I_j),
\]
and similarly, for a common neighbor $k\in N(i)\cap N(j)$,
\[
\Cov(I_k,I_iI_k)=\E[I_i]\Var(I_k)=q_i\,\Var(I_k),\qquad
\Cov(I_k,I_jI_k)=\E[I_j]\Var(I_k)=q_j\,\Var(I_k).
\]
Therefore the nonzero linear--quadratic contributions are
\[
\Cov(\beta I_i,\delta_{ji}I_iI_j)=\beta\delta_{ji}\,q_j\,\Var(I_i),
\qquad
\Cov(\delta_{ij}I_iI_j,\beta I_j)=\delta_{ij}\beta\,q_i\,\Var(I_j),
\]
and for each $k\in N(i)\cap N(j)$,
\[
\Cov(\gamma_{ik}I_k,\delta_{jk}I_jI_k)=\gamma_{ik}\delta_{jk}\,q_j\,\Var(I_k),
\qquad
\Cov(\delta_{ik}I_iI_k,\gamma_{jk}I_k)=\delta_{ik}\gamma_{jk}\,q_i\,\Var(I_k).
\]

\medskip
\noindent
\textbf{(3) Quadratic--quadratic contributions.}
The shared quadratic monomial $I_iI_j$ contributes
\[
\Cov(\delta_{ij}I_iI_j,\delta_{ji}I_iI_j)=\delta_{ij}\delta_{ji}\Var(I_iI_j),
\qquad
\Var(I_iI_j)=q_iq_j\big(1-q_iq_j\big).
\]
Moreover, for each common neighbor $k\in N(i)\cap N(j)$, the quadratic monomials
$I_iI_k$ and $I_jI_k$ share the factor $I_k$ and hence have nonzero covariance:
\[
\Cov(I_iI_k,\;I_jI_k)=\E[I_i]\E[I_j]\Var(I_k)=q_iq_j\,\Var(I_k),
\]
so
\[
\Cov(\delta_{ik}I_iI_k,\delta_{jk}I_jI_k)=\delta_{ik}\delta_{jk}\,q_iq_j\,\Var(I_k).
\]
Collecting (1)--(3) and using $\Var(I_u)=p(1-p)$ for all $u$, we obtain
\begin{align}
\Cov(y_i,y_j)
&=
\beta\gamma_{ji}\Var(I_i)+\gamma_{ij}\beta\Var(I_j)
+\sum_{k\in N(i)\cap N(j)}\gamma_{ik}\gamma_{jk}\Var(I_k) \nonumber\\
&\quad
+\beta\delta_{ji}\,q_j\,\Var(I_i)+\delta_{ij}\beta\,q_i\,\Var(I_j)
+\sum_{k\in N(i)\cap N(j)}
\Big(\gamma_{ik}\delta_{jk}\,q_j+\delta_{ik}\gamma_{jk}\,q_i\Big)\Var(I_k) \nonumber\\
&\quad
+\delta_{ij}\delta_{ji}\Var(I_iI_j)
+\sum_{k\in N(i)\cap N(j)}\delta_{ik}\delta_{jk}\,q_iq_j\,\Var(I_k),
\label{eq:cov-y-dist1-correct-nocenter}
\end{align}
for $j\in N(i)$.

In particular, for each fixed local configuration, $\Cov(Z_i,Z_j)$ is an explicit function of the local parameters and hence is a finite local constant. Since the maximum degree is uniformly bounded, these local constants are uniformly bounded over all pairs $(i,j)$ with $\mathrm{dist}(i,j)\le 2$.


Define the filtration
\[
\mathcal F_k:=\sigma(I_1,\dots,I_k),\qquad k=0,1,\dots,n,
\]
with $\mathcal F_0$ trivial.
Define the $\mathbb R^2$-valued Doob martingale
\[
M_{n,k}:=\E[T_n\mid \mathcal F_k],\qquad k=0,1,\dots,n,
\]
and the martingale differences
\[
D_{n,k}:=M_{n,k}-M_{n,k-1},\qquad k=1,\dots,n.
\]
Then $\{(M_{n,k},\mathcal F_k)\}$ is a bivariate martingale, $M_{n,0}=\E[T_n]$, $M_{n,n}=T_n$, and
\begin{equation}\label{eq:doob-sum}
T_n-\E[T_n]=\sum_{k=1}^n D_{n,k},\qquad
\E[D_{n,k}\mid \mathcal F_{k-1}]=0.
\end{equation}

We show that the martingale differences $D_{n,k}$ are uniformly bounded given $\max_i d_i\le \Delta$. Changing the value of $I_k$ can only affect random variables that depend on $I_k$.
Since $w_i$ depends only on $I_i$, only $w_k$ is directly affected.
Moreover, $\tilde\rho_i$ depends on $\{I_j:j\in N(i)\}$,
so $I_k$ affects $\tilde\rho_i$ only if $k\in N(i)$,
that is, $i\in N(k)$.

Hence changing $I_k$ can affect $y_i=w_i\tilde\rho_i$
only for indices
\[
i\in\{k\}\cup N(k),
\]
whose cardinality is at most $1+\Delta$ by the bounded-degree assumption.

Next observe that
\[
|w_k|
=
\left|\frac{I_k-p}{1-2p}\right|
\le \frac{1}{1-2p},
\qquad
|\tilde\rho_i|
=
\left|\frac{\hat\rho_i-p}{1-2p}\right|
\le \frac{1}{1-2p},
\]
since $\hat\rho_i\in[0,1]$.

Therefore each affected summand in $(S_{0,n},S_{1,n})$ changes by at most a constant
depending only on $p$, and at most $1+\Delta$ summands can change.
Consequently there exists a constant $C=C(p,\Delta)<\infty$ such that
\[
\|T_n(I_k=1)-T_n(I_k=0)\|_\infty \le C.
\]

Since
\[
D_{n,k}
=
\E[T_n\mid \mathcal F_k]
-
\E[T_n\mid \mathcal F_{k-1}]
\]
is the conditional expectation difference obtained by revealing $I_k$,
it follows that
\[
\|D_{n,k}\|_\infty \le C
\qquad\text{a.s.}
\]
for all $n$ and all $k$.

Define the predictable quadratic variation matrix
\[
V_n:=\frac{1}{n}\sum_{k=1}^n \E\!\big[D_{n,k}D_{n,k}^\top \mid \mathcal F_{k-1}\big].
\]
Since the martingale differences $D_{n,k}$ are uniformly bounded, for any $\varepsilon>0$,
\[
\frac{1}{n}\sum_{k=1}^n
\E\!\Big[\|D_{n,k}\|_2^2\mathbf 1\{\|D_{n,k}\|_2>\varepsilon\sqrt n\}\,\Big|\,\mathcal F_{k-1}\Big]
=0
\quad\text{a.s. for all sufficiently large $n$,}
\]
so the Lindeberg condition holds.

Recall
\[
\Sigma_n:=\frac{1}{n}\Cov(T_n)=\frac{1}{n}\Var(T_n-\E[T_n]).
\]
For the Doob decomposition \eqref{eq:doob-sum}, the orthogonality of martingale differences implies
\begin{equation}\label{eq:var-sum-differences}
\Var(T_n-\E[T_n])
=\sum_{k=1}^n \E\!\big[D_{n,k}D_{n,k}^\top\big]
=\E\!\left[\sum_{k=1}^n \E\!\big[D_{n,k}D_{n,k}^\top\mid \mathcal F_{k-1}\big]\right].
\end{equation}
Hence
\[
\E[V_n]=\Sigma_n
\qquad\text{for every }n.
\]

To verify the conditional variance convergence required by the martingale CLT, write
\[
\Psi_{n,k}:=\E\!\big[D_{n,k}D_{n,k}^\top\mid \mathcal F_{k-1}\big],
\qquad
V_n=\frac{1}{n}\sum_{k=1}^n \Psi_{n,k}.
\]
Fix any matrix entry $(a,b)\in\{1,2\}^2$, and denote by $(V_n)_{ab}$ and $(\Psi_{n,k})_{ab}$ the corresponding entries.
Since $\|D_{n,k}\|_\infty\le C$, each entry of $\Psi_{n,k}$ is uniformly bounded by a constant depending only on $C$.

Moreover, by the same locality argument used above, changing one privatized bit $I_r$ affects only those $D_{n,k}$, and hence only those $\Psi_{n,k}$, for which $k$ lies in a bounded-radius neighborhood of $r$. Because the maximum degree is uniformly bounded, the number of such indices $k$ is uniformly bounded over $r$ and $n$. Therefore there exists a constant $C'<\infty$ such that, for every $r$,
\[
\big|(V_n)_{ab}(I)- (V_n)_{ab}(I^{(r)})\big|\le \frac{C'}{n},
\]
where $I^{(r)}$ denotes the vector obtained from $I=(I_1,\dots,I_n)$ by changing only the $r$-th coordinate.

By McDiarmid's inequality, for each fixed $(a,b)$ and each $t>0$,
\[
\Pr\!\left(\big|(V_n)_{ab}-\E[(V_n)_{ab}]\big|>t\right)
\le 2\exp(-c n t^2)
\]
for some constant $c>0$. Hence
\[
(V_n)_{ab}-\E[(V_n)_{ab}] \xrightarrow{p} 0.
\]
Since $V_n$ is $2\times 2$, this yields
\[
V_n-\E[V_n]\xrightarrow{p}0.
\]

If, in addition, $\Sigma_n\to \Sigma$ for some finite matrix $\Sigma$, then
\[
V_n \xrightarrow{p} \Sigma.
\]

Therefore, by the martingale CLT,
\[
\frac{T_n-\E[T_n]}{\sqrt n}
=\sqrt n\Big(\frac{T_n}{n}-\E\Big[\frac{T_n}{n}\Big]\Big)
\ \Rightarrow\ \mathcal N(0,\Sigma).
\]

Let $\bar T_n:=T_n/n=(S_{0,n}/n,S_{1,n}/n)^\top$ and define $g(x,y)=y/x$. Suppose
\[
\E[\bar T_n]\to (\mu_0,\mu_1),
\qquad
\mu_0>0.
\]
Then $\hat\theta_n=g(\bar T_n)$, and by the multivariate Delta method,
\[
\sqrt n\big(\hat\theta_n-\theta_n\big)\Rightarrow \mathcal N(0,\sigma^2),
\qquad
\theta_n:=\frac{\E[S_{1,n}]}{\E[S_{0,n}]}\ \longrightarrow\ \theta:=\frac{\mu_1}{\mu_0},
\]
with
\[
\sigma^2=\nabla g(\mu_0,\mu_1)^\top \Sigma\,\nabla g(\mu_0,\mu_1),
\qquad
\nabla g(\mu_0,\mu_1)=\Big(-\frac{\mu_1}{\mu_0^2},\ \frac{1}{\mu_0}\Big)^\top.
\]

Since a Laplace random variable with scale parameter $b$ has variance $2b^2$, we have
\[
\Var(Z_n\mid S_{0,n})=2\frac{c^2}{S_{0,n}^2}
\qquad\text{on }\{S_{0,n}>0\}.
\]
Hence
\[
\Var(Z_n)
=
\E[\Var(Z_n\mid S_{0,n})]
+
\Var(\E[Z_n\mid S_{0,n}]).
\]
Because $\E[Z_n\mid S_{0,n}]=0$, it follows that
\[
\Var(Z_n)=2c^2\,\E\!\left[\frac{1}{S_{0,n}^2}\mathbf{1}\{S_{0,n}>0\}\right].
\]

Now define
\[
G_n:=\left\{\left|\frac{S_{0,n}}{n}-\pi_\cA\right|\le \frac{\pi_\cA}{2}\right\}.
\]
Since $S_{0,n}=\sum_i w_i$, where the $w_i$ are independent and uniformly bounded, Hoeffding's inequality implies
\[
\Pr(G_n^c)\le 2\exp(-c_0 n)
\]
for some constant $c_0>0$ and all sufficiently large $n$.

On the event $G_n$,
\[
\frac{\pi_\cA}{2}n \le S_{0,n}\le \frac{3\pi_\cA}{2}n,
\]
and therefore
\[
\frac{4}{9\pi_\cA^2 n^2}
\le
\frac{1}{S_{0,n}^2}
\le
\frac{4}{\pi_\cA^2 n^2}.
\]
Thus,
\[
\E\!\left[\frac{1}{S_{0,n}^2}\mathbf{1}(G_n)\right]
\le
\frac{4}{\pi_\cA^2 n^2},
\]
and
\[
\E\!\left[\frac{1}{S_{0,n}^2}\mathbf{1}(G_n)\right]
\ge
\frac{4}{9\pi_\cA^2 n^2}\Pr(G_n).
\]
Since $\Pr(G_n)\to 1$, this yields
\[
\E\!\left[\frac{1}{S_{0,n}^2}\mathbf{1}(G_n)\right]
=
\Theta(n^{-2}).
\]

\end{proof}

\subsection{Proof of Proposition~\ref{prop:noisystats}: Consistency of Private Regression Estimators}\label{app:prop:debias-reg-coefficient-consistent}

\begin{lemma}[Errors-in-variables correction]
\label{lem:eiv-correction}
Suppose
\[
y_i=\alpha+\beta x_i+v_i,
\qquad
\E[v_i\mid x_i]=0,
\]
and let
\[
\hat y_i = y_i+\eta_i,
\qquad
\hat x_i = x_i+u_i,
\]
where
\[
\E[\eta_i\mid y_i]=0,
\qquad
\E[u_i\mid x_i]=0,
\qquad
\E[u_i^2\mid x_i]=\sigma^2.
\]
If \(\beta^*\) denotes the slope coefficient from the regression of \(\hat y\) on \(\hat x\), then the standard errors-in-variables correction is
\[
\tilde\beta
=
\beta^*
\frac{\left(\frac{1}{n-1}\right)\sum_i (\hat x_i-\bar{\hat x})^2}
{\left(\frac{1}{n-1}\right)\sum_i (\hat x_i-\bar{\hat x})^2-\sigma^2},
\]
which is consistent for the true regression coefficient \(\beta\).
The corresponding intercept estimator is
\[
\tilde\alpha=\bar y-\tilde\beta\,\bar x,
\]
which is consistent for \(\alpha\).
\end{lemma}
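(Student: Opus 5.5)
The plan is to write $\beta^*$ as a ratio of sample moments, find the probability limits of the numerator and denominator separately, and then apply the continuous mapping theorem. First we make explicit the regularity implicit in the statement:
\begin{itemize}
\item a law of large numbers holds for the first and second sample moments of $(\hat x_i,\hat y_i)$ (for instance i.i.d.\ draws, or the bounded-dependence structure used in the proof of \Cref{thm:relative-variance-orders});
\item $\Var(x_i)=:\sigma_x^2>0$ in the limit;
\item the measurement noise $u_i$ is uncorrelated with $(x_i,v_i,\eta_i)$.
\end{itemize}
The conditional-mean conditions in the statement alone do not give $\Cov(u_i,\eta_i)=0$ or $\Cov(u_i,v_i)=0$, so some such uncorrelatedness is needed.

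\emph{Denominator.} Write $\widehat{s}^2_{x}=\frac{1}{n-1}\sum_i(\hat x_i-\bar{\hat x})^2$. Expanding $\hat x_i=x_i+u_i$ gives three pieces: the sample variance of $x$, the sample variance of $u$, and twice the sample covariance of $x$ and $u$. Since $\E[u_i\mid x_i]=0$, the cross term has mean zero. By the LLN it vanishes, and
\[
\widehat{s}^2_{x}\toP \sigma_x^2+\sigma^2 .
\]

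\emph{Numerator.} The sample covariance $\widehat{s}_{xy}$ of $(\hat x,\hat y)$ expands into four cross covariances: $(x,y)$, $(x,\eta)$, $(u,y)$ and $(u,\eta)$.
\begin{itemize}
\item From $y_i=\alpha+\beta x_i+v_i$ with $\E[v_i\mid x_i]=0$, the first converges to $\beta\sigma_x^2$.
\item The second has population value $\Cov(x_i,\eta_i)=0$. This follows by iterated expectations because $\E[\eta_i\mid y_i]=0$ and $x_i$ enters only through $y_i$ and $v_i$; if needed, this is where the added uncorrelatedness assumption is used.
\item The third and fourth vanish by the uncorrelatedness of $u_i$ with $(x_i,v_i,\eta_i)$.
\end{itemize}
Hence
\[
\beta^*=\widehat{s}_{xy}/\widehat{s}^2_{x}\toP \frac{\beta\sigma_x^2}{\sigma_x^2+\sigma^2},
\]
which is the classical attenuation.

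\emph{Slope.} The correction factor $\widehat{s}^2_{x}/(\widehat{s}^2_{x}-\sigma^2)$ is a continuous function of $\widehat{s}^2_{x}$ away from $\widehat{s}^2_{x}=\sigma^2$. Since $\widehat{s}^2_{x}-\sigma^2\toP\sigma_x^2>0$, it converges in probability to $(\sigma_x^2+\sigma^2)/\sigma_x^2$. Multiplying the two limits via Slutsky gives $\tilde\beta\toP\beta$.

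\emph{Intercept.} Interpreting $\bar y,\bar x$ as averages of the observed $\hat y,\hat x$, the noise terms have mean zero. By the LLN, $\bar{\hat y}\toP \E y=\alpha+\beta\E x$ and $\bar{\hat x}\toP\E x$. Then $\tilde\alpha=\bar{\hat y}-\tilde\beta\,\bar{\hat x}\toP\alpha$ by continuous mapping.

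The main obstacle is justifying that the cross moments involving $u$ and $\eta$ vanish. In the network application $\eta_i$ is built from the neighbours' $u_j$, so it is not independent of everything. I would first argue that $\eta_i$ involves only $u_j$ with $j\in N(i)$, $j\neq i$, so it is independent of $u_i$. I would then obtain the LLN from bounded dependence, with variance $O(n)$ of each sum by the same $2$-hop covariance counting as in the proof of \Cref{thm:relative-variance-orders}, followed by Chebyshev.
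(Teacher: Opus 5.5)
Your route is the paper's: write $\beta^*$ as a ratio of sample moments, show term by term via the LLN that the denominator tends to $\Var(x)+\sigma^2$ and the numerator to $\beta\Var(x)$, undo the attenuation by continuous mapping, and obtain $\tilde\alpha=\bar{\hat y}-\tilde\beta\bar{\hat x}$ by the LLN and Slutsky. Where you differ is in stating the regularity conditions explicitly, and this is warranted. The paper asserts that the $\nu$- and $\eta$-cross terms vanish ``since $\E[\nu_i\mid x_i]=0$'' and ``since $\E[\eta_i\mid y_i]=0$'', but these conditions say nothing about $u_i$. For example, $\eta_i=u_i$ with $u_i$ independent of $(x_i,v_i)$ satisfies every hypothesis of the lemma, yet gives $\tilde\beta\toP\beta+\sigma^2/\Var(x)$. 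So your added assumptions ($\Var(x)>0$, an LLN, and $u_i$ uncorrelated with $v_i$ and $\eta_i$) are genuinely needed.

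One step is wrong as written: $\Cov(x_i,\eta_i)=0$ does not follow from $\E[\eta_i\mid y_i]=0$ by iterated expectations, because $x_i$ is not a function of $y_i$. Take $\eta_i=c\,(x_i-\E[x_i\mid y_i])$ with $u_i$ independent of $(x_i,v_i)$. This satisfies the lemma's hypotheses and every assumption on your list. Yet $\Cov(x_i,\eta_i)=c\,\E[\Var(x_i\mid y_i)]$, which is generally nonzero, so $\tilde\beta\toP\beta+c\,\E[\Var(x_i\mid y_i)]/\Var(x)$. Your hedge (``this is where the added uncorrelatedness assumption is used'') does not cover this case, since that assumption concerns $u_i$, not $x_i$ versus $\eta_i$. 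You need to add $\Cov(x_i,\eta_i)=0$ explicitly, e.g.\ by taking $\eta_i$ independent of $(x_i,u_i)$. The paper's proof skips the same point. In the network application the condition does hold: $\eta_i=d_i^{-1}\sum_{j\in N(i)}u_j$ uses only neighbours' noise, which is independent of the labels and, absent self-loops, of $u_i$.
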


\begin{proof}
We first show the consistency of $\tilde{\beta}$.

Let $\beta^*$ denote the coefficient from the regression of $\hat y$ on $\hat x$, namely
\begin{align*}
\beta^*
=
\frac{\sum_i (\hat x_i-\bar{\hat x})(\hat y_i-\bar{\hat y})}
{\sum_i (\hat x_i-\bar{\hat x})^2}.
\end{align*}

We begin with the numerator. Since
\[
y_i = \alpha + \beta x_i + \nu_i,\qquad
\hat y_i = y_i + \eta_i,\qquad
\hat x_i = x_i + u_i,
\]
we have
\[
\hat y_i = \alpha + \beta x_i + \nu_i + \eta_i.
\]
Therefore,
\begin{align*}
\sum_i (\hat x_i-\bar{\hat x})(\hat y_i-\bar{\hat y})
&=
\sum_i (\hat x_i-\bar{\hat x})(\alpha + \beta x_i + \nu_i + \eta_i - \bar{\hat y}).
\end{align*}
Since $\sum_i (\hat x_i-\bar{\hat x})=0$, the intercept term vanishes, so
\begin{align*}
\sum_i (\hat x_i-\bar{\hat x})(\hat y_i-\bar{\hat y})
=
\beta \sum_i (\hat x_i-\bar{\hat x})x_i
+
\sum_i (\hat x_i-\bar{\hat x})\nu_i
+
\sum_i (\hat x_i-\bar{\hat x})\eta_i.
\end{align*}

Now note that
\[
\hat x_i-\bar{\hat x}
=
(x_i+u_i)-(\bar x+\bar u)
=
(x_i-\bar x)+(u_i-\bar u).
\]
Hence,
\begin{align*}
\sum_i (\hat x_i-\bar{\hat x})x_i
&=
\sum_i (x_i-\bar x)x_i + \sum_i (u_i-\bar u)x_i \\
&=
\sum_i (x_i-\bar x)^2 + \sum_i (u_i-\bar u)x_i.
\end{align*}
By the law of large numbers and the assumption $\E[u_i\mid x_i]=0$,
\begin{align*}
\frac{1}{n}\sum_i (u_i-\bar u)x_i \xrightarrow{p} 0.
\end{align*}

Similarly, since $\E[\nu_i\mid x_i]=0$, we have
\begin{align*}
\frac{1}{n}\sum_i (\hat x_i-\bar{\hat x})\nu_i \xrightarrow{p} 0.
\end{align*}
Also, since $\E[\eta_i\mid y_i]=0$ and $\hat x_i$ has bounded second moment under the model assumptions,
\begin{align*}
\frac{1}{n}\sum_i (\hat x_i-\bar{\hat x})\eta_i \xrightarrow{p} 0.
\end{align*}
Therefore,
\begin{align*}
\frac{1}{n}\sum_i (\hat x_i-\bar{\hat x})(\hat y_i-\bar{\hat y})
\xrightarrow{p}
\beta\,\Var(x).
\end{align*}

Now consider the denominator:
\begin{align*}
\frac{1}{n}\sum_i (\hat x_i-\bar{\hat x})^2
&=
\frac{1}{n}\sum_i \bigl[(x_i-\bar x)+(u_i-\bar u)\bigr]^2 \\
&=
\frac{1}{n}\sum_i (x_i-\bar x)^2
+
\frac{2}{n}\sum_i (x_i-\bar x)(u_i-\bar u)
+
\frac{1}{n}\sum_i (u_i-\bar u)^2.
\end{align*}
By the law of large numbers,
\begin{align*}
\frac{1}{n}\sum_i (x_i-\bar x)^2 &\xrightarrow{p} \Var(x), \\
\frac{1}{n}\sum_i (x_i-\bar x)(u_i-\bar u) &\xrightarrow{p} 0, \\
\frac{1}{n}\sum_i (u_i-\bar u)^2 &\xrightarrow{p} \E[u_i^2]=\sigma^2.
\end{align*}
Therefore,
\begin{align*}
\frac{1}{n}\sum_i (\hat x_i-\bar{\hat x})^2
\xrightarrow{p}
\Var(x)+\sigma^2.
\end{align*}

Combining the numerator and denominator and applying the continuous mapping theorem, we obtain
\begin{align*}
\beta^*
\xrightarrow{p}
\frac{\beta\,\Var(x)}{\Var(x)+\sigma^2}.
\end{align*}
Moreover,
\begin{align*}
\frac{1}{n-1}\sum_i (\hat x_i-\bar{\hat x})^2
\xrightarrow{p}
\Var(x)+\sigma^2.
\end{align*}
Hence, applying the continuous mapping theorem again,
\begin{align*}
\tilde{\beta}
&=
\beta^*
\frac{\left(\frac{1}{n-1}\right)\sum_i (\hat x_i-\bar{\hat x})^2}
{\left(\frac{1}{n-1}\right)\sum_i (\hat x_i-\bar{\hat x})^2-\sigma^2} \\
&\xrightarrow{p}
\frac{\beta\,\Var(x)}{\Var(x)+\sigma^2}
\cdot
\frac{\Var(x)+\sigma^2}{\Var(x)}
=
\beta.
\end{align*}
This establishes the consistency of $\tilde{\beta}$.

Next we show the consistency of $\tilde{\alpha}$. Since
\[
\hat y_i = y_i+\eta_i = \alpha + \beta x_i + \nu_i + \eta_i,
\]
we have
\begin{align*}
\bar{\hat y}
=
\alpha + \beta \bar x + \bar \nu + \bar \eta.
\end{align*}
Also,
\[
\bar{\hat x}=\bar x+\bar u.
\]
Thus,
\begin{align*}
\tilde{\alpha}
&=
\bar{\hat y}-\tilde{\beta}\bar{\hat x} \\
&=
\alpha + \beta \bar x + \bar \nu + \bar \eta - \tilde{\beta}(\bar x+\bar u) \\
&=
\alpha + (\beta-\tilde{\beta})\bar x - \tilde{\beta}\bar u + \bar \nu + \bar \eta.
\end{align*}
By the law of large numbers,
\begin{align*}
\bar u \xrightarrow{p} 0,\qquad
\bar \nu \xrightarrow{p} 0,\qquad
\bar \eta \xrightarrow{p} 0.
\end{align*}
Since $\tilde{\beta}\xrightarrow{p}\beta$, it follows that
\[
\tilde{\alpha}\xrightarrow{p}\alpha.
\]
Therefore, $\tilde{\alpha}$ is consistent for $\alpha$.
\end{proof}

Based on the above lemma, we give the formal proof of Proposition~\ref{prop:noisystats}.
\begin{proof}
We show that the slope estimate produced by Algorithm 2 is consistent.

First, let
\[
\beta^*
:=
\frac{\operatorname{ncov}(x,y)}{\operatorname{nvar}(x)}
\]
denote the regression coefficient computed from the bounded privatized sample
\[
\{(\hat x_i,\hat y_i)\}_{i=1}^n
\]
produced by Algorithm 2. By Algorithm 3, the released slope estimator is
\[
\hat\beta
=
\frac{\operatorname{ncov}(x,y)+L_1}{\operatorname{nvar}(x)+L_2},
\]
whenever $\operatorname{nvar}(x)+L_2>0$, where
\[
L_1 \sim \operatorname{Lap}\!\left(0,\frac{3\Delta_1}{\varepsilon}\right),
\qquad
L_2 \sim \operatorname{Lap}\!\left(0,\frac{3\Delta_2}{\varepsilon}\right).
\]
Since the input pairs $(\hat x_i,\hat y_i)$ are bounded, the sensitivity bounds $\Delta_1,\Delta_2$ are bounded uniformly in $n$, and hence
\[
L_1=O_p(1), \qquad L_2=O_p(1).
\]

Now,
\[
\hat\beta-\beta^*
=
\frac{\operatorname{ncov}(x,y)+L_1}{\operatorname{nvar}(x)+L_2}
-
\frac{\operatorname{ncov}(x,y)}{\operatorname{nvar}(x)}
=
\frac{L_1\operatorname{nvar}(x)-\operatorname{ncov}(x,y)L_2}
{\operatorname{nvar}(x)\bigl(\operatorname{nvar}(x)+L_2\bigr)}.
\]
Because $(\hat x_i,\hat y_i)$ are bounded, the law of large numbers gives
\[
\frac{1}{n}\operatorname{nvar}(x)\xrightarrow{p}\Var(x),\qquad
\frac{1}{n}\operatorname{ncov}(x,y)\xrightarrow{p}\Cov(x,y).
\]
Assume $\Var(x)>0$. Then
\[
\operatorname{nvar}(x)=\Theta_p(n),\qquad
\operatorname{ncov}(x,y)=O_p(n).
\]
Therefore,
\[
L_1\operatorname{nvar}(x)=O_p(n),\qquad
\operatorname{ncov}(x,y)L_2=O_p(n),
\]
while
\[
\operatorname{nvar}(x)\bigl(\operatorname{nvar}(x)+L_2\bigr)=\Theta_p(n^2).
\]
It follows that
\[
\hat\beta-\gamma_n=o_p(1).
\]
Moreover, since $\operatorname{nvar}(x)=\Theta_p(n)$ and $L_2=O_p(1)$,
\[
\Pr \;\bigl(\operatorname{nvar}(x)+L_2>0\bigr)\to 1,
\]
so the event that Algorithm 3 returns $\bot$ is asymptotically negligible. Hence the slope estimator produced by \NoisyStats is consistent for $\beta^*$, the regression coefficient based on the truncated-Laplace-perturbed sample.

By Lemma~\ref{lem:eiv-correction}, after the debiasing step, the regression coefficient computed from the truncated-Laplace-perturbed sample yields a consistent estimator of the true regression coefficient $\beta$. Since Step 1 shows that the \NoisyStats estimate $\hat\beta$ differs from $\beta^*$ by an $o_p(1)$ term, applying the same debiasing map to $\hat\beta$ and invoking the continuous mapping theorem implies that the resulting debiased estimator is also consistent for $\beta$.

Therefore, the composition of Algorithm 2, Algorithm 3, and the debiasing step yields a consistent estimator of the true regression coefficient.
\end{proof}

\subsection{Proof of \cref{thm:privacy-guarantee-continuous}: Edge-Adjacent DP Guarantee for Regression Estimates}\label{app:thm:privacy-guarantee-continuous}

We first provide a helper lemma that determines the global sensitivity of variance and covariance terms ($\operatorname{ncov}(x,y)$ and $\operatorname{nvar}(x)$) in the regression coefficients. 

\begin{lemma}[Global sensitivity of variance and covariance on general intervals]
\label{lem:gs_nvar_ncov_general_intervals}
Let $n\ge 2$. Suppose $x,\hat x\in[a,b]^n$ and $y,\hat y\in[a',b']^n$ are neighboring databases that differ in at most two indices (i.e., there exists $k\in\{1,\dots,n\}$ such that $(x_i,y_i)=(\hat x_i,\hat y_i)$ for all $i\neq k$), where $a<b$ and $a'<b'$. Define
\[
\mathrm{nvar}(x)=\sum_{i=1}^n x_i^2-\frac{1}{n}\Big(\sum_{i=1}^n x_i\Big)^2,
\qquad
\mathrm{ncov}(x,y)=\sum_{i=1}^n x_i y_i-\frac{1}{n}\Big(\sum_{i=1}^n x_i\Big)\Big(\sum_{i=1}^n y_i\Big).
\]
Then
\[
\big|\mathrm{nvar}(x)-\mathrm{nvar}(\hat x)\big|
\le \Big(1-\frac{1}{n}\Big)(b-a)^2,
\]
and
\[
\big|\mathrm{ncov}(x,y)-\mathrm{ncov}(\hat x,\hat y)\big|
\le 2\Big(1-\frac{1}{n}\Big)(b-a)(b'-a').
\]
Consequently,
\[
\mathrm{GS}_{\mathrm{nvar}}=\Big(1-\frac{1}{n}\Big)(b-a)^2,
\qquad
\mathrm{GS}_{\mathrm{ncov}}=2\Big(1-\frac{1}{n}\Big)(b-a)(b'-a').
\]
\end{lemma}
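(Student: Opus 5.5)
The plan is a direct algebraic computation. Fix the index $k$ at which the two databases differ, and regard $\mathrm{nvar}$ and $\mathrm{ncov}$ as functions of the $k$-th coordinate(s) only, holding the other $n-1$ entries fixed. The key step is to rewrite each statistic as a centered square (respectively, a centered product) around the mean of the \emph{other} entries, up to an additive constant. Those means lie in $[a,b]$ and $[a',b']$, which gives the stated bounds immediately.

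\textbf{Variance.} Write $S_x:=\sum_{i\neq k}x_i$ and $m_x:=S_x/(n-1)\in[a,b]$, and let $t:=x_k$. Then
\[
\mathrm{nvar}(x)=t^2-\tfrac1n(S_x+t)^2+K=\Bigl(1-\tfrac1n\Bigr)t^2-\tfrac{2S_x}{n}t+K',
\]
where $K,K'$ do not depend on $t$. Since $2S_x/n=2(1-\tfrac1n)m_x$, completing the square gives
\[
\mathrm{nvar}(x)=\Bigl(1-\tfrac1n\Bigr)(t-m_x)^2+K''.
\]
Hence, with $s:=\hat x_k$,
\[
\mathrm{nvar}(x)-\mathrm{nvar}(\hat x)=\Bigl(1-\tfrac1n\Bigr)\bigl[(t-m_x)^2-(s-m_x)^2\bigr].
\]
Both squares lie in $[0,(b-a)^2]$ because $t,s,m_x\in[a,b]$. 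Their difference is therefore at most $(b-a)^2$ in absolute value, which is the first bound.

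\textbf{Covariance.} Define $S_y$ and $m_y\in[a',b']$ analogously, and set $u:=y_k$. Expanding $\sum_i x_iy_i-\tfrac1n(S_x+t)(S_y+u)$ gives
\[
\mathrm{ncov}(x,y)=\Bigl(1-\tfrac1n\Bigr)tu-\tfrac{S_y}{n}t-\tfrac{S_x}{n}u+K=\Bigl(1-\tfrac1n\Bigr)(t-m_x)(u-m_y)+K''',
\]
with a constant independent of $(t,u)$. So the difference equals $(1-\tfrac1n)$ times a difference of two products, each of the form $(t-m_x)(u-m_y)$. Each such product lies in $[-(b-a)(b'-a'),\,(b-a)(b'-a')]$, so the difference of two of them is bounded by $2(b-a)(b'-a')$.

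The only real obstacle is bookkeeping. First, check carefully that the linear coefficients match $(1-\tfrac1n)m_x$ and $(1-\tfrac1n)m_y$ exactly, using $S_x/n=(1-\tfrac1n)S_x/(n-1)$. Second, interpret ``differ in at most two indices'' consistently with the parenthetical definition: a single index $k$ at which both $x$ and $y$ change. For the ``Consequently'' clause, the computations above give upper bounds on the global sensitivities, which is all Algorithm~\ref{alg:noisystats-bounded} uses.

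If equality is wanted, I would exhibit tight examples. For $\mathrm{nvar}$: take all other entries equal to $a$ and move $x_k$ from $a$ to $b$. For $\mathrm{ncov}$: take all other entries at the midpoints $\bigl((a+b)/2,(a'+b')/2\bigr)$ and move $(x_k,y_k)$ from $(a,b')$ to $(b,b')$. This changes the centered product from $-(b-a)(b'-a')/4$ to $+(b-a)(b'-a')/4$. That is only half the stated constant, so attaining $2(b-a)(b'-a')$ may not be possible. I would therefore state the result as an upper bound that is sufficient for calibrating the Laplace noise.
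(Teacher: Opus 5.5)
Your proof is correct under the reading fixed by the statement's own parenthetical: a single index $k$ at which $(x_k,y_k)$ may change. It takes a different route from the paper.

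\textbf{The paper's route.} It rescales affinely to $[0,1]$ and uses the homogeneity identities $\mathrm{nvar}(x)=(b-a)^2\,\mathrm{nvar}(\tilde x)$ and $\mathrm{ncov}(x,y)=(b-a)(b'-a')\,\mathrm{ncov}(\tilde x,\tilde y)$. It then cites Lemma~19 of Alabi et al., which bounds the change of both statistics by $1-\frac1n$ on $[0,1]^n$. The factor $2$ in the covariance bound is inserted by hand. The paper's displayed equality carrying $2(b-a)(b'-a')$ is valid only as an inequality, and its stated justification is that an edge toggle moves two $y$-coordinates.

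\textbf{Your route.} You complete the square around the leave-one-out means: $\mathrm{nvar}=(1-\frac1n)(x_k-m_x)^2+\text{const}$ and $\mathrm{ncov}=(1-\frac1n)(x_k-m_x)(y_k-m_y)+\text{const}$. This is self-contained, needs neither the rescaling nor the external lemma, and makes tightness transparent.

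\textbf{What your representation gives beyond what you extract.} The map $(t,u)\mapsto(t-m_x)(u-m_y)$ is bilinear, with $m_x\in[a,b]$ and $m_y\in[a',b']$. Checking the four corners shows its range over $[a,b]\times[a',b']$ has length at most $(b-a)(b'-a')$. So the sharp single-index bound is $(1-\frac1n)(b-a)(b'-a')$. It is attained by putting all other entries at $(a,a')$ and moving $(x_k,y_k)$ from $(a,a')$ to $(b,b')$. This is exactly what the cited lemma yields after correct scaling. It confirms your suspicion that the equality $\mathrm{GS}_{\mathrm{ncov}}=2(1-\frac1n)(b-a)(b'-a')$ fails under this reading: the true constant is half the stated one. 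Your midpoint example reaches only a quarter of the stated constant, not half as you wrote.

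\textbf{Your choice of reading is the only viable one.} If $x$ could change at two indices, the $\mathrm{nvar}$ claim would fail. Moving two entries of an all-$a$ vector to $b$ raises $\mathrm{nvar}$ from $0$ to $2(1-\frac2n)(b-a)^2$, which exceeds $(1-\frac1n)(b-a)^2$ for $n\ge 4$.

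\textbf{Where the factor $2$ belongs.} It is needed in the downstream use, where an edge toggle leaves $x$ fixed and changes only $y_u$ and $y_v$. There $\mathrm{nvar}$ does not move at all. Since $\mathrm{ncov}(x,y)=\sum_i(x_i-\bar x)y_i$ is linear in $y$, and $|x_i-\bar x|=(1-\frac1n)|x_i-m_{x,-i}|\le(1-\frac1n)(b-a)$ (with $m_{x,-i}$ the mean of the other $n-1$ entries), the change is at most $2(1-\frac1n)(b-a)(b'-a')$.
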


\begin{proof}
For $x\in[a,b]^n$ and $y\in[a',b']^n$, define the affine rescalings
\[
\tilde x=\frac{x-a\mathbf 1}{b-a}\in[0,1]^n,
\qquad
\tilde y=\frac{y-a'\mathbf 1}{b'-a'}\in[0,1]^n,
\]
and define $\widetilde{\hat x},\widetilde{\hat y}$ analogously from $\hat x,\hat y$.  
The neighboring relation is preserved under these coordinatewise affine maps.

Let $\bar x$ (resp.\ $\overline{\tilde x}$) denote the mean of $x$ (resp.\ $\tilde x$), and similarly for $y$. Since
\[
x-\bar x\,\mathbf 1=(b-a)\big(\tilde x-\overline{\tilde x}\,\mathbf 1\big),
\qquad
y-\bar y\,\mathbf 1=(b'-a')\big(\tilde y-\overline{\tilde y}\,\mathbf 1\big),
\]
we obtain the scaling identities
\[
\mathrm{nvar}(x)
=(b-a)^2\,\mathrm{nvar}(\tilde x),
\qquad
\mathrm{ncov}(x,y)
=(b-a)(b'-a')\,\mathrm{ncov}(\tilde x,\tilde y).
\]
Therefore,
\[
\big|\mathrm{nvar}(x)-\mathrm{nvar}(\hat x)\big|
=(b-a)^2\big|\mathrm{nvar}(\tilde x)-\mathrm{nvar}(\widetilde{\hat x})\big|,
\]
and
\[
\big|\mathrm{ncov}(x,y)-\mathrm{ncov}(\hat x,\hat y)\big|
=2(b-a)(b'-a')\big|\mathrm{ncov}(\tilde x,\tilde y)-\mathrm{ncov}(\widetilde{\hat x},\widetilde{\hat y})\big|.
\]
Since a single change in $x_i$ induces changes in two coordinates of $y$ and $\hat{y}$, the resulting neighboring datasets differ in two indices.

By \cite[Lemma~19]{alabi2022differentially}, applied to vectors in $[0,1]^n$, we have
\[
\big|\mathrm{nvar}(\tilde x)-\mathrm{nvar}(\widetilde{\hat x})\big|
\le 1-\frac{1}{n},
\qquad
\big|\mathrm{ncov}(\tilde x,\tilde y)-\mathrm{ncov}(\widetilde{\hat x},\widetilde{\hat y})\big|
\le 1-\frac{1}{n}.
\]
Multiplying by the corresponding scale factors yields the stated bounds.
\end{proof}

Based on the above lemma, we give the formal proof of \cref{thm:privacy-guarantee-continuous}.

\begin{proof}
Let $(\mathcal{V},\mathcal{E},\mathbf{L})$ and
$(\mathcal{V},\mathcal{E}',\mathbf{L}')$ be two edge-adjacent labeled networks.
We verify that Algorithm~\ref{alg:dp-mafr-truncated}
satisfies $(\varepsilon_\ell+\varepsilon_e,\delta_\ell)$
edge-adjacent differential privacy.

The first stage of Algorithm~\ref{alg:dp-mafr-truncated}
perturbs the node attributes using truncated Laplace noise.
By \citet{geng2018truncated}, the truncated Laplace mechanism
achieves $(\varepsilon_\ell,\delta_\ell)$-differential privacy
with respect to the node attributes.
Hence, the released perturbed labels $\widehat{\mathbf{L}}$
satisfy $(\varepsilon_\ell,\delta_\ell)$-DP.

Conditioning on the perturbed labels $\widehat{\mathbf{L}}$,
the second stage performs an edge-dependent estimation.
By \citet[Lemma~4]{alabi2022differentially},
together with the global sensitivity bound established in
Lemma~\ref{lem:gs_nvar_ncov_general_intervals},
this estimator satisfies $(\varepsilon_e,0)$
edge-adjacent differential privacy for any fixed input labels.

Therefore, the overall mechanism is the sequential composition
of an $(\varepsilon_\ell,\delta_\ell)$-DP mechanism (on node attributes)
and an $(\varepsilon_e,0)$ edge-adjacent DP mechanism (on edges).
Applying \Cref{thm:edgeadj-composition-epsdelta}
yields that the composed mechanism satisfies
$(\varepsilon_\ell+\varepsilon_e,\delta_\ell)$
edge-adjacent differential privacy with respect to
$(\mathcal{V},\mathcal{E},\mathbf{L})$.
\end{proof}

\subsection{Proof of Theorem~\ref{thm:regression-asymp-normal}: Asymptotic normality of the private debiased regression estimator}\label{app:thm:regression-asymp-normal}
\begin{proof}
Define
\[
Z_{i,n}
=
\begin{pmatrix}
\hat x_i\\
\hat y_i\\
\hat x_i^2\\
\hat x_i\hat y_i
\end{pmatrix},
\qquad
M_n=\sum_{i=1}^n Z_{i,n},
\qquad
\bar M_n=\frac1n M_n.
\]
Thus
\[
\bar M_n
=
(\bar M_{1,n},\bar M_{2,n},\bar M_{3,n},\bar M_{4,n})^\top
\]
collects the four empirical moments needed to express the regression slope and its debiased version. Indeed,
\[
\beta_n^*
=
\frac{\bar M_{4,n}-\bar M_{1,n}\bar M_{2,n}}{\bar M_{3,n}-\bar M_{1,n}^2},
\qquad
\tilde\beta_n
=
\frac{\bar M_{4,n}-\bar M_{1,n}\bar M_{2,n}}{\bar M_{3,n}-\bar M_{1,n}^2-\sigma_z^2}
=
h(\bar M_n).
\]

Let \(I_1,\dots,I_n\) denote the privatized nodewise inputs. Define the filtration
\[
\mathcal F_{n,k}:=\sigma(I_1,\dots,I_k),
\qquad
k=0,1,\dots,n,
\]
with \(\mathcal F_{n,0}\) trivial. Define the \(\mathbb{R}^4\)-valued Doob martingale
\[
U_{n,k}:=\E[M_n\mid \mathcal F_{n,k}],
\qquad
k=0,1,\dots,n,
\]
and martingale differences
\[
D_{n,k}:=U_{n,k}-U_{n,k-1},
\qquad
k=1,\dots,n.
\]
Then
\[
U_{n,0}=\E[M_n],\qquad U_{n,n}=M_n,
\]
and hence
\[
M_n-\E[M_n]=\sum_{k=1}^n D_{n,k},
\qquad
\E[D_{n,k}\mid \mathcal F_{n,k-1}]=0.
\]

By construction, \(\hat x_i\) depends only on the privatized input at node \(i\), while \(\hat y_i\) depends only on the privatized inputs in the 1-hop neighborhood of \(i\). Therefore each \(Z_{i,n}\) is a measurable function of the privatized inputs in a bounded-radius neighborhood of node \(i\). Since the maximum degree is uniformly bounded, changing a single input \(I_k\) can affect only a uniformly bounded number of summands \(Z_{i,n}\). Moreover, \(\hat x_i\) and \(\hat y_i\) are uniformly bounded due to Truncated Laplace noise, hence each coordinate of \(Z_{i,n}\) is uniformly bounded. Therefore there exists a constant \(C<\infty\) such that
\[
\|D_{n,k}\|_\infty\le C
\qquad\text{a.s. for all }n,k.
\]

Then define
\[
\Psi_{n,k}:=\E[D_{n,k}D_{n,k}^\top\mid \mathcal F_{n,k-1}],
\qquad
V_n:=\frac1n\sum_{k=1}^n \Psi_{n,k}.
\]
We claim that
\[
V_n-\E[V_n]\xrightarrow{p}0.
\]
Fix any matrix entry \((a,b)\). Since \(\|D_{n,k}\|_\infty\le C\), each entry of \(\Psi_{n,k}\) is uniformly bounded. By the same locality argument as above, changing a single privatized input \(I_r\) affects only a uniformly bounded number of terms in the sum defining \((V_n)_{ab}\). Consequently, there exists \(C'>0\) such that changing one coordinate \(I_r\) changes \((V_n)_{ab}\) by at most \(C'/n\). Applying McDiarmid's inequality yields, for every \(t>0\),
\[
\Pr\Bigl(\bigl|(V_n)_{ab}-\E[(V_n)_{ab}]\bigr|>t\Bigr)
\le 2\exp(-c n t^2)
\]
for some constant \(c>0\). Hence
\[
(V_n)_{ab}-\E[(V_n)_{ab}]\xrightarrow{p}0.
\]
Since the dimension is fixed, this implies
\[
V_n-\E[V_n]\xrightarrow{p}0.
\]

Next, by orthogonality of martingale differences,
\[
\E[V_n]
=
\frac1n\sum_{k=1}^n \E[D_{n,k}D_{n,k}^\top]
=
\frac1n\Var(M_n).
\]
By the variance-limit result established from bounded local dependence,
\[
\frac1n\Cov(M_n)\to \Sigma.
\]
Therefore
\[
V_n\xrightarrow{p}\Sigma.
\]

Consider the normalized martingale difference array
\[
X_{n,k}:=\frac{1}{\sqrt n}D_{n,k}.
\]
Since $D_{n,k}$ is uniformly bounded,
\[
\|X_{n,k}\|_\infty\le \frac{C}{\sqrt n}\to 0
\]
uniformly in \(k\). Hence the Lindeberg condition holds automatically. The conditional covariance process satisfies
\[
\sum_{k=1}^n \E[X_{n,k}X_{n,k}^\top\mid \mathcal F_{n,k-1}]
=
\frac1n\sum_{k=1}^n \Psi_{n,k}
=
V_n
\xrightarrow{p}\Sigma.
\]
Therefore, by the multivariate martingale central limit theorem,
\[
\frac{1}{\sqrt n}\bigl(M_n-\E[M_n]\bigr)
\Rightarrow
N(0,\Sigma).
\]
Equivalently, if \(\mu_n:=\E[\bar M_n]\), then
\[
\sqrt n(\bar M_n-\mu_n)\Rightarrow N(0,\Sigma).
\]

By the asymptotic normality of $U_n$ and the Delta method, we show that $\beta^*$ is asymptotically normal. Define
\[
g(m_1,m_2,m_3,m_4)
:=
\frac{m_4-m_1m_2}{m_3-m_1^2}.
\]
Then \(\beta_n^*=g(\bar M_n)\). Since \(\mu_n\to\mu\) and \(\mu_3-\mu_1^2>0\), the map \(g\) is continuously differentiable in a neighborhood of \(\mu\). By the multivariate delta method,
\[
\sqrt n\bigl(\beta_n^*-g(\mu_n)\bigr)
\Rightarrow
N\!\left(0,\nabla g(\mu)^\top\Sigma\nabla g(\mu)\right).
\]

Then we build the asymptotic normality of unbiased estimator $\tilde{\beta}_n$. Define
\[
h(m_1,m_2,m_3,m_4)
:=
\frac{m_4-m_1m_2}{m_3-m_1^2-\sigma_z^2}.
\]
Then \(\tilde\beta_n=h(\bar M_n)\). Assume \(\mu_3-\mu_1^2-\sigma_z^2>0\), the map \(h\) is continuously differentiable in a neighborhood of \(\mu\). Hence the multivariate delta method gives
\[
\sqrt n\bigl(\tilde\beta_n-h(\mu_n)\bigr)
\Rightarrow
N\!\left(0,\nabla h(\mu)^\top\Sigma\nabla h(\mu)\right).
\]
By the definition of the debiasing correction at the population level,
\[
h(\mu)=\beta.
\]
Since \(\mu_n\to\mu\) and \(h\) is continuous,
\[
h(\mu_n)\to h(\mu)=\beta.
\]
Therefore
\[
\sqrt n(\tilde\beta_n-\beta)\Rightarrow N(0,\tau^2),
\qquad
\tau^2:=\nabla h(\mu)^\top\Sigma\nabla h(\mu).
\]

Then, we prove that the \NoisyStats perturbation is negligible.
Let
\[
\operatorname{ncov}_n:=\operatorname{ncov}(\hat x,\hat y),
\qquad
\operatorname{nvar}_n:=\operatorname{nvar}(\hat x).
\]
By definition,
\[
\beta_n^*=\frac{\operatorname{ncov}_n}{\operatorname{nvar}_n}.
\]
The final released estimator \(\hat\beta_n\) is obtained by adding Laplace noises \(L_1\) and \(L_2\) to the covariance and variance terms. By bounded sensitivity of Laplace noise,
\[
L_1=O_p(1),
\qquad
L_2=O_p(1).
\]
On the other hand, by the law of large numbers and the nondegeneracy condition,
\[
\operatorname{nvar}_n=\Theta_p(n),
\qquad
\operatorname{ncov}_n=O_p(n).
\]
Hence
\[
\hat\beta_n-\beta_n^*
=
\frac{L_1\,\operatorname{nvar}_n-\operatorname{ncov}_nL_2}
{\operatorname{nvar}_n(\operatorname{nvar}_n+L_2)}
=
O_p(n^{-1}).
\]
Since \(\tilde\beta_n\) is obtained from \(\beta_n^*\) by a smooth correction map with derivative bounded in a neighborhood of the limit point, the same order is preserved after debiasing. Therefore
\[
\hat\beta_n-\tilde\beta_n=O_p(n^{-1}),
\]
and thus
\[
\sqrt n\,(\hat\beta_n-\tilde\beta_n)\xrightarrow{p}0.
\]

Finally,
\[
\sqrt n(\hat\beta_n-\beta)
=
\sqrt n(\tilde\beta_n-\beta)
+
\sqrt n(\hat\beta_n-\tilde\beta_n).
\]
Therefore, by Slutsky's theorem,
\[
\sqrt n(\hat\beta_n-\beta)\Rightarrow N(0,\tau^2).
\]
This completes the proof.
\end{proof}

\section{Additional Material}

\subsection{Products frequently purchased together on Amazon}\label{sec:appx-amazon}

    This section demonstrates an empirical application of our method with continuous labels on data collected by \cite{leskovec2007dynamics} and \cite{yang2012defining} on the characteristics of Amazon products and whether or not they are frequently purchased together. The patterns of complementarity between different goods is a classic question in economics with implications for business pricing, market structure, and antitrust regulation (see, for example, \cite{gentzkow2007valuing}). Specifically, we consider a labeled graph where the label of each node corresponds to that node's percentile rank in the sales distribution (scaled to be in $[0,1]$) and an edge exists between two products if Amazon's website says they are frequently purchased together. In Figure~\ref{fig:amazon-reg} we show that high sales products are frequently co-purchased with other high-sales products. We run 1,000 simulations of our method for continuous labels and plot the distribution of regression slopes in Figure~\ref{fig:amazon-reg-private}.

\begin{figure}[htbp]
    \centering
    \begin{subfigure}[t]{0.48\linewidth}
        \centering
        \includegraphics[width=\linewidth]{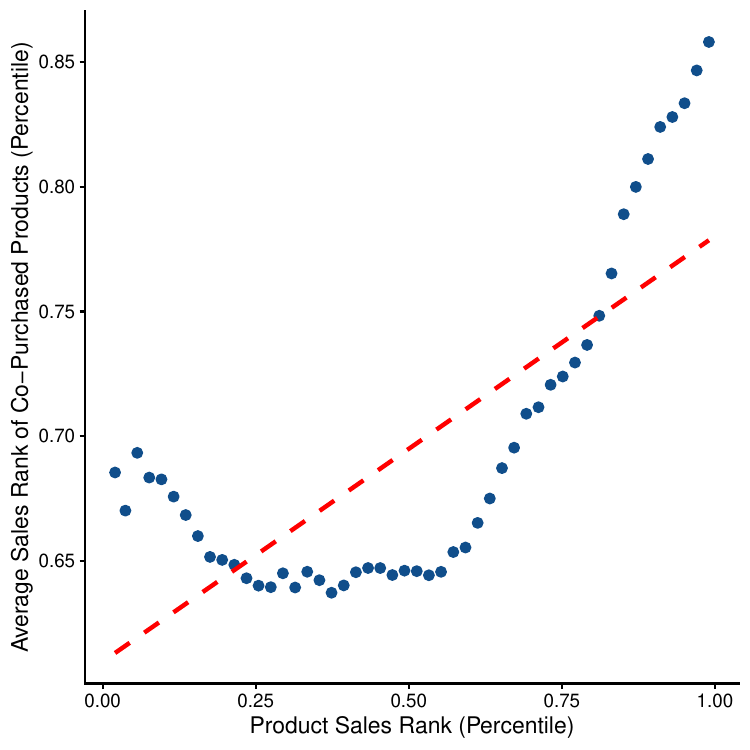}
        \caption{Sales rank of frequently co-purchased products vs own sales rank.}
        \label{fig:amazon-reg}
    \end{subfigure}\hfill
    \begin{subfigure}[t]{0.48\linewidth}
        \centering
        \includegraphics[width=\linewidth]{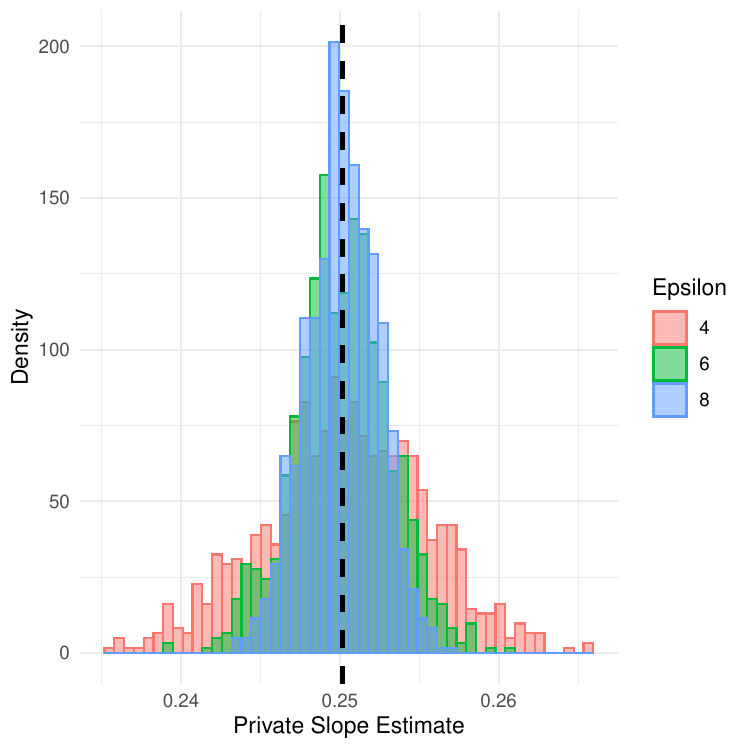}
        \caption{Distribution of private regression slopes for various values of $\varepsilon$.}
        \label{fig:amazon-reg-private}
    \end{subfigure}

    \caption{\textbf{Amazon products frequently purchased together.} Panel (a) plots the average sales rank (percentile) of co-purchased products against a product’s own sales rank (percentile). Panel (b) shows the distribution of private slope estimates across 1,000 simulations; the dotted red line marks the true slope. We set $\varepsilon_e=\varepsilon_l=4$.}
    \label{fig:amazon-combined}
\end{figure}


\clearpage

\begin{figure}
        \centering
        \includegraphics[width=0.8\linewidth]{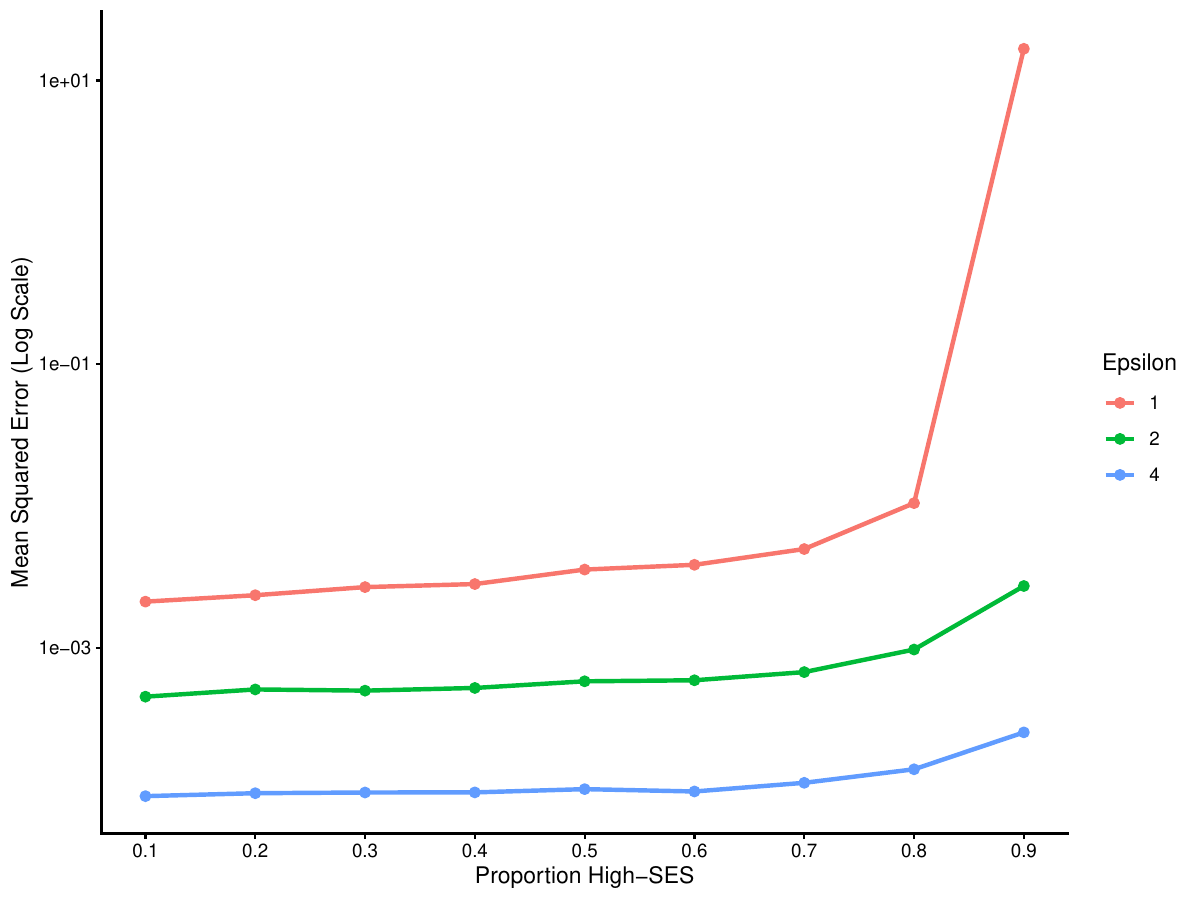}
        \caption{\textbf{Mean squared error vs cell composition for simulated networks by cell composition.} The figure illustrates the relationship between the proportion of High-SES individuals in a network and the resulting MSE of the private estimator across three privacy budgets ($\varepsilon \in \{1, 2, 4\}$), with the privacy budget split equally between $\varepsilon_e$ and $\varepsilon_l$. Results are averaged over 1,125 simulations per data point ($75$ graphs $\times 15$ noise seeds). The network is generated as an Erd\H{o}s-R\'{e}nyi graph with a connection probability of 0.04 on 2,000 nodes. On the $x$-axis, we vary the proportion of nodes in the ``high-SES'' set (corresponding to the application in \cite{chettyetal2022I}). The $y$-axis uses a log scale. }
        \label{fig:accuracy-cellcomp}
    \end{figure}

\begin{figure}
        \centering
        \includegraphics[width=0.8\linewidth]{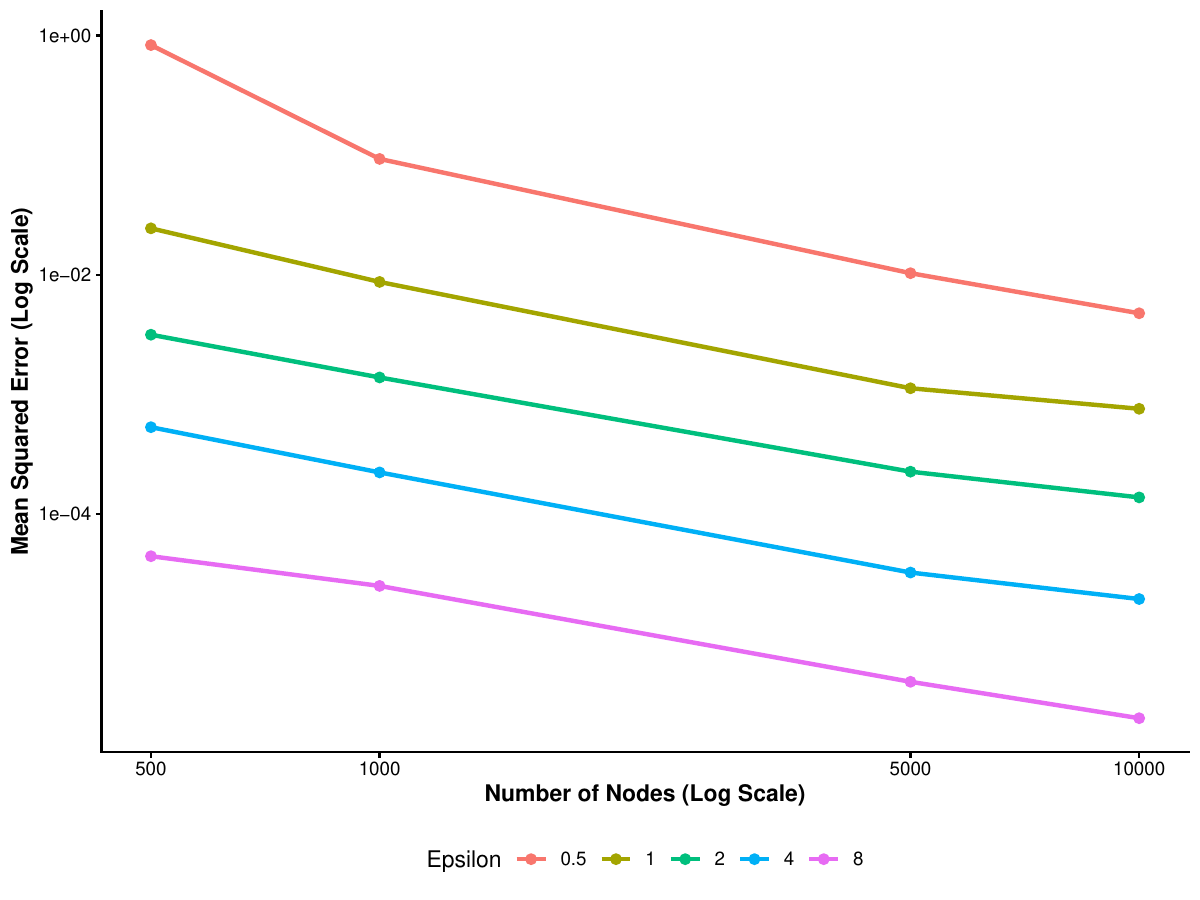}
        \caption{\textbf{Mean squared error vs network size for simulated networks.} This figure displays the Mean Squared Error (MSE) of our differentially private estimator as a function of network size, under varying privacy budgets ($\varepsilon$). The results are based on 10,000 Monte Carlo simulations on simulated Erd\H{o}s-R\'{e}nyi graphs with nodes randomly partitioned into two equal-sized groups, and varying the connection probability as the graph grows to maintain an average degree of 20. We split our privacy budget equally between $\varepsilon_e$ and $\varepsilon_l$.}
        \label{fig:accuracy-networksize}
    \end{figure}

\begin{table}[]
    \centering
    \caption{Summary statistics for villages from \cite{banerjee_diffusion_2013}.}
    \footnotesize
    \setlength{\tabcolsep}{4pt}
    \begin{tabular}{ccccccc}
    \hline
    Village \# & \# HH & \# Hist.\ Dis. & \# Hist.\ Non-Dis. & Avg.\ Deg. & Cross-Caste Conn.\ & SD ($\varepsilon$=8)  \\
    \hline
    28 & 315 & 93 & 222 & 8.7 & 0.26 & 0.04  \\
    29 & 272 & 155 & 117 & 7.3 & 0.21 & 0.02  \\
    30 & 138 & 64 & 74 & 8.7 & 0.19 & 0.04  \\
    31 & 151 & 141 & 10 & 7.9 & 0.03 & 0.03  \\
    32 & 241 & 116 & 125 & 9.7 & 0.22 & 0.03  \\
    33 & 204 & 26 & 178 & 7.4 & 0.35 & 0.10  \\
    34 & 165 & 103 & 62 & 6.1 & 0.22 & 0.03  \\
    35 & 206 & 107 & 99 & 6.7 & 0.22 & 0.03  \\
    36 & 289 & 92 & 197 & 10.4 & 0.41 & 0.03  \\
    38 & 157 & 31 & 126 & 6.7 & 0.32 & 0.07  \\
    39 & 287 & 64 & 223 & 8.4 & 0.52 & 0.04  \\
    40 & 240 & 62 & 178 & 8.1 & 0.21 & 0.05  \\
    42 & 192 & 126 & 66 & 7.4 & 0.04 & 0.03  \\
    44 & 227 & 32 & 195 & 9.5 & 0.27 & 0.09  \\
    45 & 219 & 59 & 160 & 7.7 & 0.37 & 0.04  \\
    46 & 261 & 169 & 92 & 8.3 & 0.13 & 0.02  \\
    47 & 137 & 26 & 111 & 9.3 & 0.36 & 0.08  \\
    48 & 182 & 86 & 96 & 9.8 & 0.26 & 0.03  \\
    49 & 193 & 44 & 149 & 9.3 & 0.32 & 0.06  \\
    50 & 244 & 60 & 184 & 10.3 & 0.26 & 0.05  \\
    51 & 248 & 103 & 145 & 12.9 & 0.36 & 0.03  \\
    52 & 327 & 79 & 248 & 12.3 & 0.36 & 0.04  \\
    53 & 151 & 48 & 103 & 11.5 & 0.41 & 0.04  \\
    54 & 99 & 57 & 42 & 11.1 & 0.15 & 0.04  \\
    55 & 257 & 52 & 205 & 6.9 & 0.11 & 0.07  \\
    57 & 208 & 201 & 7 & 9.1 & 0.02 & 0.02  \\
    58 & 177 & 51 & 126 & 9.2 & 0.41 & 0.05  \\
    59 & 328 & 110 & 218 & 8.6 & 0.29 & 0.03  \\
    60 & 354 & 101 & 253 & 8.0 & 0.21 & 0.04  \\
    61 & 121 & 31 & 90 & 8.3 & 0.42 & 0.06  \\
    62 & 189 & 67 & 122 & 8.7 & 0.25 & 0.04  \\
    63 & 161 & 29 & 132 & 6.6 & 0.15 & 0.09  \\
    64 & 257 & 78 & 179 & 7.5 & 0.28 & 0.04  \\
    65 & 285 & 86 & 199 & 10.7 & 0.19 & 0.04  \\
    66 & 183 & 40 & 143 & 8.5 & 0.37 & 0.06  \\
    67 & 193 & 92 & 101 & 10.5 & 0.13 & 0.04  \\
    68 & 153 & 35 & 118 & 9.7 & 0.28 & 0.06  \\
    69 & 180 & 94 & 86 & 13.4 & 0.19 & 0.03  \\
    70 & 205 & 45 & 160 & 12.6 & 0.57 & 0.04  \\
    71 & 297 & 63 & 234 & 10.2 & 0.49 & 0.04  \\
    72 & 223 & 123 & 100 & 11.0 & 0.19 & 0.03  \\
    73 & 164 & 78 & 86 & 10.5 & 0.28 & 0.03  \\
    74 & 170 & 32 & 138 & 7.4 & 0.17 & 0.08  \\
    75 & 166 & 55 & 111 & 11.3 & 0.36 & 0.04  \\
    76 & 251 & 62 & 189 & 7.6 & 0.30 & 0.05  \\
    77 & 153 & 83 & 70 & 7.6 & 0.14 & 0.03  \\
    \hline
    \end{tabular}
    \label{tab:summary-caste}
    \caption*{\small \\\textit{Notes for Table~\ref{tab:summary-caste}: }This table summarizes the network and demographic characteristics for the subset of villages from \cite{banerjee_diffusion_2013} with caste coverage and at least seven households in each of our two caste sets.}
\end{table}

\clearpage
\subsection{List of Notation}
\begin{table}[ht]
    \centering
    \caption{List of Notation}
        \label{tab:notation-full}
        \begin{tabular}{@{}ll@{}}
        \toprule
        \textbf{Symbol} & \textbf{Description} \\ \midrule
        $\mathcal{V}$ & Set of vertices (nodes) in the network \\
        $\mathcal{E}$ & Set of edges (friendships/connections) between nodes \\
       $\mathbf{L}=(l_i)_{i\in\mathcal V}$ & Node label vector with $l_i\in\{a,b\}$ for each node $i$ \\

        $e_{ij}$ & Binary indicator; $e_{ij}=1$ if an edge exists between $i$ and $j$, else $0$ \\
        $d_i$ & Degree of node $i$, $\#(j \in \mathcal{V}:e_{ij} = 1)$. \\ 
        $a_{ij}$ & $e_{ij} / d_i$. \\ 
        $N(i)$ & Neighborhood of node $i$, $\{j \in \mathcal{V}:e_{ij} = 1\}$. \\ 
        $\mathcal{A}, \mathcal{B}$ & Partitions of $\mathcal{V}$ based on labels ($a, b$) \\
        $\pi_A$ & Fraction of nodes in set $\mathcal{A}$. \\
        $\#( \cdot )$ & Cardinality operator (number of elements in a set) \\
        $\rho_i$ & Individual $i$'s fraction of friends belonging to the target group \\
        $C^{\mathcal{A}\to\mathcal{B}}$ & Cross-type connectedness index (average $\rho_i$ for individuals in group $\mathcal{A}$) \\
        $C^{\mathcal{A}\to\mathcal{A}}$ & Same-type connectedness index \\
        $s$ & A specific "cell" or subset of users (e.g., a county or school) \\
        $C_s^{\mathcal{A}\to\mathcal{B}}$ & Connectedness index calculated specifically for cell $s$ \\
        $\varepsilon$ & Privacy-loss budget (differential privacy parameter) \\
        $\varepsilon_\ell$ & Privacy-loss budget for labels \\
        $\varepsilon_e$ & Privacy-loss budget for edges \\ 
        $\mathcal{M}_1, \mathcal{M}_2$ & Privacy-preserving mechanisms for protecting labels $\cM_1$ and edges $\cM_2$ \\
        $D, D'$ & Edge-adjacent labeled networks \\ \bottomrule
        \end{tabular}
    \end{table}

\end{document}